\newtheorem{theorem}{Theorem}[section]
\newtheorem{proposition}[theorem]{Proposition}
\newtheorem{lemma}[theorem]{Lemma}
\theoremstyle{definition}
\newtheorem{example}[theorem]{Example}
\DeclareMathOperator*{\argmin}{arg\,min}
\DeclareMathOperator{\poly}{poly}
\newcommand{\cT}{\mathcal{T}}
\newcommand{\cC}{\mathcal{C}}
\newcommand{\ot}{\leftarrow}
\renewcommand{\d}{\mathrm{d}}
\renewcommand{\phi}{\varphi}
\newcommand{\EF}{\mathtt{EF}}
\newcommand{\SSS}{\mathtt{SSS}}
\newcommand{\WSS}{\mathtt{WSS}}
\newcounter{Bew1}
\newcounter{Bew2}
\title{Fair Ride Allocation on a Line}
\author{
Yuki Amano$^1$
\and
Ayumi Igarashi$^2$
\and
Yasushi Kawase$^{3}$
\and
Kazuhisa Makino$^1$
\and
Hirotaka Ono$^4$
\affiliations
$^1$Kyoto University\\
$^2$National Institute of Informatics\\
$^3$University of Tokyo\\
$^4$Nagoya University
\emails
$\{$ukiamano, makino$\}$@kurims.kyoto-u.ac.jp,
ayumi\_igarashi@nii.ac.jp,
kawase@mist.i.u-tokyo.ac.jp,
ono@nagoya-u.jp
}
\begin{document}

\maketitle
\begin{abstract}
The airport game is a classical and well-known model of fair cost-sharing for a single facility among multiple agents. 
This paper extends it to the so-called assignment setting, that is, for multiple facilities and agents, each agent chooses a facility to use and shares the cost with the other agents. Such a situation can be often seen in sharing economy, such as sharing fees for office desks among workers, taxis among customers of possibly different destinations on a line, and so on. 
Our model is regarded as a coalition formation game based on the fair cost-sharing of the airport game; 
we call our model \emph{a fair ride allocation on a line}.   
As criteria of solution concepts, we incorporate Nash stability and envy-freeness into our setting. 
We show that a Nash-stable feasible allocation that minimizes the social cost of agents can be computed efficiently if a feasible allocation exists. 
For envy-freeness, we provide several structural properties of envy-free allocations. 
Based on these, we design efficient algorithms for finding an envy-free allocation when at least one of  (1) the number of facilities, (2) the capacity of facilities, and (3) the number of agent types, is small. Moreover, we show that a consecutive envy-free allocation can be computed in polynomial time. 
On the negative front, we show the NP-hardness of determining the existence of an allocation under two relaxed envy-free concepts.
\end{abstract}

\section{Introduction}\label{sec:intro}
Imagine a group of university students, each of whom would like to take a taxi to her/his own destination. For example, Alice may want to directly go back home while Bob prefers to go to the downtown to meet with friends. Each of students may ride a taxi alone, or they may share a ride and split into multiple groups to benefit from sharing the cost. It is then natural to ask two problems: how to form coalitions and how to fairly divide the fee.

Many relevant aspects of the second problem have been studied in a classical model of the {\em airport problem}, introduced by \citet{LO73}. In the airport problem, agents are linearly ordered by their demands for a facility, and the cost of using the facility is determined by the agent who requires the largest demand. In the context of sharing a taxi, the total cost charged to a shared taxi is determined by the last agent who drops off from the taxi. While the problem originally refers to an application of the runway cost division, it covers a variety of real-life examples, 
e.g., the cost-sharing of a shared meeting room over time and an irrigation ditch; see \citet{Thomson}. In all these examples, the common property is their linear structure of the agents' demands. 

The airport problem is known to be the very first successful application of the celebrated Shapley value, 
which has a simple and explicit expression despite the exponential nature of its definition. Indeed, \citet{LO73} showed that the \emph{sequential equal contributions rule}, which applies equal division to each segment separately,  coincides with the Shapley value, and thus is the unique efficient solution that satisfies the basic desideratum of `equal treatment of equals' together with several other desirable properties, e.g., if two agents in the same group have exactly the same contribution, they will pay the same amount of money.\footnote{This rule is in fact used to split the fare in a popular fair division website of Spliddit~\citep{Goldman2015}.} 

The basic model of the airport problem, however, does not take into account the first problem, that is, how agents should form groups. In practice, facilities to be shared have capacities; so agents need to decide not only how to divide the cost, but also how to split themselves into groups so that the resulting outcome is fair across different groups. Indeed, in the preceding example of the ride-sharing, the way agents form groups affects the amount of money each agent has to pay. 
For example, consider a simple scenario of 2 taxis with capacity 3 and 4 passengers with the same destination. One might consider that the allocation in which both taxis have 2 passengers is the unique ``fair'' solution, which is indeed true with respect to \emph{envy-freeness}, though it is not with respect to \emph{Nash stability} as seen later. 
In a more complex scenario, how can we allocate passengers to taxis fairly? Which criterion of justice can we guarantee?

Envy-freeness is one of the most natural notions of fairness \citep{Foley}: 
if we select an outcome that is envy-free, no agent can replace someone else to reduce her/his cost. The notion of envies enables interpersonal comparison of utilities when agents have different needs. Another relevant criterion of justice is the notion of stability (e.g., Nash stability and swap-stability), capturing resistance to agents' deviations. No user will justifiably complain if there is no beneficial way of allocating her to another facility or swapping a pair of agents~\citep{Foley,Bogomolnaia2002,Aziz2016,bouveret-chap}. 
Social optimality and Pareto optimality are also fundamental notions related  to efficiency. \emph{Social optimality} means that there is no alternative allocation that decreases the total cost paid by the agents, whereas  
\emph{Pareto optimaility} means that there is no alternative allocation that makes some agent better off without making any agent worse off. By definition, social optimality implies Pareto optimality. 

\smallskip
\noindent
{\bf Our contribution}
In this paper, we extend the classical model of airport problems to the so-called assignment setting, that is, for multiple taxis and agents, each agent chooses a taxi to ride and shares the fee with the other agents riding the taxi together.
In our setting, agents would like to travel from a common starting point to their own destinations, represented by points on a line, by multiple taxis, and have to share the cost of the travel. 
The total cost charged to passengers for each taxi is determined by the distance between the starting point and the furthest dropping point, and is shared by the agents taking it based on the Shapley value. 
Since our model is a natural generalization of the airport game, it has potential applications such as shared office rooms; see \citet{Thomson}. 
If we restrict our attention to  fair ride allocation, the setting ``on a line'' appears a bit restrictive, and it is desirable to generalize it to more general metric cases.
However, we would like to mention that our setting is  the most fundamental case study of fair ride allocation to be investigated, and can be applied in various situations such as 
traveling to the destinations along a highway and boat travelings on a river.

We formulate the notions of stability and fairness including envy-freeness and Nash stability, inspired from hedonic coalition formation games and resource allocation problems, and study the existence and complexity of allocations satisfying such properties. 

We first present basic relationships among the solution concepts.  
Concerning stability and efficiency, we show that there always exists a feasible allocation that simultaneously satisfies Nash stability, swap-stability, and social optimality, if 
a given instance contains a feasible allocation. Moreover, such an allocation can be computed in linear time by a simple backward greedy strategy. 
This contrasts to the standard results of hedonic games in two respects. First, a stable outcome does not necessarily exist in the general setting \citep{Aziz2016}.
Second, 
efficiency and stability are in general incompatible except for some restricted classes of games \citep{Bogomolnaia2002,Barrot2019}.


For envy-freeness, there is a simple example with no envy-free feasible allocation: when 3 agents with the same destination split into 2 taxis with capacity $1$ and $2$ each, the agent who becomes alone will envy others. 
We provide three structural properties of envy-free allocations: \emph{monotonicity}, \emph{split property} and \emph{locality}. 
Based on these, we design efficient algorithms for finding an envy-free feasible allocation when at least one of (a) the number of taxis, (b) the capacity of each taxi, or (c) the number of agent types, is small. 
More precisely, in case (a), we show that the locality provides a greedy algorithm for finding an envy-free feasible allocation under a certain condition, which implies that an envy-free feasible allocation can be computed in $O(n^{3k+2})$ time, where $k$ is the number of taxis. 
In case (b), we focus on the setting when the capacity of each taxi is bounded by four, where we utilize an enhanced version of split property. By combining it with the locality, we construct an $O(n^6)$-time greedy algorithm for envy-free feasible allocations.  
In case (c), that is, when the number $p$ of types is small, by utilizing the monotonicity and the split property, we first enumerate all possible `shapes' of envy-free allocations, and then compute 
an envy-free feasible allocation in $O(p^pn^4)$ time by exploring semi-lattice structure of size vectors consistent with a given shape; a similar phenomenon has been observed in many other contexts of resource allocation (see, e.g. \citet{SunYang2003}). 
Note that the algorithm is FPT with respect to $p$.

We also show that one can compute an envy-free allocation that is consecutive with respect to agents' destinations by only looking at the envy between consecutive agents in $O(kn^3)$ time. 
As a negative side, we show that it is NP-hard to determine the existence of an allocation under two relaxed envy-free concepts.

\subsection{Related Work}
The problem of fairly dividing the cost among multiple agents has been long studied in the context of cooperative games with transferable utilities; we refer the reader to the book of \citet{Chalkiadakis2011} for an overview.  Following the seminal work of \citet{Shapley1953}, a number of researchers have investigated the axiomatic property of the Shapley value as well as its applications to real-life problems. 
\citet{LO73} analyzed the property of the Shapley value when the cost of each subset of agents is given by the maximum cost associated with the agents in that subset.
The work of \citet{Chun2017} further studied the strategic process in which agents divide the cost of the resource, showing that the division by the Shapley value is indeed a unique subgame perfect Nash equilibrium under a natural three-stage protocol.

Our work is similar in spirit to the complexity study of congestion games~\citep{RN90,MS96}. In fact, without capacity constraints, it is not difficult to see that the fair ride-sharing problem can be formulated as a congestion game. The fairness notions, including envy-freeness in particular, have been well-explored in the fair division literature. Although much of the focus is on the resource allocation among individuals, several recent papers study the fair division problem among groups \citep{Kyropoulou2019,Segal-Halevi2019}. Our work is different from theirs in that agents' utilities depend not only on allocated resources, but also on the group structure. 

In the context of hedonic coalition formation games, e.g., \citet{Bogomolnaia2002,Aziz2016,Barrot2019,Bodlaender2020}, there exists a rich body of literature studying fairness and stability. In hedonic games, agents have preferences over coalitions to which they belong, and the goal is to find a partition of agents into disjoint coalitions. While the standard model of hedonic games is too general to accommodate positive results (see \citet{Dominik}), much of the literature considers subclasses of hedonic games where desirable outcomes can be achieved. 
For example, \citet{Barrot2019} studied the compatibility between fairness and stability requirements, showing that top responsive games always admit an envy-free, individually stable, and Pareto optimal partition. 

Finally, our work is related to the growing literature on ride-sharing problem \citep{Santi2014,Ashlagi2019,Pavone2012,Zhang2016,Banerjee2018,Alonso-Mora462,Chun2017,Goldman2015}. \citet{Santi2014} empirically showed a large portion of taxi trips in New York City can be shared while keeping passengers' prolonged travel time low. Motivated by an application to the ride-sharing platform, \citet{Ashlagi2019} considered the problem of matching passengers for sharing rides in an online fashion. They did not, however, study the fairness perspective of the resulting matching.

\section{Model}
For a positive integer $s \in \mathbb{Z}_{>0}$, we write $[s]=\{1,2,\ldots,s\}$. For a set $T$ and an element $a$, we may write $T+a=T \cup \{a\}$ and $T-a=T \setminus \{a\}$. 
In our setting, there are a finite set of {\em agents}, denoted by $A=[n]$, and a finite set of $k$ {\em taxis}. The nonempty subsets of agents are referred to as {\em coalitions}. 
Each agent $a \in A$ is endowed with a destination $x_a \in \mathbb{R}_{>0}$, which is called the \emph{destination type} (or shortly {\em type}) of agent $a$. 
We assume that the agents ride a taxi at the same initial location of the point $0$ and they are sorted in nondecreasing order of their destinations, i.e., $x_1\le x_2\le \dots\le x_n$.
Each taxi $i \in [k]$ has a quota $q_i$ representing its capacity,  where $q_1\ge q_2\ge \dots\ge q_k~(> 0)$ is assumed. 
An \emph{allocation} $\cT=(T_1,\dots,T_\ell)$ is an ordered partition of $A$, 
and is called \emph{feasible} if $\ell\leq k$ and $|T_i|\le q_i$ for all $i\in[\ell]$.
Given a monotone nondecreasing function $f\colon \mathbb{R}_{>0}\to\mathbb{R}_{>0}$, 
the {\em cost} charged to agents $T_i$ is the value of $f$ in the furthest destination $\max_{a \in T_i}f(x_a)$ if $|T_i|\leq q_i$, and $\infty$ otherwise. 
The cost has to be divided among the agents in $T_i$.
Without loss of generality, we assume that the cost charged to $T_i$ is simply the distance of the furthest destination if $|T_i| \leq q_i$, i.e., $f$ is the identity function.
In other words, we may regard that $x_a$ is the cost itself instead of the distance.
Among several payment rules of cooperative games, we consider a scenario where agents divide the cost using the well-known \emph{Shapley value}~\citep{Shapley1953}, which, in our setting, coincides with the following specific function. 

For each subset $T$ of agents and $s \in \mathbb{R}_{>0}$, we denote by $n_T(s)$ the number of agents $a$ in $T$ whose destinations $x_a$ is at least  $s$, i.e., $n_T(s)\coloneqq \bigl|\{a\in T\mid x_a\ge s\}\bigr|$.
For each coalition $T \subseteq A$ and positive real $x\in\mathbb{R}_{>0}$, we define
\begin{align*}
\phi(T,x)=\int_0^x \frac{\d{r}}{n_{T}(r)},
\end{align*}
where we define $\phi(T,x)=\infty$ if $n_{T}(x)=0$. 
For an allocation $\cT$ and a coalition $T_i \in \cT$, the {\em cost} of agent $a\in T_i$ is defined as $\Phi_{\cT}(a)\coloneqq \phi_i(T_i,x_a)$ where
\begin{align*}
\phi_i(T_i,x)=\begin{cases}
\phi(T_i,x) &\text{if }|T_i|\le q_i,\\
\infty    &\text{if }|T_i|>q_i.
\end{cases}
\end{align*}

It is not difficult to verify that 
the sum of the payments in $T_i$ is equal to the  cost of taxi $i$. Namely, if  $|T_i| \leq q_i$, we have  $\sum_{b\in T_i}\phi_i(T_i,x_{b})=\max_{a\in T_i}x_a$.
On the other hand, if  $|T_i| > q_i$, all agents in $T_i$ pay $\infty$ whose sum is equal to $\infty$ (i.e., the cost of taxi $i$). 
The following proposition formally states that the payment rule for each taxi coincides with the Shapley value. 
We note that while \citet{LO73} presented a similar formulation of the Shapley value for airport games, our model is slightly different from theirs with the presence of capacity constraints. 

\begin{proposition}\label{prop:Shapley}
The payment rule $\phi_i$ is the Shapley value. 
\end{proposition}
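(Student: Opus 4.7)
The plan is to verify the claim coalition by coalition, reducing to the classical airport-game calculation of \citet{LO73}. I would proceed in three steps.

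First, dispatch the infeasibility case. If $|T_i| > q_i$, then $\phi_i(T_i, x_a) = \infty$ by definition for every $a \in T_i$. On the Shapley side, in any ordering of $T_i$ the agent occupying position $q_i + 1$ faces an infinite marginal cost (the coalition cost jumps from finite to $\infty$ once capacity is breached), and every agent occupies that position in a positive fraction of permutations, so each agent's Shapley value equals $\infty$, matching $\phi_i$.

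Second, handle the main case $|T_i| \le q_i$. Here every $S \subseteq T_i$ satisfies $|S| \le q_i$, so the capacity constraint is inactive on $2^{T_i}$ and the restricted cost function is simply the airport cost $c(S) = \max_{b \in S} x_b$. Index the members of $T_i$ as $a_1, \dots, a_m$ in nondecreasing order of $x_{a_j}$, and set $x_{a_0} := 0$. For a uniformly random permutation $\sigma$, the marginal contribution of $a_j$ equals $\max\{0, x_{a_j} - \max_{b <_\sigma a_j} x_b\}$. Conditioning on which threshold $x_{a_{\ell-1}}$ this maximum sits above and telescoping (the sequential-equal-contributions argument) yields
\[
\mathrm{Sh}_{a_j}(c) \;=\; \sum_{\ell=1}^{j} \frac{x_{a_\ell} - x_{a_{\ell-1}}}{m - \ell + 1}.
\]

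Third, match this with the integral definition. The step function $n_{T_i}(r) = |\{b \in T_i : x_b \ge r\}|$ is constant on each half-open subinterval $(x_{a_{\ell-1}}, x_{a_\ell}]$ with value $m - \ell + 1$, so splitting $\int_0^{x_{a_j}} \d r / n_{T_i}(r)$ along these breakpoints reproduces the sum above term by term. Ties among destinations collapse some subintervals to zero length, but contribute zero on both sides, so the identification is consistent.

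The main obstacle is bookkeeping: handling ties in the destinations cleanly and rigorously extending the Shapley value to infinite-valued cost games. Neither is conceptually deep. The substance of the proof is the matching of a piecewise-constant integral with the sequential-equal-contributions formula, which becomes immediate once the breakpoint structure of $n_{T_i}$ is made explicit.
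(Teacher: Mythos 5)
Your proposal is correct and follows essentially the same route as the paper: the infeasibility case is dispatched by noting that some permutation places the agent at the capacity-breaking position (giving an infinite marginal contribution), and the feasible case is the classical airport-game telescoping of marginal contributions into the sequential-equal-contributions sum $\sum_{\ell\le j}(x_{a_\ell}-x_{a_{\ell-1}})/(m-\ell+1)$, which is then identified with the piecewise-constant integral $\int_0^{x_a}\d r/n_{T}(r)$. The paper makes the permutation count explicit ($t!/(t-j+1)$ orderings in which $a$ precedes all of $a_j,\dots,a_t$), but this is exactly the conditioning step you describe, so no substantive difference remains.
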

\begin{proof}
For a given positive integer $q$, 
let $c:2^A \rightarrow \mathbb{R}$ be a cost function defined by  $c(T)=0$ if $T=\emptyset$, $\max_{a \in T}x_a$ if  $1\leq |T|\leq q$, and $\infty$ if $|T|> q$. 
Here we regard $c$ as a monotone nondecreasing function, i.e., $c(T)\geq c(S)$ for $T \supseteq S$. 
Let $T=\{a_1,\dots,a_t\}$ such that $x_{a_1}\le x_{a_2}\le\dots\le x_{a_t}$, and let $a=a_i$.
We denote by $\Pi$ the set of permutations $\pi\colon T \rightarrow [t]$. 
For a permutation $\pi \in \Pi$, we denote  
 \[
 S_{\pi}(a)=\{\, b \in T \mid \pi(b) \leq \pi(a) \}. 
 \]
Recall the definition of the Shapley value, i.e., the amount agent $a$ has to pay in the game $(T,c)$ is given by
\begin{equation}
\label{eq-defshaply1e}
\frac{1}{t!}\sum_{\pi \in \Pi}\bigl(c(S_\pi(a))-c(S_\pi(a)-a)\bigr). 
\end{equation}
If $t > q$, then there exists a permutation $\pi$ such that 
 $|S_{\pi}(a)|=q+1$ and $|S_\pi(a)-a|=q$. 
This implies that (\ref{eq-defshaply1e}) is equal to $\infty$, which shows that our payment rule is the Shapley value. 
On the other hand, if $t\leq q$,  then by introducing $x_{a_0}=0$, we have
\begin{align*}
&\sum_{\pi \in \Pi}\bigl(c(S_\pi(a))-c(S_\pi(a)-a)\bigr)\\
&=\sum_{\pi \in \Pi} \sum_{j=1}^i(x_{a_j}-x_{a_{j-1}})\bm{1}_{S_{\pi}(a) \cap \{a_j,\dots,a_{t}\}=\{a\}}\\
&=\sum_{j=1}^i(x_{a_j}-x_{a_{j-1}}) \sum_{\pi \in \Pi} \bm{1}_{S_{\pi}(a) \cap \{a_j,\dots,a_{t}\}=\{a\}}\\
&=\sum_{j=1}^i (x_{a_j}-x_{a_{j-1}}) \frac{t!}{t-j+1},
\end{align*}
Here $\bm{1}_{S_{\pi}(a) \cap \{a_j,\dots,a_{t}\}=\{a\}}$
denotes the 0-1 function that takes one if and only if 
agent $a$ appears first at $\pi$ among agents in $\{a_j,\dots,a_t\}$.
Thus, we have 
\begin{align*}
&\frac{1}{t!}\sum_{\pi \in \Pi}\bigl(c(S_\pi(a))-c(S_\pi(a)-a)\bigr)\\
&=\sum_{j=1}^i\frac{x_{a_j}-x_{a_{j-1}}}{t-j+1} \\
&=\int_0^{x_a}\frac{\d{r}}{n_T(r)}=\phi(T,a). \qedhere
\end{align*}
\end{proof}


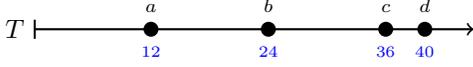
\begin{figure}[ht]
\centering
\begin{tikzpicture}[xscale=.13]
\draw[|->,thick] (0,1)--(45,1);
\node[left] at (0,1) {$T$};
\node[fill=black, shape=circle, inner sep=2pt, label={[font=\scriptsize]above:{$a$}}, label={[font=\tiny,blue]below:$12$}] at (12,1) {};
\node[fill=black, shape=circle, inner sep=2pt, label={[font=\scriptsize]above:{$b$}}, label={[font=\tiny,blue]below:$24$}] at (24,1) {};
\node[fill=black, shape=circle, inner sep=2pt, label={[font=\scriptsize]above:{$c$}}, label={[font=\tiny,blue]below:$36$}] at (36,1) {};
\node[fill=black, shape=circle, inner sep=2pt, label={[font=\scriptsize]above:{$d$}}, label={[font=\tiny,blue]below:$40$}] at (40,1) {};
\end{tikzpicture}
\caption{The allocation in Example~\ref{ex:simple1}}\label{fig:simple1}
\end{figure}

\begin{example}\label{ex:simple1}
Consider a taxi that forms a coalition $T$ in Fig.~\ref{fig:simple1}, i.e.,  
 agents $a$, $b$, $c$, and $d$ take one taxi together from a starting point to the points of $12$, $24$, $36$, and $40$ on a line, respectively. 
The total cost is $40$, which corresponds to the drop-off point of $d$. 
According to the payment rule,  agents $a$, $b$, $c$, and $d$ pay $3$, $7$, $13$, and $17$, respectively. 
In fact, from the starting point to the drop-off point of $a$, 
all the agents are in the taxi,  
so they equally divide the cost of $12$, which means that $a$ should pay $3$. 
Then, between the dropping points of $a$ and $b$, three agents are in the taxi, so they equally divide the cost of $24-12=12$, which results in the cost of $4$ for each of the three agents. Thus agent $b$ pays $3+4=7$. 
By repeating similar arguments, $c$ pays $7+(36-24)/2=13$, and $d$ pays $13+(40-36)=17$. 
\end{example}

\section{Solution concepts}\label{sec:solution}
Agents split into coalitions and use the Shapley value to divide the  cost of each coalition. 
Our goal is to find a partition of agents that satisfies  natural desiderata. 
We introduce several desirable criteria that are inspired from coalition formation games and resource allocation problems~\citep{Foley,Bogomolnaia2002,Aziz2016,bouveret-chap}.

\medskip
\noindent{\bf Fairness}: {\em Envy-freeness} requires that no agent prefers another agent. 
Formally, for an allocation $\cT$, agent $a \in T_i$ \emph{envies} $b\in T_j$ if $a$ can be made better off by replacing herself by $b$, i.e., $i\not= j$ and  $\phi_j(T_j-b+a,x_{a})<\phi_i(T_i,x_{a})$. 
An allocation $\cT$ is \emph{envy-free (EF)} if no agent envies another agent. 
Without capacity constraints, e.g., $q_1 \geq n$, 
envy-freeness can be trivially achieved by allocating all agents to a single coalition $T_1$.  Also, when the number of taxis is at least the number of agents, i.e., $k \geq n$, an allocation  that partitions the agents into the singletons is envy-free.

\medskip
\noindent{\bf Stability}: We adapt the following three definitions of stability concepts of hedonic games \citep{Bogomolnaia2002,Aziz2016,Bodlaender2020} to our setting. The first stability concepts we introduce are those that are immune to individual deviations. For an allocation $\cT$ and two distinct taxis $i,j\in[k]$, agent $a\in T_i$ has a \emph{Nash-deviation} to $T_j$ if $\phi_j(T_j+a,x_a)<\phi_i(T_i,x_a)$. By the definition of function $\phi_j$, no agent $a$  has a Nash-deviation to $T_j$ if adding $a$ to $T_j$ violates the capacity constraint, i.e., $|T_j|\geq q_j$. An allocation $\cT$ is called \emph{Nash stable (NS)} if no agent has a Nash deviation. 

We  also consider stability notions that capture resistance to swap deviations. For an allocation $\cT$, agent $a\in T_i$ \emph{can replace} $b\in T_j$ if $i=j$ or   $\phi_j(T_j-b+a,x_{a}) \leq \phi_i(T_i,x_{a})$ \citep{Barrot2019,Nguyen2016}.
An allocation $\cT$ is
\begin{itemize}
\item \emph{weakly swap-stable (WSS)} if there is no pair of agents $a$ and $b$ 
such that $a$ and $b$ envy each other;
\item \emph{strongly swap-stable (SSS)} if there is no pair of agents $a$ and $b$ 
such that $a$ envies $b$ and $b$ can replace $a$.
\end{itemize}

\medskip 
\noindent{\bf Efficiency}: Besides fairness and stability, another important property of allocation is {\em efficiency}. 
The {\em total cost} of an allocation $\cT$ is defined as $\sum_{T \in \cT}\sum_{a \in T}\phi(T,x_a)$. 
Note that the total cost of a feasible allocation $\cT$ is equal to $\sum_{T\in\cT:\,T\ne\emptyset}\max_{a\in T}x_a$.
A feasible allocation $\cT$ is \emph{social optimal (SO)} if it minimizes the total cost over all feasible allocations. 

\smallskip

In our game, we have the following containment relations among these classes of outcomes:
\begin{align}
\EF \subsetneq \SSS \subsetneq \WSS. \label{eq:relations}
\end{align}
Here, $\EF$ is defined to be the set of envy-free feasible allocations, and the other symbols are defined analogously. It is not difficult to see that the relationships with equality hold by the definitions of the concepts. To show proper inclusion, we give some examples. Moreover, 
we show below that any two concepts with no containment relationships in \eqref{eq:relations} are incomparable. 
Namely,  there are instances with  feasible allocations that are 
(i) SO  and NS, but not WSS,  
(ii) NS and EF but not SO, and  
(iii) SO and EF but not NS, 
where they are respectively given in Examples~\ref{ex:SONSnotWSS}, \ref{ex:NSEFnotWPO},  and \ref{ex:SOEFnotNS}.  
In addition, we show that all the inclusions in  \eqref{eq:relations} are proper 
by providing the  examples with feasible allocations that are 
(a) SSS but not EF and (b) WSS but not SSS, where they are respectively given in   
Examples~\ref{ex:SPOnotSO} and \ref{ex:WSSnotSSS}.

\begin{example}\label{ex:SONSnotWSS}
Consider an instance where $n=9$, $k=2$, $q_1=5$, $q_2=4$,  $x_1=1,x_2=x_3=2$, and  $x_4=\dots=x_9=4$.
A feasible allocation $\cT=(\{2,3,7,8,9\},\{1,4,5,6\})$ in Fig.~\ref{fig:SONS} is socially optimal and Nash stable.
However, agents $1$ and $9$ envy each other, which implies that $\cT$ is not WSS.
\end{example}

\begin{example}\label{ex:NSEFnotWPO}
Consider an instance where $n=4$, $k=3$, $q_1=q_2=2, q_3=4$, and $x_1=x_2=x_3=x_4=1$.
Then a feasible allocation $\cT=(\{1,2\},\{3,4\},\emptyset)$ is Nash stable and envy-free.
However, it is not socially optimal, since its total cost is larger than that of another feasible allocation $\cT'=(\emptyset,\emptyset,\{1,2,3,4\})$. 
\end{example}

\begin{example}\label{ex:SOEFnotNS}
Consider a feasible instance where $n=5$, $k=2$, $q_1=q_2=3$, $x_1=1$, $x_2=x_3=2$, and $x_4=x_5=4$.
Then a feasible allocation $\cT=(\{1,2,3\},\{4,5\})$ in  Fig.~\ref{fig:SOEFnotNS} is socially optimal.
However, agents $2$ and $3$ have a Nash deviation to $T_2$, and thus $\cT$ is not Nash stable.
\end{example}

\begin{example}\label{ex:SPOnotSO}
Consider an instance where $n=3$, $k=2$, $q_1=2$, $q_2=1$, $x_1=1$, and $x_2=x_3=2$.
Then a feasible allocation $\cT=(\{1,2\},\{3\})$ in Fig.~\ref{fig:SPOnotSO} is strongly swap-stable but not envy-free, since agent $3$ envies $1$.
\end{example}

\begin{example}\label{ex:WSSnotSSS}
Consider an instance where $n=4$, $k=2$, $q_1=q_2=2$, $x_1=x_2=1$, and $x_3=x_4=2$.
Then a feasible allocation $\cT=(\{1,3\},\{2,4\})$ in  Fig.~\ref{fig:WSSnotSSS} is weakly swap-stable but not strongly swap-stable.
\end{example}

\begin{figure}[ht]
\begin{minipage}[t]{.48\textwidth}
\centering
\begin{tikzpicture}[xscale=.9]
\draw[|->,thick] (0,0)--(5,0);
\draw[|->,thick] (0,1)--(5,1);
\node[left] at (0,1) {$T_1$};
\node[left] at (0,0) {$T_2$};
\node[fill=black, shape=circle, inner sep=2pt, label=above:{1}, label={[font=\tiny,blue]below:$1$}] at (1,0) {};
\node[fill=black, shape=circle, inner sep=2pt, label=above:{2,3}, label={[font=\tiny,blue]below:$2$}] at (2,1) {};
\node[fill=black, shape=circle, inner sep=2pt, label=above:{4,5,6}, label={[font=\tiny,blue]below:$4$}] at (4,0) {};
\node[fill=black, shape=circle, inner sep=2pt, label=above:{7,8,9}, label={[font=\tiny,blue]below:$4$}] at (4,1) {};
\end{tikzpicture}
\caption{A feasible allocation that is SO and NS but not WSS}\label{fig:SONS}
\end{minipage}\hfill
\begin{minipage}[t]{.48\textwidth}
\centering
\begin{tikzpicture}[xscale=.9]
\draw[|->,thick] (0,1)--(5,1);
\draw[|->,thick] (0,0)--(5,0);
\node[left] at (0,1) {$T_1$};
\node[left] at (0,0) {$T_2$};
\node[fill=black, shape=circle, inner sep=2pt, label=above:{1}, label={[font=\tiny,blue]below:$1$}] at (1,1) {};
\node[fill=black, shape=circle, inner sep=2pt, label=above:{2,3}, label={[font=\tiny,blue]below:$2$}] at (2,1) {};
\node[fill=black, shape=circle, inner sep=2pt, label=above:{4,5}, label={[font=\tiny,blue]below:$4$}] at (4,0) {};
\end{tikzpicture}
\caption{A feasible allocation that is SO and EF but not NS}\label{fig:SOEFnotNS}
\end{minipage}%
\hfill
\begin{minipage}[t]{.48\textwidth}
\centering
\begin{tikzpicture}[xscale=1.8]
\draw[|->,thick] (0,0)--(2.5,0);
\draw[|->,thick] (0,1)--(2.5,1);
\node[left] at (0,1) {$T_1$};
\node[left] at (0,0) {$T_2$};
\node[fill=black, shape=circle, inner sep=2pt, label=above:{1}, label={[font=\tiny,blue]below:$1$}] at (1,1) {};
\node[fill=black, shape=circle, inner sep=2pt, label=above:{2}, label={[font=\tiny,blue]below:$2$}] at (2,1) {};
\node[fill=black, shape=circle, inner sep=2pt, label=above:{3}, label={[font=\tiny,blue]below:$2$}] at (2,0) {};
\end{tikzpicture}
\caption{A feasible allocation that is SSS but not EF}\label{fig:SPOnotSO}
\end{minipage}\hfill
\begin{minipage}[t]{.48\textwidth}
\centering
\begin{tikzpicture}[xscale=1.8]
\draw[|->,thick] (0,0)--(2.5,0);
\draw[|->,thick] (0,1)--(2.5,1);
\node[left] at (0,1) {$T_1$};
\node[left] at (0,0) {$T_2$};
\node[fill=black, shape=circle, inner sep=2pt, label=above:{1}, label={[font=\tiny,blue]below:$1$}] at (1,1) {};
\node[fill=black, shape=circle, inner sep=2pt, label=above:{2}, label={[font=\tiny,blue]below:$1$}] at (1,0) {};
\node[fill=black, shape=circle, inner sep=2pt, label=above:{3}, label={[font=\tiny,blue]below:$2$}] at (2,1) {};
\node[fill=black, shape=circle, inner sep=2pt, label=above:{4}, label={[font=\tiny,blue]below:$2$}] at (2,0) {};
\end{tikzpicture}
\caption{A feasible allocation that is WSS but not SSS}\label{fig:WSSnotSSS}
\end{minipage}
\end{figure}

\section{Envy-free allocations}
In this section, we consider envy-free feasible allocations for our model.
Note that no envy-free feasible allocation exists even when a feasible allocation exists as we mentioned in Section \ref{sec:intro}.
We thus study the problem of deciding the existence of an envy-free feasible allocation and finding one if it exists. We identify several scenarios where an envy-free feasible allocation can be computed in polynomial time. 
We show that the problem is FPT with respect to the number of destinations, and is XP with respect to the number of taxis and the maximum capacity of a taxi.\footnote{A problem is said to be {\em fixed parameter tractable} (FPT) with respect to a parameter $p$ if each instance $I$ of this problem can be solved in time $f(p)\cdot\poly(|I|)$, and to be slice-wise polynomial (XP) with respect to $p$ if each instance $I$ of this problem can be solved in time $f(p)\cdot |I|^{g(p)}$. 
}
These restrictions are relevant in many real-life scenarios. 
For example, a taxi company may have a limited resource, in terms of both quantity and capacity. 
It is also relevant to consider a setting where the number of destinations is small; for instance, a workshop organizer may offer a few excursion opportunities to the participants of the workshop. 
%
Furthermore, we consider consecutive envy-free feasible allocations, and show that it can be found in polynomial time. Such restrictions are intuitive to the users and hence important in practical implementation.  
As a negative remark, we show that two decision problems related to envy-free allocations are intractable. 

We start with three basic properties on envy-free allocations that will play key roles in designing efficient algorithms for the scenarios discussed in this paper. 
The first one is {\em monotonicity} of the size of coalitions in terms of the first drop-off point, which is formalized as follows.

\begin{example}\label{ex:no_envy-free}
Consider an instance where $n=4$, $k=2$, $q_1=q_2=2$,
$x_1=2$, and $x_2=x_3=x_4=4$.
We  show that no feasible allocation is envy-free. 
To see this, let $\cT=(T_1,T_2)$ be a feasible allocation. By feasibility, the capacity of each taxi must be full, i.e., $|T_1|=|T_2|=2$. Suppose without loss of generality that $T_1=\{1,2\}$ and $T_2=\{3,4\}$ in Fig.~\ref{fig:no_envy-free}. Then agent $2$ envies the agents of the same type. Indeed, she needs to pay the cost of $3$ at the current coalition while she would only pay $2$ if she were replaced by $3$ (or $4$). Hence this instance has  no envy-free feasible allocation. 
\end{example}

\begin{figure}
\centering
\begin{tikzpicture}[xscale=.9]
\draw[|->,thick] (0,1)--(5,1);
\draw[|->,thick] (0,0)--(5,0);
\node[left] at (0,1) {$T_1$};
\node[left] at (0,0) {$T_2$};
\node[fill=black, shape=circle, inner sep=2pt, label=above:{1}, label={[font=\tiny,blue]below:$2$}] at (2,1) {};
\node[fill=black, shape=circle, inner sep=2pt, label=above:{2}, label={[font=\tiny,blue]below:$4$}] at (4,1) {};
\node[fill=black, shape=circle, inner sep=2pt, label=above:{3,4}, label={[font=\tiny,blue]below:$4$}] at (4,0) {};
\end{tikzpicture}
\caption{An instance with no envy-free feasible  allocation}\label{fig:no_envy-free}
\end{figure}
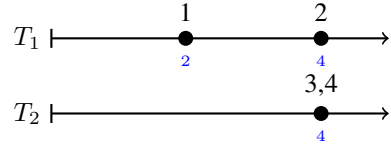

\begin{lemma}[Monotonicity lemma]\label{lemma:head_size}
For an envy-free feasible allocation $\cT$ and non-empty coalitions $T,T'\in\cT$, we have the following implications:
\begin{align}
&\min_{a\in T}x_a < \min_{a'\in T'}x_{a'} \quad\text{implies}\quad |T|\ge|T'|, \label{eq:monotone_size}\\
&\min_{a\in T}x_a=\min_{a'\in T'}x_{a'} \quad\text{implies}\quad |T|=|T'|.     \label{eq:equal_size}
\end{align}
\end{lemma}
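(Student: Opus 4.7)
The plan is to exploit the fact that the Shapley payment of the \emph{first-dropping} agent in a coalition has a particularly clean form, and then read off the inequality directly from envy-freeness applied to those two agents.

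Fix two non-empty coalitions $T=T_i$ and $T'=T_j$ in $\cT$, and let $a^\star\in T$, $a'^{\star}\in T'$ be agents achieving $x_{a^\star}=m\coloneqq\min_{a\in T}x_a$ and $x_{a'^\star}=m'\coloneqq\min_{a'\in T'}x_{a'}$. First I would observe the key identity: because every other agent of $T$ has destination $\ge m$, we have $n_T(r)=|T|$ on $[0,m]$, so
\[
\phi_i(T,x_{a^\star})=\int_0^{m}\frac{\d r}{|T|}=\frac{m}{|T|}.
\]

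For~\eqref{eq:monotone_size}, assume $m<m'$. Consider what $a^\star$ would pay if swapped into $T'$: the coalition $T'-a'^{\star}+a^\star$ has the same size $|T'|$, hence is feasible, and every agent in it has destination $\ge m$ (the other $|T'|-1$ agents have destination $\ge m'>m$, and $a^\star$ itself has destination $m$). Therefore by the same computation
\[
\phi_j(T'-a'^{\star}+a^\star,x_{a^\star})=\frac{m}{|T'|}.
\]
Envy-freeness applied to $a^\star$ with respect to $a'^{\star}$ gives $m/|T'|\ge m/|T|$, and since $m>0$ this yields $|T|\ge|T'|$, as desired.

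For~\eqref{eq:equal_size}, assume $m=m'$. Running the argument above gives $|T|\ge|T'|$; running it again with the roles of $T$ and $T'$ interchanged (i.e., using that $a'^{\star}$ does not envy $a^\star$) yields the reverse inequality, hence $|T|=|T'|$.

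I do not expect any real obstacle: the proof is essentially one line once one notices the closed form $\phi_i(T,\min_{a\in T}x_a)=m/|T|$. The only thing to double-check is that the swapped coalition $T'-a'^{\star}+a^\star$ respects the capacity of taxi $j$ so that the finite formula for $\phi_j$ applies, but this is automatic because the swap preserves coalition size and $\cT$ is feasible.
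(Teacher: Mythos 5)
Your proposal is correct and follows essentially the same argument as the paper: both rest on the closed form $\phi(T,\min_{a\in T}x_a)=\min_{a\in T}x_a/|T|$ for the first-dropping agent and the analogous formula $m/|T'|$ for the swapped coalition, then read the size inequality off envy-freeness (with the equality case obtained by symmetry). The feasibility check you flag at the end is the same observation the paper relies on implicitly.
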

\begin{proof}
Let $b\in\argmin_{a\in T}x_a$ and $b'\in\argmin_{a'\in T'}x_{a'}$. Suppose that $b \leq b'$ and $|T|<|T'|$.
Then $b$ envies $b'$,  because
\begin{align*}
\phi(T,x_b)=\frac{x_{b}}{|T|}>\frac{x_b}{|T'|}=\phi(T'-b'+b,x_b).
\end{align*}
Thus  $b \leq b'$ implies  $|T|\geq |T'|$, which proves \eqref{eq:monotone_size} and \eqref{eq:equal_size}.  
\end{proof}

We next show the \emph{split} property of envy-free feasible allocations.  
For a coalition $T$ and a real $s$, we use notations $T_{<s}$, $T_{=s}$, and $T_{>s}$ to denote the set of agents with type smaller than $s$, 
equal to $s$, 
and larger than $s$, 
respectively. 
We say that \emph{agents of type $x$ are split in an allocation $\cT$} if $\cT$ contains two distinct $T$ and $T'$ with $T_{=x}, T'_{=x}\neq\emptyset$.
The next lemma states that, the agents of type $x$ can be split in an envy-free feasible allocation only if they are the first passengers to drop off in their coalitions, and such coalitions are of the same size; further, if two taxis have an equal number of agents of split type, then no other agent rides these taxis.

An implication of the lemma is critical: we do not have to consider how to split agents of non-first drop-off points in order to see envy-free feasible allocations. 

\begin{lemma}[Split lemma]\label{lemma:split}
 If agents of type $x$ are split in an envy-free feasible allocation $\cT$, i.e., $T_{=x}, T'_{=x}\neq\emptyset$ for some distinct $T,T'\in\cT$,  then we have the following three statements: 
\begin{enumerate}
\item[{\rm(i)}] The agents of type $x$ are the first passengers to drop off in both $T$ and $T'$, i.e., $T_{<x}=T'_{<x}=\emptyset$,
\item[{\rm(ii)}] Both $T$ and $T'$ are of the same size, i.e., $|T|=|T'|$, and
\item[{\rm(iii)}] If $|T_{=x}|=|T'_{=x}|$, then $T=T_{=x}$ and $T'=T'_{=x}$.
\end{enumerate}
\end{lemma}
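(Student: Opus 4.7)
The first move would be to derive a common-cost identity: since $T_{=x}$ and $T'_{=x}$ are both non-empty, swapping any $a\in T_{=x}$ with any $b\in T'_{=x}$ leaves the counting function $n_{\cdot}(r)$ unchanged (the two swapped agents share type $x$), so envy-freeness applied in both directions forces $\phi(T,x)=\phi(T',x)$. Each of (i)--(iii) will then be reduced to exhibiting a single swap whose resulting cost can be compared against this common value.

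For (i), I would argue by contradiction. Assume $T_{<x}\neq\emptyset$ (the case $T'_{<x}\neq\emptyset$ is symmetric) and fix $c\in T_{<x}$ and any $b\in T'_{=x}$. Moving $b$ into $T$ in place of $c$ leaves $n_{T-c+b}(r)=n_T(r)$ on $[0,x_c]$ but strictly raises the count by one on $(x_c,x]$, so
\begin{equation*}
\phi(T-c+b,x)=\phi(T,x)-\int_{x_c}^{x}\Bigl(\tfrac{1}{n_T(r)}-\tfrac{1}{n_T(r)+1}\Bigr)\d r<\phi(T,x)=\phi(T',x),
\end{equation*}
meaning $b$ envies $c$, a contradiction. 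Part (ii) is then immediate: by (i), $\min_{a\in T}x_a=x=\min_{a'\in T'}x_{a'}$, so the equality case of Lemma~\ref{lemma:head_size} yields $|T|=|T'|$.

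For (iii), I would set $m=|T|=|T'|$, $s=|T_{=x}|=|T'_{=x}|$, and $t=m-s$, so that $|T_{>x}|=|T'_{>x}|=t$. Assuming for contradiction $t\ge 1$, let $y$ be the smallest destination appearing in $T_{>x}\cup T'_{>x}$; by symmetry I may take $y=x_d$ for some $d\in T_{>x}$. By (i) and the minimality of $y$, both $n_T$ and $n_{T'}$ equal $m$ on $[0,x]$ and equal $t$ on $(x,y]$, so for any $b'\in T'_{=x}$ a direct evaluation gives
\begin{equation*}
\phi(T'-b'+d,y)=\tfrac{x}{m}+\tfrac{y-x}{t+1}<\tfrac{x}{m}+\tfrac{y-x}{t}=\phi(T,y),
\end{equation*}
so $d$ envies $b'$, contradicting envy-freeness. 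The only real subtlety throughout is careful bookkeeping of how a one-for-one swap alters $n_{\cdot}(r)$ on each sub-interval of integration; feasibility of each swap is automatic since coalition sizes are preserved.
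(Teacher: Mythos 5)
Your proposal is correct and follows essentially the same route as the paper's proof: the same two-sided swap argument giving $\phi(T,x)=\phi(T',x)$, the same replace-an-earlier-passenger contradiction for (i), and the same comparison of per-segment denominators ($t+1$ versus $t$) for (iii). The only cosmetic difference is that you obtain (ii) by invoking the equality case of the Monotonicity lemma rather than recomputing $x/|T|=x/|T'|$ directly, which is equally valid.
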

\begin{proof}
Let $a\in T_{=x}$ and $b\in T'_{=x}$. As $a$ and $b$ do not envy each other, we have
\begin{align*}
\phi(T,x)=\phi(T,x_{a})\le \phi(T'-b+a,x_{a})=\phi(T',x),\\
\phi(T',x)=\phi(T',x_{b})\le \phi(T-a+b,x_{b})=\phi(T,x).
\end{align*}
Hence, $\phi(T,x)=\phi(T',x)$.

To show $(${\rm i}$)$, suppose to the contrary that $T_{<x}$ is non-empty and let $\hat{a}$ be its element.
Then, $b$ envies $\hat{a}$ because
\begin{align*}
\phi(T',x_{b})=\phi(T,x)>\phi(T-\hat{a}+b,x).
\end{align*}
Hence, $T_{<x}=\emptyset$.
By symmetry, we also have $T'_{<x}=\emptyset$, proving $(${\rm i}$)$. This implies that $\phi(T,x)=x/|T|$ and $\phi(T',x)=x/|T'|$. Since $\phi(T,x)=\phi(T',x)$ by envy-freeness, we have $|T|=|T'|$, which proves $(${\rm ii}$)$.

To see $(${\rm iii}$)$, suppose towards a contradiction that $|T_{=x}|=|T'_{=x}|$ but there is an agent in $T$ or $T'$ whose destination appears strictly after $x$, i.e., $(T\cup T' )\setminus A_{=x} \neq \emptyset$. Let $a^*$ be the agent with closest destination among such agents, i.e., $a^*\in\argmin_{a'\in (T\cup T')\setminus A_{=x}}x_{a'}$. Assume without loss of generality $a^* \in T$. Then, $a^*$ envies $b$ because 
\begin{align*}
\phi(T'-b+a^*,x_{a^*}) 
&= \frac{x}{|T'|} + \frac{(x_{a^*}-x)}{|T'|-|T'_{=x}|+1}\\
&< \frac{x}{|T|} + \frac{(x_{a^*}-x)}{|T|-|T_{=x}|}= \phi(T,x_{a^*}), 
\end{align*}
yielding a contradiction. Hence, $|T_{=x}|=|T'_{=x}|$ implies $T=T_{=x}$ and $T'=T'_{=x}$.
\end{proof}

The last property on envy-free allocations is {\em locality}, i.e.,  every agent $a$ is allocated to a taxi $T$ with minimum 
cost $\phi(T,x_a)$. 

\begin{lemma}[Locality lemma]\label{lemma:greedy}
For any envy-free allocation $\cT$, coalition $T\in\cT$, and agent $a\in T$, we have
\begin{align*}
\phi(T,x_a) \leq \phi(T',x_{a})
\label{eq:greedy}
\end{align*}
for all $T' \in\cT$. Furthermore, the strict inequality holds if $x_a$ is larger than the first drop off point of $T'$, i.e., $x_a > \min_{a' \in T'}x_{a'}$. 
\end{lemma}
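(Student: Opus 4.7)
The plan is to deduce both inequalities directly from the envy-freeness hypothesis by choosing a clever swap partner $b \in T'$, since envy-freeness only gives us statements about expressions of the form $\phi(T'-b+a, x_a)$, not $\phi(T', x_a)$. The core idea: for any envy-free allocation, envy-freeness yields
\[
\phi(T, x_a) \;\le\; \phi(T' - b + a, x_a) \qquad \text{for all } b \in T',
\]
so it suffices to exhibit a $b$ for which $\phi(T'-b+a, x_a)$ is bounded above by $\phi(T', x_a)$ (weak version) or strictly below $\phi(T', x_a)$ (strict version). Since $\phi$ is an integral of $1/n_{(\cdot)}(r)$, the analysis reduces to comparing $n_{T'-b+a}(r)$ with $n_{T'}(r)$ on $r \in [0, x_a]$.

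For the weak inequality, I would split into two cases. If some $b \in T'$ satisfies $x_b \ge x_a$, then for every $r \le x_a$ both $b$ and $a$ contribute to the count, so $n_{T'-b+a}(r) = n_{T'}(r)$ on $[0,x_a]$ and hence $\phi(T'-b+a,x_a) = \phi(T',x_a)$; combined with envy-freeness this yields the claim. If instead every $b \in T'$ satisfies $x_b < x_a$ (including the degenerate case $T'=\emptyset$), then $n_{T'}(r) = 0$ for some $r \le x_a$, so $\phi(T',x_a) = \infty$ and the inequality is trivial.

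For the strict part, assume $T \ne T'$ and $x_a > \min_{a' \in T'} x_{a'}$, and take $b^* \in \argmin_{a' \in T'} x_{a'}$, so $x_{b^*} < x_a$. Comparing counts: for $r \le x_{b^*}$ both $b^*$ and $a$ are included, so $n_{T'-b^*+a}(r) = n_{T'}(r)$; but for $r \in (x_{b^*}, x_a]$ we have $n_{T'-b^*+a}(r) = n_{T'}(r) + 1$ (since $b^*$ drops out, $a$ remains). Over an interval of positive length the integrand $1/n$ strictly decreases (or changes from infinite to finite), so $\phi(T'-b^*+a, x_a) < \phi(T', x_a)$; chaining with envy-freeness gives the strict inequality. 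The main obstacle is this strictness step: one must verify that the drop in the integrand on $(x_{b^*}, x_a]$ survives even when $\phi(T', x_a)$ is finite (so one cannot appeal to an $\infty$ on the right-hand side) and when $n_{T'}(r)$ may already be large, which is handled cleanly by the $n+1$ versus $n$ comparison on an interval of positive measure.
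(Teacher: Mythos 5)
Your proof is correct and follows essentially the same route as the paper's: both arguments reduce the claim to envy-freeness by choosing a swap partner $b\in T'$ with $x_b\ge x_a$ for the weak inequality (so that $\phi(T'-b+a,x_a)=\phi(T',x_a)$, with the case of no such $b$ handled by $\phi(T',x_a)=\infty$) and a partner with $x_b<x_a$ for the strict part (so that $\phi(T'-b+a,x_a)<\phi(T',x_a)$). The only differences are cosmetic — you argue directly where the paper argues by contradiction, and you spell out the integrand comparison $1/(n+1)<1/n$ on an interval of positive length, which the paper leaves implicit.
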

\begin{proof}
To show the first statement, suppose towards a contradiction that there exists $T' \in\cT$ such that $\phi(T',x_a) < \phi(T,x_a)$. Then $T'$ contains an agent $a'$ with $x_{a'} \geq x_a$, 
 since otherwise $\phi(T',x_a)= \int_0^{x_a} \frac{\d{r}}{n_{T'}(r)}=\infty$.
Thus we have $\phi(T,x_a)>\phi(T',x_a)=\phi(T'-a'+a,x_a)$,
which contradicts envy-freeness of $\cT$. 

To show the second statement, assume towards a contradiction that $\cT$ contains a coalition $T'$ such that $ \min_{a' \in T'}x_{a'} < x_a$ and $\phi(T',x_a) = \phi(T,x_a)$. 
Let $a'$ be an agent in $T'$ such that $x_{a'} < x_a$. 
Then we have $\phi(T,x_a)=\phi(T',x_a) >\phi(T'-a'+a,x_a)$, 
which again contradicts envy-freeness of $\cT$. 
\end{proof}

\subsection{Constant number of taxis}
\label{sec-constanttaxix}
We start by showing that Locality lemma provides a greedy algorithm for finding  an envy-free feasible allocation if  $\argmin_{a \in T} x_a$ is known in advance for each taxi $T$. 
Especially, it implies that all envy-free feasible allocations  can be computed efficiently when we have a constant number of taxis. 

We first note that the cost of an agent $a$ in a coalition $T$ is determined by the agents of type smaller than $x_a$ and the number of agents in $T$. 
Formally, for a coalition $S\subseteq A$ and two positive reals $x$ and $\mu$, we define $\psi(S,x,\mu)$ by 
\[
\psi(S,x,\mu) := \int_0^{x} \frac{\d{r}}{n_{S}(r)+\mu-|S|}, 
\]
where $\mu\geq |S|$ is assumed. 
Then we have 
\[\phi(T,x)=\psi(T_{<x},x,|T|)\]
for any coalition  $T$ and any positive real $x$.  
Locality lemma 
can be restated as follows.

\begin{lemma}\label{lem:greedy:psi}
For any envy-free  allocation $\cT$, coalition $T \in \cT$, and agent $a\in T$, we have
\begin{align*}
\psi(T_{<x_a},x_a,|T|) \leq \psi(T'_{<x_a},x_a,|T'|)
\end{align*}
for all $T' \in\cT$. Furthermore, the strict inequality holds if $x_a > \min_{a' \in T'}x_{a'}$. 
\end{lemma}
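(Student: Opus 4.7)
The plan is to obtain this lemma as an immediate corollary of the Locality lemma (Lemma~\ref{lemma:greedy}) together with the algebraic identity $\phi(T,x) = \psi(T_{<x}, x, |T|)$, which holds for every coalition $T$ and every positive real $x$. Once that identity is established, rewriting both sides of $\phi(T,x_a) \le \phi(T',x_a)$ via this identity yields exactly the desired inequality in $\psi$, and the strict-inequality clause transfers verbatim.

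To prove the identity, I would fix a coalition $T$ and a real $x>0$ and compare the two integrands on $(0,x)$ pointwise. For any $r\in(0,x)$, an agent $b\in T$ with $x_b\ge r$ either satisfies $x_b<x$, in which case it is counted by $n_{T_{<x}}(r)$, or satisfies $x_b\ge x>r$, and the number of such agents is precisely the constant $|T|-|T_{<x}|$. Hence $n_T(r)=n_{T_{<x}}(r)+|T|-|T_{<x}|$ on $(0,x)$, and integrating $1/n_T(r)$ against $\d r$ over $(0,x)$ exactly recovers the defining integral for $\psi(T_{<x},x,|T|)$. The degenerate case, in which $T$ contains no agent with destination at least $x$, is handled consistently on both sides: $\phi(T,x)=\infty$ by convention, and in $\psi(T_{<x},x,|T|)$ we have $T_{<x}=T$, so $|T|-|T_{<x}|=0$ and the integrand blows up once $r$ exceeds the maximum destination in $T$, again giving $\infty$.

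With the identity in hand, the conclusion is immediate. For an envy-free allocation $\cT$, coalition $T\in\cT$, agent $a\in T$, and any $T'\in\cT$, Lemma~\ref{lemma:greedy} gives $\phi(T,x_a)\le\phi(T',x_a)$; substituting $\phi(T,x_a)=\psi(T_{<x_a},x_a,|T|)$ and $\phi(T',x_a)=\psi(T'_{<x_a},x_a,|T'|)$ yields the weak inequality. The strict inequality under $x_a>\min_{a'\in T'}x_{a'}$ is obtained by applying the same substitution to the strict clause of Lemma~\ref{lemma:greedy}. There is essentially no obstacle beyond a small amount of bookkeeping for the edge case in which $T'$ is empty or has all its destinations strictly below $x_a$, but in each such case both $\phi$ and $\psi$ are defined so as to take the value $\infty$, and the claim is trivially true.
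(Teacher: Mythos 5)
Your proposal is correct and takes essentially the same approach as the paper: the paper also obtains this lemma as a direct restatement of the Locality lemma via the identity $\phi(T,x)=\psi(T_{<x},x,|T|)$, which it asserts without proof. Your pointwise comparison of the integrands (using $n_T(r)=n_{T_{<x}}(r)+|T|-|T_{<x}|$ for $r<x$) is exactly the routine verification the paper leaves implicit.
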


By  Lemma \ref{lem:greedy:psi}, the coalition of each agent can be determined in a greedy manner from an agent with the nearest destination, if we fix the following three parameters for each taxi $i \in [k]$: 
\begin{description}
    \item[{\rm (I)}] the number $\mu_i$ of agents who take taxi $i$,
    \item[{\rm (II)}] the first drop-off point $s_i$,
    \item[{\rm (III)}] the number $r_i$ of agents who drop off at the first point. 
\end{description}
Here we define $s_i=\infty$ if $r_i=\mu_i=0$. 
A vector $(\mu_i,s_i,r_i)_{i\in[k]}$ in $ (\mathbb{Z}_{\ge 0} \times (\mathbb{R}_{>0} \cup \{\infty\}) \times \mathbb{Z}_{\ge 0})^{[k]}$ is called a \emph{configuration} if  the following four conditions hold:
\begin{enumerate}
    \item either $(\mu_i,s_i,r_i)=(0,\infty,0)$ 
    or 
 $\bigl(s_i\in\{x_a\mid a\in A\}$ and $1\leq r_i\le\mu_i\le q_i\bigr)$ for each $i \in [k]$, 
    \item $\sum_{i \in [k]}\mu_i=n$,  
    \item $\sum_{j \in [k]:s_j=s_i} r_i \leq  |A_{=s_i}|$   for each $i \in [k]$, and
    \item  $a \leq \sum_{i \in [k]: s_i < x_a} \mu_i +\sum_{i \in [k]: s_i =x_a} r_i$
    for each $a\in A$. 
\end{enumerate}
Here for Condition 4, we recall that for any two agents $a$ and $b$, $a<b$ implies $x_a \leq x_b$.
Note that (II) and (III) implies Condition 3. 
It is not difficult to see that $(\mu_i,s_i,r_i)_{i\in[k]}$ is a configuration if and only if there exists a feasible allocation $\cT$ that satisfies (I), (II), and (III), where such a $\cT$  is called {\em consistent with $(\mu_i,s_i,r_i)_{i\in[k]}$}. 
By definition, there exist $O(n^{3k})$ many configurations, which is polynomial when $k$ is a constant. 


%


\begin{theorem}\label{thm:XP:constant:taxi}
 When the number $k$ of taxis is a constant, an envy-free feasible allocation can be found in polynomial time, if it exists.
\end{theorem}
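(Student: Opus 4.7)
The plan is to enumerate all configurations $(\mu_i,s_i,r_i)_{i\in[k]}$ and, for each, attempt to construct a candidate envy-free feasible allocation by a greedy procedure driven by Lemma~\ref{lem:greedy:psi}. Since there are $O(n^{3k})$ configurations and $k$ is constant, this enumeration is polynomial in $n$.

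For a fixed configuration, the greedy subroutine first realizes the first drop-offs: for each taxi $i$ with $\mu_i>0$, assign $r_i$ (so-far unassigned) agents of type $s_i$ to $T_i$; Condition~3 on configurations guarantees that this step is always executable. It then processes the remaining agents in order of non-decreasing destination, placing each agent $a$ into a taxi $i$ with $s_i\le x_a$ and $|T_i|<\mu_i$ that minimizes $\psi(T_i,x_a,\mu_i)$ (where $T_i$ denotes the current partial coalition), breaking ties arbitrarily. After all agents are placed, verify in $O(n^2)$ time whether the resulting allocation is envy-free, and return it if so; if no configuration yields an envy-free allocation, report non-existence.

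Correctness reduces to showing that, given an envy-free feasible allocation $\cT^*$ with configuration $(\mu_i^*,s_i^*,r_i^*)$, the algorithm succeeds on this particular configuration. Using Lemma~\ref{lemma:split}, one first argues that every agent in $\cT^*$ whose destination equals some $s_i^*$ must itself be a first drop-off of a taxi whose first drop-off is $s_i^*$; hence after the initial assignment, the partial allocation coincides with $\cT^*$ restricted to the first drop-offs, up to the interchangeability of agents with equal destinations. For every remaining agent $a$, we then have $x_a\neq s_i^*$ for all $i$, so $x_a>s_i^*$ holds strictly for every eligible taxi. The strict inequality version of Lemma~\ref{lem:greedy:psi} forces the greedy choice to coincide with the assignment in $\cT^*$, ruling out any tie-breaking ambiguity. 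Combined with the envy-freeness check, the total running time is $O(n^{3k+2})$.

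The main obstacle lies in this correctness argument, specifically in verifying that the greedy procedure, when fed the configuration of $\cT^*$, really does reproduce an envy-free allocation. This requires combining Lemma~\ref{lemma:split} (to argue that first-drop-off types appear exclusively at first drop-offs, so that the initial step alone handles them) with the strict version of Lemma~\ref{lem:greedy:psi} (to eliminate tie-breaking ambiguity in the subsequent greedy steps). Once these structural facts are in hand, feasibility of the candidate allocation and the polynomial bound on the running time follow immediately.
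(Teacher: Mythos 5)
Your proposal matches the paper's proof in all essentials: enumerate the $O(n^{3k})$ configurations $(\mu_i,s_i,r_i)_{i\in[k]}$, run a locality-driven greedy (Lemma~\ref{lem:greedy:psi}) for each, use the strict-inequality version of that lemma to show the greedy choice is forced on the configuration of a true envy-free allocation, and accept only after a final envy-freeness check. The only difference is cosmetic (you assign all first drop-offs upfront rather than interleaving them with the greedy pass as Algorithm~\ref{alg:const_taxi} does), so the argument and the $O(n^{3k+2})$ bound coincide with the paper's.
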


Since all configurations can be enumerated in polynomial time if the number $k$ of taxis is a constant, 
it is sufficient to prove the following lemma.
\begin{lemma}
\label{lemma-111}
Given a configuration $(\mu_i,s_i,r_i)_{i\in[k]}$, 
Algorithm \ref{alg:const_taxi} computes in polynomial time an envy-free feasible allocation consistent with $(\mu_i,s_i,r_i)_{i\in[k]}$ if it exists. 
\end{lemma}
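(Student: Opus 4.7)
The plan is to prove correctness and polynomial running time of Algorithm~\ref{alg:const_taxi}, which I envision as a greedy procedure that processes agents in index order $1,2,\dots,n$ (a non-decreasing destination order) and assigns each agent $a$ to an eligible taxi $i$ minimizing $\psi(T_{i,<x_a},x_a,\mu_i)$. Here taxi $i$ is eligible for $a$ when $|T_i|<\mu_i$, $s_i\le x_a$, and the $r_i$-sized reservation of type-$s_i$ agents at $i$ is respected (the $r_i$ first-drop-off slots are filled before any later-destination agents are placed at $i$). After all agents are placed the algorithm checks envy-freeness explicitly; if the check fails, the algorithm reports failure.

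The central step is an induction on the processing order, maintaining the invariant that there exists an envy-free feasible allocation $\cT^\ast$ consistent with the configuration that agrees with the greedy on the first $m$ agents. For the inductive step, suppose greedy sends $a=m+1$ to taxi $i$ while $\cT^\ast$ sends $a$ to $j\ne i$. The induction hypothesis makes the greedy's $\psi$-value at every taxi coincide with the corresponding $\psi$-value in $\cT^\ast$, because the prefix $T_{\cdot,<x_a}$ is contained in the agents with index at most $m$, which are allocated identically in the two allocations. Lemma~\ref{lem:greedy:psi} applied at $\cT^\ast$ therefore forces $\psi_j\le\psi_i$, while greedy's choice forces $\psi_i\le\psi_j$, giving $\psi_i=\psi_j$. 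The strict-inequality clause of Lemma~\ref{lem:greedy:psi} rules out every mismatch except when $s_i=s_j=x_a$; Lemma~\ref{lemma:split}(ii) then yields $\mu_i=\mu_j$. Since greedy and $\cT^\ast$ both allocate exactly $r_i$ agents of type $x_a$ to taxi $i$ but disagree on $a$, some later agent $c>m+1$ of type $x_a$ sits in $i$ under $\cT^\ast$. Swapping $a$ and $c$ inside $\cT^\ast$ produces a new envy-free feasible allocation consistent with the configuration that agrees with greedy on $1,\dots,m+1$; envy-freeness is preserved because $a$ and $c$ have identical destination, so every agent's $\phi$-cost is unchanged.

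If greedy ever encounters an agent with no eligible taxi, then no feasible allocation consistent with the configuration exists and failure is correctly reported. If greedy completes but the final envy-freeness check fails, the invariant shows that no envy-free feasible allocation consistent with the configuration exists, so failure is again correct. For efficiency, each of the $n$ agents requires $O(k)$ incrementally-maintained $\psi$-value comparisons, giving $O(nk)$ for the greedy phase, and the final envy-freeness check is $O(n^2)$; the algorithm therefore runs in polynomial time.

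The main obstacle is the case analysis in the inductive step: isolating $s_i=s_j=x_a$ as the only configuration in which greedy and $\cT^\ast$ can disagree, which relies critically on combining the strict-inequality clause of Lemma~\ref{lem:greedy:psi} with the induction hypothesis to exclude every other misalignment, and then verifying that the swap of $a$ and $c$ preserves both the configuration (enabled by Lemma~\ref{lemma:split}(ii) which equalizes the taxi sizes) and envy-freeness (immediate from the identical destinations of $a$ and $c$, but needing to be stated cleanly).
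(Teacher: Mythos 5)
Your proof is correct and follows essentially the same route as the paper's: both hinge on Lemma~\ref{lem:greedy:psi} to show that the $\psi$-minimizing taxi is forced for every agent who is not a first drop-off passenger, treat same-type first drop-off agents as interchangeable, and rely on the final explicit envy-freeness check for soundness. Your induction with the swap of $a$ and $c$ is a more explicit rendering of the tie-breaking step that the paper handles by directly asserting uniqueness of the minimizer; the only slight imprecision is that excluding the mismatch case $s_i=x_a$, $s_j<x_a$ requires Lemma~\ref{lemma:split}(i) (or, equivalently, your swap argument itself, which works in that case too) rather than the strict-inequality clause alone, but this does not affect correctness.
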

\begin{proof}
\begin{algorithm}
\caption{ }
\label{alg:const_taxi}
  Initialize $S_i \ot \emptyset$ for each $i\in [k]$\;
  \For{$a\ot 1,2,\dots,n$}{    
    \lIf{$x_a=s_i$ and $|S_i|<r_i$ for some $i$}{
      Take such an $i$ arbitrarily
    }\label{line:head}
    \lElse{
      \mbox{Pick $i$ from \hspace{-12mm}$\displaystyle\argmin_{\hspace{12mm}j\in[k]:\, s_j <x_a\land |S_j| <\mu_j}\hspace{-12mm}\psi(S_j,x_a,\mu_j)$}
    }\label{line:greedy}
    Set $S_{i}\ot S_{i}+a$\;\label{line:set}
  }
  \lIf{$(S_1,\dots,S_k)$ is envy-free}{
    \Return $(S_1,\dots,S_k)$}\label{line:check}
\lElse{\Return ``No envy-free feasible allocation consistent with $(\mu_i,s_i,r_i)_{i\in[k]}$''}
\end{algorithm} 
We prove that Algorithm~\ref{alg:const_taxi} computes in polynomial time an envy-free feasible allocation consistent with $(\mu_i,s_i,r_i)_{i\in[k]}$ if it exists. 

Let us first show that line~\ref{line:set} is executed for each agent $a$, i.e, it is allocated to some taxi $i$. 
If $x_a=s_i$ holds for some taxi $i$,
then by Condition 3, the if-statement in line~\ref{line:head} must hold, implying that $i$ is chosen in the line.
Otherwise, 
 by Conditions 2 and 4,  at least one taxi $j$ satisfies  
$s_j <x_a$ and $|S_j| <\mu_j$, which implies that $i$ is chosen in line~\ref{line:greedy}. 
Thus the algorithm allocates all the agents. 

Let ${\cal S}$ denote $(S_1,\dots,S_k)$ checked in line~\ref{line:check}. 
It is not difficult to see that Conditions 1 and 2 imply that ${\cal S}$ is a feasible allocation satisfying (I). 
Moreover, Conditions 3 and 4, together with the discussion above, imply that ${\cal S}$ satisfies (II), (III) and Lemma~\ref{lemma:split} (i).
Therefore ${\cal S}$ is a feasible allocation that satisfies (I), (II), (III) and Lemma~\ref{lemma:split} (i).

We finally show that each agent $a$ is properly allocated.  
Since any agent $a$ who drops off at the first drop-off point 
(i.e., $x_a=s_i$ holds for some taxi $i$) is properly allocated, 
we only consider agents $a$ of the other kind. 
If there is an envy-free feasible allocation  consistent with a given configuration $(\mu_i,s_i,r_i)_{i\in[k]}$,  
by Lemma~\ref{lem:greedy:psi}, there exists a unique taxi $i$ that minimizes $\psi(S_i,x_a,\mu_i)$ among agents $i$ with $s_i <x_a$ and $|S_i| <\mu_i$. 
This implies that $i$ is properly chosen in line~\ref{line:greedy}.

Therefore, 
it is enough to check if ${\cal S}$ is envy-free, 
since (I), (II), (III) and  Lemma \ref{lemma:split} (i) are all necessary conditions of envy-free feasible allocations. 
This completes the proof. 
\end{proof}

\subsection{Constant capacity}
We now move on to the case when the capacity of each taxi is at most four. We design a greedy algorithm based on locality property in Lemma \ref{lem:greedy:psi}. 
Recall that the greedy algorithm works, once we fix (I), (II), and (III) in Section \ref{sec-constanttaxix}. 
If the capacity of each taxi is bounded by a constant, (I) the number $\mu_i$ of agents in taxi $i$ can be easily treated,  since
we have polynomially many candidates $\mu=(\mu_1, \dots , \mu_k)$. However, it is not immediate to handle (II) and (III), i.e., how to split the agents with the first drop-off points in taxis, even if the capacity of each taxi is bounded by four. 
In this section, we have a more detailed analysis of split property. 
More precisely, we provide all possible  {\em  split patterns} of agents with same destination which are uniquely determined in the way given in Fig.~\ref{tbl:smallsplit}.
Based on this, we design a polynomial time algorithm for computing an envy-free feasible allocation in the case. 

\begin{theorem}\label{thm:constant:capacity}
If $q_i\le 4$ for all $i\in [k]$, then an envy-free feasible allocation can be computed in polynomial time if it exists.
\end{theorem}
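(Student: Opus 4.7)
My plan is to reduce this theorem to the configuration-based greedy algorithm of Section~\ref{sec-constanttaxix}. Recall that once a configuration $(\mu_i,s_i,r_i)_{i\in[k]}$ is fixed, Algorithm~\ref{alg:const_taxi} either produces an envy-free feasible allocation consistent with it or certifies that none exists, in polynomial time. Since $k$ may be as large as $n$ here, the naive bound of $O(n^{3k})$ configurations is useless; instead, I will exploit the bounded capacity to show that the configurations consistent with our structural lemmas can be enumerated in polynomial time.

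The first and key step would be to strengthen Lemma~\ref{lemma:split} into an explicit catalogue (Fig.~\ref{tbl:smallsplit}) of all ways in which agents of a common destination $x$ can appear as the first drop-off of several taxis when every capacity is at most four. By Lemma~\ref{lemma:split}(ii), all such taxis share a common size $\mu \in \{1,2,3,4\}$, and by part (iii) at most one non-full taxi exists per head-count $r < \mu$. Enumerating the few remaining cases and ruling out those violating envy-freeness---by comparing agents in these taxis with agents in other coalitions whose destinations lie just past $x$---pins the pattern down to one of $O(1)$ forms for each $\mu$ and each number of type-$x$ agents that open a new taxi. This is the enhancement of the split property promised in the paragraph preceding the theorem.

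Armed with this catalogue, I would enumerate global \emph{shapes} of configurations as follows. By Lemma~\ref{lemma:head_size}, listing taxis in increasing order of first drop-off makes the sizes $\mu_i$ non-increasing; with $\mu_i\in\{1,2,3,4\}$ the size profile is specified by a vector $(k_1,k_2,k_3,k_4)$ satisfying $\sum_\mu \mu k_\mu=n$, giving $O(n^3)$ possibilities. Then I would scan the distinct destinations $x$ in increasing order and, at each, guess whether new taxis open there and, if so, use the catalogue to determine the split pattern up to $O(1)$ choices. Lemmas~\ref{lemma:head_size} and~\ref{lemma:greedy} ensure that these local decisions propagate consistently to the full allocation, so Algorithm~\ref{alg:const_taxi} can be invoked on each candidate configuration to either produce or reject it. A careful accounting gives $O(n^6)$ time in total.

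The main obstacle is the enhanced split analysis itself: although Lemma~\ref{lemma:split} strongly constrains shared-first-drop taxis, it does not by itself forbid, for instance, two taxis of size $3$ sharing a first drop-off $x$ with head counts $(2,1)$. Such combinations must be eliminated by comparing the costs $\phi(T,\cdot)$ and $\phi(T',\cdot)$ at agents located immediately after $x$, which requires a separate check for each admissible pair $(\mu, r\text{-multiset})$ with $\mu\le 4$. Once this finite case analysis is completed, the remaining argument follows mechanically from Locality and Monotonicity together with the correctness of Algorithm~\ref{alg:const_taxi}.
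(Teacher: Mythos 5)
Your overall strategy---catalogue the admissible split patterns under capacity at most four, enumerate size profiles, and then run a greedy allocation from nearest to farthest---is the same as the paper's, and your identification of the missing case analysis (e.g.\ ruling out head counts $(2,1)$ for two size-$3$ taxis by comparing costs at the agents dropping off just after $x$) is exactly how the paper proceeds, via an auxiliary lemma (Lemma~\ref{lemma:split2}) stating that if $|T_{=x}|=|T|-1$ then every other coalition of the same size either starts earlier or consists entirely of type-$x$ agents. However, there is a genuine gap in how you turn the catalogue into a polynomial-time enumeration. You propose to scan destinations in increasing order and, at each, ``guess'' whether new taxis open and which of the $O(1)$ admissible split patterns occurs, then feed each resulting configuration to Algorithm~\ref{alg:const_taxi}. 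If those per-destination choices are independent, the number of candidate configurations is $2^{\Omega(p)}$, which can be exponential in $n$; nothing in your argument bounds how the local choices compound, and the claimed $O(n^6)$ accounting is unsupported.

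The paper closes exactly this hole with Lemma~\ref{lem:unique:split}: once the size vector $(\mu_1,\dots,\mu_k)$ is fixed (and the taxis ordered as in \eqref{eq:wlg}--\eqref{eq:split:wlg}), the split pattern at each type $x$ is \emph{uniquely determined} by $|T_s|$, $|A_{=x}|$, and---only in the single ambiguous case $|T_s|=4$, $|A_{=x}|\equiv 3 \pmod 4$---by the size of the taxi immediately following the ones filled with four type-$x$ agents. All of this information is available to the algorithm at the moment it places the agents of type $x$, so there is no guessing at all: one deterministic greedy pass per size vector suffices (this is line~\ref{line:if} of Algorithm~\ref{alg:le4}, the only deviation from ``smallest index minimizing $\psi$''). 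Consequently the paper never constructs full configurations $(\mu_i,s_i,r_i)_{i\in[k]}$ or calls Algorithm~\ref{alg:const_taxi}; it enumerates only the $O(n^4)$ size vectors and spends $O(n^2)$ per vector. To repair your proof you would need to show that, given the size profile, the residual choices in your catalogue collapse to a unique option at each destination---which is precisely the content of the uniqueness claim you have deferred.
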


We first review a few properties of envy-free feasible allocations $\cT=(T_1,\dots ,T_k)$. 
By the monotonicity in Lemma~\ref{lemma:head_size} and 
 the assumption of capacity $q_1 \geq \dots \geq q_k$,  
we can assume without loss of generality that 
\begin{align}
&\min_{a\in T_1}x_a\le\dots\le\min_{a\in T_k}x_a \label{eq:wlg}\\
&|T_1|\ge\dots\ge|T_k|. \label{eq:monotone:wlg}
\end{align}
For a destination $x$, 
let $\cT_x$ denote the family of taxis with an agent of type $x$, i.e., $\cT_x=\{i\in[k]\mid (T_i)_{=x}\ne\emptyset\}$.
By (\ref{eq:wlg}) together with Split property, we can see that $\cT_x$ consists of consecutive taxis with the same number of agents. Namely,  
$\cT_x$ can be represented by 
\[
\cT_x=\{T_s, \dots, T_t \} \mbox{ for some $s$ and $t$ in $[k]$}, 
\]
and $|T|=|T'|$ holds for any $T, T \in \cT_x$. 
Thus we further assume that for any type $x$, taxis are arranged in the nondecreasing order in terms of the number of agents of type $x$, i.e.,
\begin{align}\label{eq:split:wlg}
|(T_{s})_{=x}| \ge \dots \ge |(T_{t})_{=x}|. 
\end{align}
For a type $x$,  the sequence $(|(T_{s})_{=x}|,  \dots, |(T_{t})_{=x}|)$ is  called a {\em split pattern of $x$}. 

Let us start by proving the following auxiliary lemma to derive properties of split patterns.


\begin{lemma}\label{lemma:split2}
For an envy-free feasible allocation $\cT$, 
let $T$ be a coalition in $\cT$ such that  $|T|-1$ agents in $T$ drop off at the first destination, i.e., $|T_{=x}|=|T|-1$ for $x=\min_{a\in T}x_a$.
Then for any $T'\in\cT$ with $T' \neq T$ and $|T'|=|T|$, either the first destination$x'$ of $T'$  is smaller than  $x$ $($i.e., $x'<x)$, or all agents in $T'$ drop off at $x$ $($i.e., $T'\subseteq A_{=x})$.
\end{lemma}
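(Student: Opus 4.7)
The plan is to derive a contradiction. Assuming $x' \ge x$ and $T' \not\subseteq A_{=x}$, let $a^* \in T$ be the unique agent with $x_{a^*} > x$; this is well defined because $|T_{=x}| = |T|-1$, which also forces $|T| \ge 2$ (otherwise $T=\{a^*\}$ would have first drop-off point $x_{a^*}\ne x$). Pick any agent $c \in T'$ with $x_c = x'$, i.e., one of the first passengers to drop off in $T'$. I will show that $a^*$ strictly prefers the swap $T'-c+a^*$ to her current coalition $T$, witnessing envy.

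The key computation uses the integral identity
\[
\phi(T,x_{a^*}) - \phi(T'-c+a^*,x_{a^*}) = \int_0^{x_{a^*}}\left(\frac{1}{n_T(r)} - \frac{1}{n_{T'-c+a^*}(r)}\right) \d{r}.
\]
For $r \in (0,x]$, both $T$ and $T'-c+a^*$ contain all $|T|$ of their agents (every destination is $\ge x \ge r$), so the integrand vanishes. For $r \in (x, x_{a^*}]$, $n_T(r)=1$, since $a^*$ is the only agent of $T$ whose destination exceeds $x$; thus the integrand equals $1 - 1/n_{T'-c+a^*}(r)$. It therefore suffices to show $n_{T'-c+a^*}(r) \ge 2$ on some interval of positive length starting at $x$.

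This reduces to counting the agents of $T'-c+a^*$ with destination strictly greater than $x$. If $x' = x$, then $c \in T'_{=x}$, so removing $c$ leaves $T'_{>x}$ intact; by $T' \not\subseteq A_{=x}$ we have $|T'_{>x}| \ge 1$, so together with $a^*$ there are at least $2$ such agents. If $x' > x$, then every agent of $T'-c$ already has destination $\ge x' > x$, giving $|T|-1 \ge 1$ such agents, plus $a^*$ for a total of $|T| \ge 2$. Either way, on the interval from $x$ up to the smallest destination of $T'-c+a^*$ exceeding $x$ (which is itself strictly greater than $x$), all of these agents count, so $n_{T'-c+a^*}(r) \ge 2$ and the integrand is at least $1/2$ there. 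Hence the entire integral is strictly positive, giving $\phi(T'-c+a^*,x_{a^*}) < \phi(T,x_{a^*})$, i.e., $a^*$ envies $c$, contradicting envy-freeness of $\cT$.

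I expect the main challenge to be identifying the right witness. More natural candidates---such as swapping an agent of $T_{=x}$ with one of $T'$, or swapping $a^*$ with the first agent of $T'$ beyond $x$---tend to produce equalities rather than strict envies. The chosen swap $T'-c+a^*$ exploits exactly the structural imbalance of the hypothesis: in $T$, the lone outlier $a^*$ bears the entire cost of traveling from $x$ to $x_{a^*}$, whereas in $T'-c+a^*$ that same segment is shared with at least one further passenger beyond $x$, producing a strict cost decrease.
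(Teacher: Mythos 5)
Your proof is correct and follows essentially the same route as the paper's: both identify the unique outlier $a^*\in T_{>x}$ and show it envies a first-drop-off agent of $T'$ via the swap $T'-c+a^*$, the only difference being that you bound the cost gap as a pointwise-positive integral while the paper writes out the explicit segment costs with $x''=\min\{x_{a^*},\max_{a\in T'}x_a\}$.
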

\begin{proof}
Let $T$ be a coalition in $\cT$ such that  $|T_{=x}|=|T|-1$ for $x=\min_{a\in T}x_a$, and let $a^*$ be the unique agent in $T_{>x}$.
Take any $T'\in \cT\setminus \{T\}$ with $|T'|=|T|$ and let $x'=\min_{a\in T'}x_a$.
Assume towards a contradiction that $x' \geq x$ and  $\max_{a \in T'}x_a > x$. 
Define  $x''=\min\{x_{a^*},\max_{a \in T'}x_a\}$.
By definition, we have  $x<x''\le x_{a^*}$.
We can see that $a^*$ envies every agent $a$ in $T'_{=x'}$, because
\begin{align*}
\phi(T, x_{a^*}) &= \frac{x}{|T|}+ (x_{a^*}-x)\\
&>\frac{x}{|T'|} + \frac{x''-x}{2} +(x_{a^*}-x'')\\
&\geq \phi(T' - a + a^*, x_{a^*}),
\end{align*}
a contradiction.
\end{proof}

\begin{table}
\centering
\setlength{\tabcolsep}{4pt}
\caption{Split patterns of type $x$, where $s$  denotes the first  taxi with an agent of type $x$}
\label{tbl:smallsplit}
\begin{tabular}{|c|c||ll|}\hline
$|T_s|$ & $|A_{=x}|$& split patterns&\\ \hline \hline 
\multirow{5}{*}{4}     & $0 \mod 4$      & $(4,4,\dots,4)$    &\rule[-1mm]{0mm}{4mm}\\\cline{2-4}
     & $1 \mod 4$      & $(4,4,\dots,4,1)$  &\rule[-1mm]{0mm}{4mm}\\\cline{2-4}
     & $2 \mod 4$      & $(4,4,\dots,4,2)$  &\rule[-1mm]{0mm}{4mm}\\\cline{2-4}
     & \multirow{2}{*}{$3 \mod 4$}    & $(4,4,\dots,4,2,1)$&{\small if 
     $|T_{s+\lceil |A_{=x}|/4\rceil}| = 4$}\\
      &           & $(4,4,\dots,4,3)$  &{\small otherwise}\rule[-1mm]{0mm}{4mm}\\\hline
\multirow{3}{*}{3}     & $0 \mod 3$        & $(3,3,\dots,3)$    &\rule[-1mm]{0mm}{4mm}\\\cline{2-4}
     & $1 \mod 3$      & $(3,3,\dots,3,1)$  &\rule[-1mm]{0mm}{4mm}\\\cline{2-4}
     & $2 \mod 3$      & $(3,3,\dots,3,2)$  &\rule[-1mm]{0mm}{4mm}\\\hline
\multirow{2}{*}{2}       & $0 \mod 2$        & $(2,2,\dots,2)$    &\rule[-1mm]{0mm}{4mm}\\\cline{2-4}
    & $1 \mod 2$      & $(2,2,\dots,2,1)$  &\rule[-1mm]{0mm}{4mm}\\\hline
1     &      & $(1,1,\dots,1)$    &\rule[-1mm]{0mm}{4mm}\\
\hline
\end{tabular}
\end{table}

The next lemma states that Table~\ref{tbl:smallsplit} represents  possible split patterns of type $x$, where 
the first column represents the size of $T_s$ for the first taxi $s$  with an agent of type $x$, the second column represents the size of $A_{=x}$,  and the last column represents possible split patterns of the corresponding cases.
For example, the first row in the table says that 
$|T_s|=4$ and $|A_{=x}|=0 \mod 4$ imply that  $(4,4,\dots, 4)$ is the unique split pattern of $x$. Thus Lemma \ref{lem:unique:split}
implies that  possible split patterns of $x$
are uniquely determined by $|T_s|$, $|A_{=x}|$,  and 
$|T_{s+\lceil |A_{=x}|/4\rceil}|$.

\begin{lemma}\label{lem:unique:split}
Suppose that $q_i \leq 4$ for $i \in [k]$. 
Let $\cT$ be an envy-free feasible allocation  satisfying \eqref{eq:wlg}, \eqref{eq:monotone:wlg},  and \eqref{eq:split:wlg}. 
Then for any type $x$,  split patterns of type $x$ 
have the form shown in Table~\ref{tbl:smallsplit}. 
\end{lemma}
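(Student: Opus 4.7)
The plan is to reduce the claim to a small combinatorial enumeration by first squeezing out all the generic structure from the Split lemma, and then invoking Lemma~\ref{lemma:split2} to kill the remaining stray patterns.

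First, by Lemma~\ref{lemma:split} (ii), every coalition $T\in\cT_x$ has size exactly $|T_s|$, and by Lemma~\ref{lemma:split} (iii), any two coalitions in $\cT_x$ whose type-$x$ counts coincide while being strictly less than $|T_s|$ would have to be entirely of type $x$, a contradiction. Combined with the ordering convention \eqref{eq:split:wlg}, this means that a split pattern of $x$ must have the shape
\begin{align*}
(\underbrace{|T_s|,\dots,|T_s|}_{\text{full entries}}, r_1, r_2,\dots, r_t),
\end{align*}
where the partial counts satisfy $|T_s|>r_1>r_2>\dots>r_t\ge 1$ and $r_1+\dots+r_t\equiv |A_{=x}|\pmod{|T_s|}$.

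Next I would enumerate, for each $|T_s|\in\{1,2,3,4\}$ and each residue class of $|A_{=x}|$ modulo $|T_s|$, the subsets of $\{1,\dots,|T_s|-1\}$ whose sum falls in that residue class. This yields exactly the patterns listed in Table~\ref{tbl:smallsplit} plus a handful of ``extra'' candidates, namely those containing $|T_s|-1$ together with at least one smaller partial entry: $(4,\dots,4,3,2)$ and $(4,\dots,4,3,1)$ for $|T_s|=4$ with residue $1$ and $2$, and $(3,\dots,3,2,1)$ for $|T_s|=3$ with residue $0$. Each of these contains a coalition $T$ with $|T_{=x}|=|T|-1$ together with another coalition $T'\in\cT_x$ of the same size whose first drop-off is also $x$ and which is \emph{not} contained in $A_{=x}$; Lemma~\ref{lemma:split2} applied to $T$ immediately rules them out.

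Finally, in the borderline case $|T_s|=4$ and $|A_{=x}|\equiv 3\pmod 4$, there remain two candidate patterns, $(4,\dots,4,3)$ and $(4,\dots,4,2,1)$, and I need to show they are distinguished exactly by the condition $|T_{s+\lceil |A_{=x}|/4\rceil}|=4$. Write $|A_{=x}|=4m+3$. If the realised pattern is $(4,\dots,4,2,1)$, then $T_{s+m+1}$ is the penultimate member of $\cT_x$, hence has size $|T_s|=4$ by Lemma~\ref{lemma:split} (ii). Conversely, if the pattern is $(4,\dots,4,3)$, then $T_{s+m}$ has $|(T_{s+m})_{=x}|=3=|T_{s+m}|-1$, so Lemma~\ref{lemma:split2} applies: the coalition $T_{s+m+1}$, which lies outside $\cT_x$ and therefore has first drop-off strictly larger than $x$, cannot have size $4$ (otherwise it would be forced to lie inside $A_{=x}$, contradicting that its first drop-off exceeds $x$). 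Hence $|T_{s+m+1}|<4$, matching the stated condition.

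The main obstacle will be packaging the last paragraph cleanly: the ``if'' direction is immediate from Lemma~\ref{lemma:split} (ii), but the ``only if'' direction requires a careful application of Lemma~\ref{lemma:split2} together with the size-monotonicity convention \eqref{eq:monotone:wlg} to ensure that $T_{s+m+1}$ is both well-defined as a non-empty coalition (when relevant) and correctly identified as the first taxi outside $\cT_x$; the rest is essentially a bookkeeping exercise.
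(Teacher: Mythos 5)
Your proof is correct and follows essentially the same route as the paper's: Lemma~\ref{lemma:split}(ii)/(iii) force equal coalition sizes and distinct partial counts, Lemma~\ref{lemma:split2} eliminates any pattern containing $|T_s|-1$ alongside a smaller partial entry, and Lemma~\ref{lemma:split2} again disambiguates the $|T_s|=4$, $|A_{=x}|\equiv 3 \pmod 4$ case via the size of the first taxi after $\cT_x$. The only blemishes are bookkeeping: $(4,\dots,4,3,1)$ belongs to residue $0$, the residue-$2$ extra is $(4,\dots,4,3,2,1)$ (both of which your general characterization of the extras nevertheless covers, so the elimination still goes through), and in the converse direction $T_{s+m+1}$ is the last, not penultimate, member of $\cT_x$ --- none of which affects the validity of the argument.
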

 
\begin{proof}
Recall that by Lemma \ref{lemma:split} (ii) all the taxis with an agent of type $x$ contain the same number of agents, i.e., 
\begin{align}\label{eq:same:size}
|T|=|T_s| \mbox{ for all } T\in \cT_x.
\end{align}
Moreover, by Lemma \ref{lemma:split} $($iii$)$, 
for any two taxis $T, T'\in \cT_x $,  
$|T_{=x}|, |T_{=x}| < |T_s|$ implies $|T_{=x}|\not= |T'_{=x}|$, and by 
 Lemma \ref{lemma:split2}, if a taxi $T\in \cT_x$ has 
$|T_{=x}|=  |T_s|-1$, then  we have $|T'_{=x}|=|T_s|$ for any $T' \in \cT_x$ other than $T$. 
These prove  that all the rows in Table \ref{tbl:smallsplit} are correct, except for the fourth row, i.e., the case in which $|T_s|=4$ and $|A_{=x}| =3 \mod 4$. 
For example, patterns $(4,4,\dots ,4,2,2)$ and $(4,4,\dots, 4,3,1)$  are not allowed in the first row, since the first one contains $2$ twice by Lemma \ref{lemma:split} $($iii$)$, while the second one contains $3$ and $1$ by Lemma \ref{lemma:split2}. 
We thus remain to show the case in which 
$|T_s|=4$ and $|A_{=x}| =3 \mod 4$.

In this case, by Lemmas \ref{lemma:split} $($iii$)$ and \ref{lemma:split2}, we have two possible patterns 
\[
(4,\dots,4,2,1) \text{ and } (4,\dots,4,3). 
\]
Let $|A_{=x}|=4d +3$ for some nonnegative integer $d$, and 
assume towards a contradiction that $|T_{s+d+1}| = 4$ and $(4,\dots,4,3)$ is a split pattern. 
In this case $s+d$ is the last taxi with an agent of type $x$, and 
we have 
$|T_{s+d}|=|T_{s+d+1}|=4$, $|(T_{s+d})_{=x}|=3$, and $(|T_{s+d+1})_{=x}|=0$. 
This contradicts Lemma \ref{lemma:split2}, since all the agents in taxi $s+d+1$ have types larger than $x$. Thus $(4,\dots,4,2,1)$ is a possible split pattern if $|T_{s+d+1}| = 4$. 
On the other hand, if $|T_{s+d+1}| <4$,  $(4,\dots,4,3)$ is a possible split pattern, since otherwise, 
taxi $s+d+1$ contains an agent of type $x$, which contradicts
(\ref{eq:same:size}). 
\end{proof}


In outline, our algorithm guesses the size of each coalition $T_i$ ($i\in [k]$) and greedily allocates each agent from the smallest type $x$ to the largest one. The formal description of the algorithm is given by Algorithm~\ref{alg:le4}. More precisely, let $\mathcal{M}$ be the set of $k$-tuples of integers $(\mu_1,\dots,\mu_k)$ such that
$\mu_1\ge\dots\ge\mu_k\ge 0$, 
$\sum_{i\in[k]}\mu_{i}=n$, and 
$\mu_i\le q_{i}$ for all $i\in[k]$. 
If $k \geq n$, it always has an envy-free feasible allocation, by allocating each agent to each taxi. 
Thus we assume that $k < n$.
If  $q_i\le 4$ for all $i\in [k]$, we have $|\mathcal{M}|=O(n^4)$, since each of the first $k_4$ taxis contains
four agents, each of the next $k_3$ taxis contains three agents, and so on. 
Our algorithm enumerates all the $k$-tuples in $\mathcal{M}$ in polynomial time, and for each $(\mu_1,\dots,\mu_k)\in\mathcal{M}$, applies a greedy method based on Locality property. 
Namely, 
we greedily add agents with the smallest available type $x$ to taxis $i$ with minimum cost $\psi(T_i,x,\mu_i)$.
Recall the discussion in Section \ref{sec-constanttaxix}: the greedy method does not provide an envy-free feasible allocation if multiple taxis $i$ attain the minimum cost $\psi(T_i,x,\mu_i)$.
However, by making use of Lemma \ref{lem:unique:split}, 
we can show that it works if we apply the  simple rule that chooses the smallest $i$  with minimum $\psi(T_i,x,\mu_i)$, except for the case corresponding to 
the fourth row in Table \ref{tbl:smallsplit}.

\begin{algorithm}
\caption{Polynomial-time algorithm for taxis with capacity at most $4$}\label{alg:le4}
\ForEach{$(\mu_1,\dots,\mu_k)\in\mathcal{M}$}{
  Let $T_i\ot\emptyset$ for each $i\in [k]$\;
  \For(\tcp*[h]{\!\!{\scriptsize from\! nearest \!to\! farthest}\hspace*{-5mm} }){$a\ot 1,2,\dots,n$}{
    Let $i^*$ be the smallest index $i$ that minimizes
    $\psi(T_i,x_a,\mu_i)$
    among taxis $i$ with $|T_i|<\mu_i$\label{line:greedy2}\;
    \label{line:if} \If{$\mu_{i^*}=\mu_{i^*+1}=4$, $T_{i^*}=\{b,c\}$, and $x_{a}=x_{b}=x_c<x_{a+1}$}{
        Set $T_{i^*+1}\ot T_{i^*+1}+a$\ ; \label{line:exception}
    }
    \lElse{Set $T_{i^*}\ot T_{i^*}+a$} \label{line:normal}
  }
  \lIf{$(T_1,\dots,T_k)$ is envy-free}{\Return $(T_1,\dots,T_k)$}
}
\Return ``No envy-free feasible allocation''\;
\end{algorithm}

We formally show that Algorithm~\ref{alg:le4} computes an envy-free feasible allocation in polynomial time if a given instance contains such an allocation.

\begin{proof}[Proof of Theorem \ref{thm:constant:capacity}]
It is not difficult see that Algorithm~\ref{alg:le4} returns an envy-free feasible allocation if it returns an allocation.
Suppose that there exists an envy-free feasible allocation $\cT^*$ with $\mu^*_i=|T^*_i|$ for all $i\in[k]$. Without loss of generality, we assume that $\cT^*$ satisfies \eqref{eq:wlg}, \eqref{eq:monotone:wlg}, and \eqref{eq:split:wlg}. 
We show that the algorithm computes an envy-free allocation isomorphic to $\cT^*$ if the for-loop of  $(\mu_1^*,\dots ,\mu^*_k)$ is applied, which proves the correctness of the algorithm. 
We thus restrict our attention to  the for-loop of $(\mu_1^*,\dots ,\mu^*_k)$, and inductively prove that 
the partial allocation $\cT^{(j)}$ after the $j$-th iteration of $a$ is extendable to an envy-free feasible (complete) allocation isomorphic to $\cT^*$. 
Before the induction, we note that 
 allocation $\cT^{(n)}$ is feasible and satisfies \eqref{eq:monotone:wlg} by the assumption on $\cT^*$. 
Moreover, 
at any iteration of $a$, agent $a$ is allocated to the taxi which already has an agent or the first taxi with no agent, i.e., for any $j \in [n]$ and for any $i, \ell \in [k]$ with $i \leq \ell$, we have 
\begin{equation}
    \label{eq-algoproof3}
T^{(j)}_i=\emptyset \ \Longrightarrow \ T^{(j)}_\ell=\emptyset \mbox{ }, 
\end{equation}
which implies 
\begin{equation}
    \label{eq-algoproof32}
\min_{a\in T^{(j)}_1}x_a\le\dots\le\min_{a\in T^{(j)}_k}x_a 
\ \ \mbox{  for any $j \in [n]$}. 
\end{equation}
By substituting $j$ by $n$, we have \eqref{eq:wlg}, 
and \eqref{eq:split:wlg} is satisfied 
by \eqref{eq-algoproof32}, together with the choice of $i^*$ in line 4 of the algorithm. 
Let us now apply the induction. 
By Lemma \ref{lem:greedy:psi},  
it is clear that $\cT^{(1)}$ is extendable to a desired allocation. 
Assuming that $\cT^{(j)}$ is extendable to a desired allocation, we consider the $(j+1)$-th iteration of $a$.  
Let 
\[Q=\argmin\{\psi(T^{(j)}_i,x_{j+1},\mu_i)\mid |T^{(j)}_i| <\mu_i\}.  \]
If $Q$ contains a  taxi $i$ such that $T^{(j)}_i$  has an agent  of type smaller than $x_{j+1}$, 
no other taxi in $Q$ has such a property, since otherwise 
Lemma \ref{lem:greedy:psi} provides a contradiction. 
Moreover, $j+1$ must be contained in such a taxi $i$ again by Lemma \ref{lem:greedy:psi}. 
Since the algorithm chooses such a taxi $i$ by \eqref{eq-algoproof32}, 
 $\cT^{(j+1)}$ is extendable to a desired allocation.
On the other hand, If $Q$ contains no such taxi, i.e., 
a taxi $i$ in $Q$ satisfies either (i) $T^{(j)}_i$ is empty or  (2) it  consists of agents  of type $x_{j+1}$, 
then the algorithm again chooses a correct $i^*$, since it fits with possible split patterns in  Lemma \ref{lem:unique:split}.
Thus $\cT^{(j+1)}$ is extendable to a desired allocation.
This completes the induction. 

It remains to show the time complexity of the algorithm. 
Note that  $|\mathcal{M}|=O(n^4)$ and $\mathcal{M}$ is constructed in the same amount of time. 
For each $(\mu_1,\dots,\mu_k)\in\mathcal{M}$, 
the for-loop  is executed in   $O(n^2)$ time.
 Therefore, in total,  the algorithm requires $O(n^4\times n^2)=O(n^6)$ time. 
\end{proof}

By the proof above, if there exist an envy-free feasible allocation consistent with  $(\mu_1,\dots,\mu_k)\in\mathcal{M}$, then it is unique up to isomorphism.  
We also remark that the greedy algorithm above  cannot  be directly extended to the case of constant capacity, 
since split patterns are not uniquely determined, even when the maximum capacity is at most $5$.

\subsection{Small number of types}
In this section, we focus on Split lemma of envy-free allocations. We represent envy-free allocations by directed graphs $G$ together with size vectors $\lambda$. 
We provide several structural properties of $G$ and $\lambda$.
Especially, we show that $G$ and $\lambda$ define a unique envy-free allocation (up to isomorphism), $G$ is a star-forest, and $\lambda$ forms
 semi-lattice. Based on their properties, we show that an envy-free feasible allocation can be computed in FPT time with respect to the number of destination types. 
 
Let $V=\{x_a\mid a\in A\}$ be the set of destination types, and let $p=|V|$. 
For an allocation $\cT=(T_1,\dots,T_k)$, we define its \emph{allocation (di)graph} $G^\cT=(V,E)$ by 
\begin{align*}
E&=\bigcup_{T\in\cT}\left\{(y,z)\in V^2 \,\middle|\!
{\small\begin{tabular}{l}
$y,z\in \{x_a\mid a\in T\}$, 
$y<z$,\\
$\not\exists a\in T: y<x_a<z$
\end{tabular}}\!\!\!\!
\right\}.
\end{align*}
Namely, the allocation graph $G^\cT$ contains an directed edge $(y, z)$ if and only if an agent of type $y$ drops off just after an agent of type $z$ in some coalition $T\in\cT$. 
By definition, $G^\cT$ is acyclic because every edge is oriented from a smaller type to a larger type, i.e.,  $(y,z) \in E$ implies $y<z$. 
We assume that all graphs discussed in this section satisfy the condition.

A graph is called a {\em star-tree} if it is a rooted (out-)tree such that all vertices except the root have out-degree at most $1$, and  a {\em star-forest} if each connected component is a star-tree. 
Then (i) in  Split lemma implies that $G^\cT$ is a star-forest.
See the allocation graph for an envy-free feasible allocation is depicted in Fig.~\ref{fig:trees}.

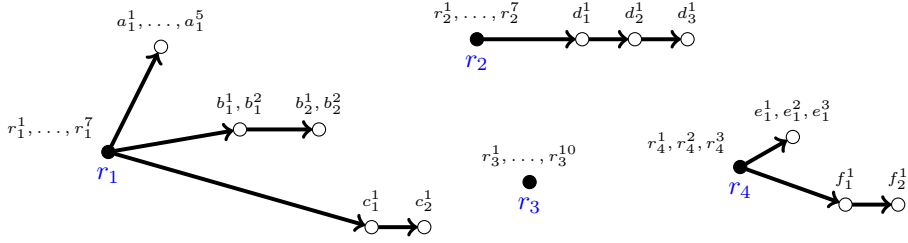
\begin{figure*}[htbp]
\centering
\begin{tikzpicture}[xscale=.7,p/.style={circle,fill=white,inner sep=0pt,outer sep=0pt,minimum size=5pt,draw}]
\node[p,fill=black,label={[blue]below:$r_1$},label={[font=\tiny]above left:$r_1^1,\dots,r_1^7$}] at (0,0) (r1) {}; 
\node[p,label={[font=\tiny]above:$a_1^1,\dots,a_1^5$}] at (1, 1.4) (a1) {}; 
\node[p,label={[font=\tiny]above:$b_1^1,b_1^2$}] at (2.5, 0.3) (b1) {}; 
\node[p,label={[font=\tiny]above:$b_2^1,b_2^2$}] at (4, 0.3) (b2) {}; 
\node[p,label={[font=\tiny]above:$c_1^1$}] at (5, -1) (c1) {}; 
\node[p,label={[font=\tiny]above:$c_2^1$}] at (6, -1) (c2) {}; 

\node[p,fill=black,label={[blue]below:$r_2$},label={[font=\tiny]above:$r_2^1,\dots,r_2^7$}] at (7,1.5) (r2) {}; 
\node[p,label={[font=\tiny]above:$d_1^1$}] at (9, 1.5) (d1) {}; 
\node[p,label={[font=\tiny]above:$d_2^1$}] at (10, 1.5) (d2) {}; 
\node[p,label={[font=\tiny]above:$d_3^1$}] at (11, 1.5) (d3) {}; 

\node[p,fill=black,label={[blue]below:$r_3$},label={[font=\tiny]above:$r_3^1,\dots,r_3^{10}$}] at (8,-.4) (r3) {}; 

\node[p,fill=black,label={[blue]below:$r_4$},label={[font=\tiny]above left:$r_4^1,r_4^2,r_4^3$}] at (12,-0.2) (r4) {}; 
\node[p,label={[font=\tiny]above:$e_1^1,e_1^2,e_1^3$}] at (13, .2) (e1) {}; 
\node[p,label={[font=\tiny]above:$f_1^1$}] at (14, -.7) (f1) {}; 
\node[p,label={[font=\tiny]above:$f_2^1$}] at (15, -.7) (f2) {}; 

\foreach \u/\v in {r1/a1,r1/b1,b1/b2,r1/c1,c1/c2,r2/d1,d1/d2,d2/d3,r4/e1,r4/f1,f1/f2}
  \draw[line width=1.5pt,->] (\u) -- (\v);
\end{tikzpicture}
\caption{An example of the allocation graph for an envy-free feasible allocation $\cT=(T_1,T_2,\ldots,T_9)$ where 
$T_1=\{r_1^1,a_1^1,a_1^2,a_1^3,a_1^4,a_1^5\}$, 
$T_2=\{r_1^2,r_1^3,b_1^1,b_1^2,b_2^1,b_2^2\}$, 
$T_3=\{r_1^4,r_1^5,r_1^6,r_1^7,c_1^1,c_2^1\}$, 
$T_4=\{r_2^1,r_2^2,d_1^1,d_2^1,d_3^1\}$, 
$T_5=\{r_2^3,r_2^4,r_2^5,r_2^6,r_2^7\}$, 
$T_6=\{r_3^1,r_3^2,r_3^3,r_3^4,r_3^5\}$, 
$T_7=\{r_3^6,r_3^7,r_3^8,r_3^9,r_3^{10}\}$, 
$T_8=\{r_4^1,e_1^1,e_1^2,e_1^3\}$, 
$T_9=\{r_4^2,r_4^3,f_1^1,f_2^1\}$. 
There are seven agents of type $r_1$ $(r^1_1,\dots,r^7_1)$, seven agents of type $r_2$ $(r^1_2,\dots,r^7_2)$, ten agents of type $r_3$ $(r^1_3,\dots,r^{10}_3)$, and three agents of type $r_4$ $(r^1_4,r^2_4,r^3_4)$.}
\label{fig:trees}
\end{figure*}

Now, we will explore the relationship between $\cT$ and $G^\cT$, implied by Split lemma. 
Formally, let $\cC=\{C_1,\dots,C_t\}$ be the family of the vertex sets of connected components in $G^\cT$. 
Let $r_j$ be the root of $C_j$, i.e.,  $r_j=\min_{x \in C_j} x$, and let $d_j$ be out-degree of $r_j$.    
We assume that the components are arranged in ascending order of the root, i.e., $r_1<\dots<r_t$.
Let $\cT_j$ be the family of  coalitions $T \in \cT$ in which all members  have types in $C_j$. 
To see this, we write $T_{\in C}$ to denote $T_{\in C}=\{a\in T\mid x_a\in C\}$ for a coalition $T$ and a set of types $C$; then $\cT_j=\{T \in \cT \mid T=T_{\in C_j}\}$. 
By definition of $G^\cT$,  $\{\cT_1, \dots , \cT_t\}$ is a partition of $\cT$.

By star-tree property of $C_j$, vertices $C_j\setminus\{r_j\}$ forms $d_j$ paths in $G^\cT$. 
Let $C_j^\ell$ $(\ell =1, \dots, d_j)$ be the vertex sets of such paths.
Then by Split lemma, we have the following three conditions: \begin{align}
&\mbox{each $T \in \cT_j$ satisfies 
either $\emptyset \not= T \subseteq A_{=r_j}$}\nonumber\\
&\hspace*{1cm}\mbox{or 
$A_{\in C_j^\ell} \subsetneq T \subseteq A_{\in C_j^\ell} \cup A_{=r_j}$ for some $\ell$.} \label{eq-max1}\\
&\mbox{$|T|=|T'|$ holds for any $T, T'\in \cT_j$, and}\label{eq-max2}\\
&\mbox{$|A_{\in C_j^\ell}|\not=|A_{\in C_j^h}|$ for any distinct $\ell,h \in [d_j]$}. \label{eq-max3}
\end{align}

By (\ref{eq-max1}), some agents of type $r_j$ form a coalition $T$ or some agents of type $r_j$ together with the agents of types in $C_j^\ell$ form a coalition. 
It follows from (\ref{eq-max2}) that each coalition in $\cT_j$ has the same size $\lambda_j$. 
Let us call $\lambda^\cT=(\lambda^\cT_1, \dots , \lambda^\cT_t)$ the {\em size vector} of $\cT$.  
In summary, we have the following result as stated in Lemma \ref{lem:starforest}, where isomorphism $\simeq$ of two allocations $\cT=(T_1, \dots , T_\alpha)$ and $\cT'=(T'_1, \dots , T'_{\beta})$ is defined as follows:
for two coalitions $T$ and $T'$, we write $T\simeq T'$ to mean that $T$ and $T'$ contains the same number of agents for each type, i.e., $|T_{=y}|=|T'_{=y}|$ for all $y\in V$; for two allocations $\cT$ and $\cT'$, we write $\cT\simeq \cT'$ if $|\cT|=|\cT'|$ and there exists a permutation $\sigma\colon[\alpha]\to[\alpha]$ such that $T_i\simeq T'_{\sigma(i)}$ for all $i\in [\alpha]$. 

\begin{lemma}\label{lem:starforest}
Suppose that an allocation $\cT$ satisfies the conditions in Lemma \ref{lemma:split}. 
Then $G=G^\cT$ and  $\lambda=\lambda^\cT$ satisfy the following conditions: 
\begin{align}
&\text{$G$ is a star-forest with (\ref{eq-max3}) for any $j$ in $[t]$, and}\label{eq:splitq1}\\
&\text{for any $j$ in $[t]$, 
$\lambda_j$ is a divisor of $|A_{\in C_j}|$}\nonumber\\ 
&\hspace*{1.1cm}\text{such that $\max_{\ell\in[d_j]}|A_{\in C_j^\ell}| \leq \lambda_j \leq |A_{\in C_j}|/d_j$. }\label{eq:splitq2}
\end{align}
Conversely, if $G$ and $\lambda$ satisfy the conditions above, then there exists a unique allocation $\cT$ (up to isomorphism) satisfying $G^\cT=G$, $\lambda^\cT=\lambda$, and the conditions in Lemma~\ref{lemma:split}. 
\end{lemma}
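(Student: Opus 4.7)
The plan is to prove the two implications in turn, followed by uniqueness.

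For the forward direction, I would first establish that $G^\cT$ is a star-forest by contrapositive: if a non-root vertex $y$ had two distinct outgoing edges, then agents of type $y$ would be split between two coalitions in which $y$ is not the first drop-off, violating Lemma~\ref{lemma:split}(i). Next, for (\ref{eq-max3}), I fix a component $C_j$ with root $r_j$ and consider two distinct branches $C_j^\ell, C_j^h$, each contained in a unique coalition $T, T' \in \cT_j$ (unique because any non-root type in $C_j^\ell$ cannot be split). Lemma~\ref{lemma:split}(ii) gives $|T|=|T'|=\lambda_j$, and Lemma~\ref{lemma:split}(iii) rules out $|T_{=r_j}|=|T'_{=r_j}|$ (otherwise both coalitions would have to consist only of type-$r_j$ agents, contradicting that they contain branch agents). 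Combined with $|T|=|T_{=r_j}|+|A_{\in C_j^\ell}|$, this gives $|A_{\in C_j^\ell}|\neq|A_{\in C_j^h}|$. The bounds in (\ref{eq:splitq2}) then follow from $\cT_j$ partitioning $A_{\in C_j}$ into equal-sized coalitions of size $\lambda_j$ (yielding $\lambda_j \mid |A_{\in C_j}|$), each branch coalition containing all of $A_{\in C_j^\ell}$ (lower bound), and the presence of at least $d_j$ coalitions in $\cT_j$ (upper bound).

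For the converse, given $G$ and $\lambda$ satisfying the stated conditions, I would construct $\cT$ component-by-component: for each $j\in[t]$ and each branch index $\ell\in[d_j]$, form a coalition containing all of $A_{\in C_j^\ell}$ together with exactly $\lambda_j-|A_{\in C_j^\ell}|$ agents of type $r_j$, and then partition the remaining type-$r_j$ agents into pure coalitions of size $\lambda_j$. The conditions in (\ref{eq:splitq2}) guarantee this is well-defined: branch coalitions fit into size $\lambda_j$, enough type-$r_j$ agents remain to fill the branches, and the leftover count is a multiple of $\lambda_j$. A short verification then confirms $G^\cT=G$, $\lambda^\cT=\lambda$, and that the only split types under this construction are the roots $r_j$; for these, parts (i) and (ii) of Lemma~\ref{lemma:split} are immediate, and (iii) follows from (\ref{eq-max3}) together with the observation that branch coalitions have strictly fewer than $\lambda_j$ type-$r_j$ agents (since $|A_{\in C_j^\ell}|\geq 1$) while pure coalitions have exactly $\lambda_j$.

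Uniqueness up to isomorphism then follows because the star-forest $G$ forces each branch $C_j^\ell$ to lie in a single coalition (by Lemma~\ref{lemma:split}(i) applied to the non-root types, which cannot split) and $\lambda$ fixes the common size of coalitions in each $\cT_j$; the only remaining freedom is the relabelling of anonymous type-$r_j$ agents across the coalitions of $\cT_j$, which is precisely an isomorphism in the paper's sense. The main subtlety I anticipate is the bookkeeping in the forward direction: one must carefully distinguish `branch' coalitions in $\cT_j$ from `pure' coalitions (those contained in $A_{=r_j}$) and apply Lemma~\ref{lemma:split}(iii) to branch coalitions only, to force distinct branch sizes. Once this case analysis is nailed down, the divisibility arithmetic and the verification in the converse direction are routine.
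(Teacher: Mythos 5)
Your proposal is correct and follows essentially the same route as the paper, which derives the three structural conditions \eqref{eq-max1}--\eqref{eq-max3} from the Split lemma for the forward direction and proves the converse by the same explicit construction (each branch $C_j^\ell$ packed into one coalition padded with $\lambda_j-|A_{\in C_j^\ell}|$ root agents, the remaining root agents split into pure coalitions of size $\lambda_j$). You supply more of the case analysis than the paper does, but the ideas coincide.
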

\begin{proof}
Suppose that an allocation $\cT$  satisfies the conditions in Lemma \ref{lemma:split}.
It is not difficult to see that (\ref{eq:splitq1}) follows from the discussion above and  (\ref{eq-max3}), and  (\ref{eq:splitq2}) follows from  (\ref{eq-max1}) and (\ref{eq-max2}). 
Conversely, if $G$ and $\lambda$ satisfy  (\ref{eq:splitq1}) and  (\ref{eq:splitq2}), then we can construct a unique allocation $\cT$ up to isomorphism that satisfies  (\ref{eq-max1}), (\ref{eq-max2}), and   (\ref{eq-max3}). Thus $\cT$ satisfies  the conditions in Lemma \ref{lemma:split}.
\end{proof}

We note that a unique allocation $\cT$ in the converse statement can be computed in polynomial time if $G$ and $\lambda$ are given. 
Thus, a naive approach to find an envy-free feasible allocation is to enumerate all possible $G$ and $\lambda$, and then check if they provide a envy-free feasible allocation. 
Note that  the number of star-forests is at most $p^p$,  because the in-degree of every node is at most one.
However, we may have $n^{\Omega(p)}$ many candidates of $\lambda$, even if a star-forest $G$ is fixed in advance.  
To overcome this difficulty, we show that for a given star-forest $G$,  the size vectors $\lambda$  such that $G$ and $\lambda$ provide envy-free feasible allocations form a semi-lattice. 
More precisely, for a star-forest $G$, let $\Lambda_{G}$ denote the set of size vectors $\lambda$ such that 
$G$ and $\lambda$ provide envy-free feasible allocations. 
Then we have the following structural property of $\Lambda_G$

\begin{lemma}\label{lem:lattice}
For any star-forest $G$, $\Lambda_G$ is an upper semilattice with respect to the componentwise max operation $\vee$, i.e., $\lambda, \lambda' \in  \Lambda_G$ implies $\lambda \vee \lambda' \in \Lambda_G$
\end{lemma}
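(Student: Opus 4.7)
The plan is to use Lemma~\ref{lem:starforest} to build the canonical allocation $\cT(G,\mu)$ associated with $G$ and $\mu\coloneqq\lambda\vee\lambda'$, and then verify that it is envy-free and feasible. The structural conditions in (\ref{eq:splitq2}) on $\mu$ are immediate: $\mu_j=\max(\lambda_j,\lambda'_j)$ equals one of $\lambda_j,\lambda'_j$, both of which divide $|A_{\in C_j}|$, and the bounds $\max_{\ell}|A_{\in C_j^\ell}|\le\mu_j\le|A_{\in C_j}|/d_j$ are preserved under componentwise max.

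For the fit-into-taxi feasibility, I would use the Hall-type characterization: the multiset of coalition sizes fits into the sorted capacities iff for every threshold $c$, $\sum_{j:\,\mu_j\ge c}|A_{\in C_j}|/\mu_j\le|\{i\in[k]:q_i\ge c\}|$. Applying Lemma~\ref{lemma:head_size} to $\cT(G,\lambda)$ and $\cT(G,\lambda')$ gives $\lambda_j\ge\lambda_{j'}$ and $\lambda'_j\ge\lambda'_{j'}$ whenever $r_j<r_{j'}$, which rules out both $\{j:\lambda_j\ge c>\lambda'_j\}$ and $\{j:\lambda'_j\ge c>\lambda_j\}$ being nonempty at the same $c$. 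Hence for each $c$, one of $\{j:\lambda_j\ge c\}$ and $\{j:\lambda'_j\ge c\}$ contains the other, and the larger one coincides with $\{j:\mu_j\ge c\}$; the count for $\cT(G,\mu)$ is then dominated by the corresponding count for $\cT(G,\lambda)$ or $\cT(G,\lambda')$, so feasibility transfers.

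The core of the proof is envy-freeness. Take $a\in B\in\cT(G,\mu)_j$ and $b\in B'\in\cT(G,\mu)_{j'}$, and split into cases based on which of $\lambda,\lambda'$ realizes the max at $j$ and at $j'$. If both are realized by $\lambda$ (respectively both by $\lambda'$), then $B$ and $B'$ are isomorphic to the coalitions in $\cT(G,\lambda)$ (resp.\ $\cT(G,\lambda')$) containing $a$ and $b$, and envy-freeness is inherited directly. The nontrivial ``off-diagonal'' case, say $\mu_j=\lambda_j\ge\lambda'_j$ and $\mu_{j'}=\lambda'_{j'}\ge\lambda_{j'}$, is handled by a monotonicity bridge. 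Let $B^{\lambda'}$ be the coalition in $\cT(G,\lambda')_j$ whose structural type matches $B$ (pure-root, or the path coalition using the same $C_j^\ell$). Then $B$ has exactly $\mu_j-\lambda'_j$ additional agents of type $r_j$ compared with $B^{\lambda'}$, and the non-root part is identical, so a direct computation from the Shapley formula yields
\[
\phi(B,x_a)-\phi(B^{\lambda'},x_a)=r_j\left(\frac{1}{\mu_j}-\frac{1}{\lambda'_j}\right)\le 0.
\]
Since $\mu_{j'}=\lambda'_{j'}$, the coalition $B'$ is isomorphic to $b$'s coalition in $\cT(G,\lambda')$, so envy-freeness of $\cT(G,\lambda')$ gives $\phi(B^{\lambda'},x_a)\le\phi(B'-b+a,x_a)$. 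Chaining yields $\phi(B,x_a)\le\phi(B'-b+a,x_a)$, so $a$ does not envy $b$.

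The main obstacle I anticipate is making the ``corresponding coalition'' $B^{\lambda'}$ well-defined when $a$ is of type $r_j$, since a root agent may sit in a pure-root or a path coalition depending on the chosen representative of the isomorphism class. This is always possible because both $\cT(G,\mu)_j$ and $\cT(G,\lambda')_j$ contain the same $d_j$ path coalitions (one per $C_j^\ell$), and $\cT(G,\lambda')_j$ has at least $n_j/\lambda'_j-d_j\ge n_j/\mu_j-d_j$ pure-root coalitions, so a structurally matching coalition is always available to play the role of $B^{\lambda'}$.
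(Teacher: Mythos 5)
Your proof is correct, but it takes a genuinely different route from the paper. The paper does not prove Lemma~\ref{lem:lattice} directly: it first establishes the stronger Lemma~\ref{lemma:FPT_main} (if the componentwise maximum of a product set $\Lambda$ is not in $\Lambda_G$, some coordinate $\ell$ can have its maximum deleted without losing any point of $\Lambda\cap\Lambda_G$), via a case analysis over which of conditions (\ref{eq:splitq1})--(\ref{eq:splitq5}) or envy-freeness fails, and then derives the semilattice property by contradiction (taking $\Lambda_i=[(\lambda\vee\lambda')_i]$ and observing that no index $\ell$ could satisfy (\ref{eq-lattice000}) since the left-hand side drops at least one of $\lambda,\lambda'$). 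You instead construct the canonical allocation for $\mu=\lambda\vee\lambda'$ and verify feasibility and envy-freeness head-on. Both arguments ultimately rest on the same monotonicity fact --- an agent's cost in a coalition of $\cT_j$ equals $r_j/\lambda_j$ plus a path term independent of $\lambda_j$, and the replacement cost into $\cT_{j'}$ depends only on $\lambda_{j'}$ --- which is exactly what the paper uses in its final bullet (``$a$ still envies $a'$ in the allocation provided by $G$ and $\lambda'$''). What your route buys is a self-contained proof of the lemma, including an explicit Hall-type feasibility transfer using the common monotone ordering from Lemma~\ref{lemma:head_size}, a point the paper handles only implicitly through conditions (\ref{eq:splitq4}) and (\ref{eq:splitq5}) inside the larger case analysis; what the paper's route buys is that the same work simultaneously yields the coordinate-deletion step that drives Algorithm~\ref{alg:FPT_types}, which the semilattice property alone would not. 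Your worry about choosing $B^{\lambda'}$ for a root-type agent is in fact a non-issue, since every coalition of $\cT(G,\lambda')_j$ charges a type-$r_j$ agent exactly $r_j/\lambda'_j$; otherwise all steps, including the off-diagonal chaining $\phi(B,x_a)\le\phi(B^{\lambda'},x_a)\le\phi(B'-b+a,x_a)$, are sound.
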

In this section, we show the following lemma, which is stronger than both  Lemmas \ref{lem:lattice} and \ref{lemma:FPT_main-simple}.  

\begin{lemma}\label{lemma:FPT_main}
Let $G$ be a star-forest, and let $\Lambda=\prod_{j\in[t]}\Lambda_j$  be a non-empty set in $\mathbb{Z}_{>0}^t$. 
If the maximum vector  $\lambda=(\max \Lambda_j)_{j\in[t]}$ does not belong to $\Lambda_G$, then there exists an index $\ell\in[t]$ such that 
\begin{equation}
\label{eq-lattice000}
\Bigl((\Lambda_{\ell}-\max\Lambda_{\ell}) \times \prod_{j\in[t]-\ell}\Lambda_j \Bigr)\cap \Lambda_G= \Lambda \cap \Lambda_G.    
\end{equation} 
In addition, such an index $\ell$ can be computed in polynomial time.
\end{lemma}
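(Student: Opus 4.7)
\emph{Overall approach.} The plan is to split the proof into a structural existence argument for the index $\ell$ and a polynomial-time constructive procedure for finding it. Both parts are anchored on two tools already in hand: the upper semilattice property of $\Lambda_G$ from Lemma~\ref{lem:lattice}, and the polynomial-time $\Lambda_G$-membership test enabled by Lemma~\ref{lem:starforest} (construct the unique allocation associated to $(G,\mu)$ and check envy-freeness and feasibility).

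\emph{Existence of $\ell$.} I argue by contradiction. Suppose no $\ell \in [t]$ satisfies the claimed equality. Then for each $\ell$, the set $\{\mu \in \Lambda \cap \Lambda_G : \mu_\ell = \max \Lambda_\ell\}$ is nonempty, so we may pick some $\mu^{(\ell)} \in \Lambda \cap \Lambda_G$ with $\mu^{(\ell)}_\ell = \max \Lambda_\ell$. Let $\mu^\star = \bigvee_{\ell \in [t]} \mu^{(\ell)}$ denote the componentwise maximum. Iterating Lemma~\ref{lem:lattice} places $\mu^\star \in \Lambda_G$. For each coordinate $j$ we have $\mu^\star_j \le \max \Lambda_j$ (since every $\mu^{(\ell)} \in \Lambda$) and $\mu^\star_j \ge \mu^{(j)}_j = \max \Lambda_j$, so $\mu^\star = \lambda$. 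Hence $\lambda \in \Lambda_G$, contradicting the hypothesis.

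\emph{Polynomial-time computation of $\ell$.} I iterate $\ell$ over $[t]$ and test whether the slice $S_\ell \coloneqq \{\mu \in \Lambda \cap \Lambda_G : \mu_\ell = \max \Lambda_\ell\}$ is empty, returning the first empty slice. To test emptiness of $S_\ell$, I exploit semilattice closure: if $S_\ell$ is nonempty, it contains a unique pointwise-maximum element $\hat\mu^{(\ell)}$, which I attempt to construct by a greedy ascent. Fix $\mu_\ell = \max \Lambda_\ell$; initialize the remaining coordinates at their minima; then, for each $j \ne \ell$, repeatedly raise $\mu_j$ to the next larger value of $\Lambda_j$ whenever the resulting vector lies in $\Lambda_G$. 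Lemma~\ref{lem:lattice} makes the procedure order-independent and guarantees convergence to $\hat\mu^{(\ell)}$ whenever $S_\ell \ne \emptyset$. At most $O(np)$ single-coordinate raises are attempted overall, and each $\Lambda_G$-membership check is polynomial via Lemma~\ref{lem:starforest}.

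\emph{Main obstacle.} The delicate point is the initialization of the ascent: the vector with $\mu_\ell = \max \Lambda_\ell$ and other coordinates at their minima may itself lie outside $\Lambda_G$, and a pure greedy ascent cannot enter $\Lambda_G$ from outside. I expect to resolve this by pairing the ascent with a downward probe from $\lambda$ that is guided by the violations exposed by the membership test; each violation localizes (via the star-forest characterization in Lemma~\ref{lem:starforest}) to a specific coordinate that must shrink, which lets me rule out an entire direction at once. Correctness is unconditional thanks to the existence argument above, which guarantees at least one $\ell$ will succeed; the principal technical work is to prove that the combined upward-downward exploration visits only polynomially many candidates before identifying a suitable $\ell$.
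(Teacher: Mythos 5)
Your existence argument is elegant but, as the paper is structured, circular: you invoke Lemma~\ref{lem:lattice} (the semilattice property of $\Lambda_G$) as a black box, yet the paper never proves that lemma independently --- it explicitly derives it \emph{as a corollary of} Lemma~\ref{lemma:FPT_main} (see the paragraph ``We note that Lemma~\ref{lemma:FPT_main} implies semilattice property of $\Lambda_G$''). Modulo that dependency your join argument is clean (each slice $\{\mu\in\Lambda\cap\Lambda_G:\mu_\ell=\max\Lambda_\ell\}$ being nonempty would force $\lambda=\bigvee_\ell\mu^{(\ell)}\in\Lambda_G$), but it only shifts the burden: to prove the semilattice property directly one essentially has to carry out the same condition-by-condition analysis the paper uses to prove Lemma~\ref{lemma:FPT_main} in the first place.

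The second, more serious gap is the algorithmic half, which you acknowledge but do not close. Join-closure of a finite set guarantees a unique maximum of each nonempty slice, but it does \emph{not} guarantee that single-coordinate greedy ascent through members of $\Lambda_G$ reaches it (there may be no element of the slice between the current point and the maximum that differs in only one coordinate), and you concede the initialization may lie outside $\Lambda_G$ entirely. Your proposed ``combined upward-downward exploration'' is left as the ``principal technical work,'' so the polynomial-time claim is not established. The paper avoids this machinery altogether: Lemmas~\ref{lem:starforest} and~\ref{lemma-monotonefeasible} characterize $\Lambda_G$ by explicit conditions \eqref{eq:splitq1}--\eqref{eq:splitq5} plus an envy check on the unique allocation built from $(G,\lambda)$, and the proof shows that whichever condition the maximum vector $\lambda$ violates, the violation localizes to a specific coordinate $\ell$ (the offending component $j$ for \eqref{eq:splitq2}, \eqref{eq:splitq3}, \eqref{eq:splitq5}, or the component of the envied agent) and \emph{persists} for every $\lambda'\in\Lambda$ with $\lambda'_\ell=\max\Lambda_\ell$, since lowering other coordinates only worsens the violation. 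That single observation yields existence and the polynomial-time computation simultaneously; you need an argument of this monotone-persistence type to complete your proof.
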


We note that Lemma \ref{lemma:FPT_main} implies semilattice property of $\Lambda_G$. 
To see this, suppose that $\Lambda_G$ is not a semilattice, i.e., there exists two size vectors $\lambda, \lambda' \in \Lambda_G$ such that $\lambda \vee \lambda' \not\in \Lambda_G$. 
Then we define $\Lambda$ by  $\Lambda_i=[(\lambda \vee \lambda')_i]$ for $i \in [t]$. 
By definition, $\lambda, \lambda' \in \Lambda$ and 
$\lambda \vee \lambda'$ is the maximum vector in $\Lambda$ such that $\lambda \vee \lambda' \not\in \Lambda_G$. 
However, no index $\ell$ satisfies (\ref{eq-lattice000}), since the right-hand side of (\ref{eq-lattice000}) contains both $\lambda, \lambda'$, while the left-hand side of (\ref{eq-lattice000}) contains at most one of them. 
Furthermore, if a set $\Lambda$ in Lemma \ref{lemma:FPT_main} is chosen in such a way that $\Lambda \supseteq \Lambda_G$, we obtain
Lemma~\ref{lemma:FPT_main-simple}. 

In order to show Lemma \ref{lemma:FPT_main}, 
let us consider feasibility and  monotonicity of  allocations in addition to split property.

\begin{lemma}\label{lemma-monotonefeasible}
An allocation $\cT$ is feasible and  satisfies the conditions in Lemmas \ref{lemma:head_size} and \ref{lemma:split}. 
Then $\lambda=\lambda^\cT$ satisfy the following conditions. \begin{align}
&\text{$\lambda_1\ge\lambda_2\ge\dots\ge\lambda_t$}
\label{eq:splitq3}\\
&\text{$\textstyle\sum_{j\in[t]}|A_{\in C_j}|/\lambda_j\le k$, and}\label{eq:splitq4}\\
&\text{$\textstyle \lambda_j \leq q_{\eta(j)}$ for all $j \in [t]$}\label{eq:splitq5}.
\end{align}
where $\eta(j)=\sum_{r\leq j}|A_{\in C_r}|/\lambda_r$. 
Conversely, if  $G$ and $\lambda$ satisfy 
(\ref{eq:splitq1}), (\ref{eq:splitq2}), (\ref{eq:splitq3}), (\ref{eq:splitq4}), and  (\ref{eq:splitq5}), then there exists a unique feasible allocation $\cT$ (up to isomorphism) satisfying $G^\cT=G$,  $\lambda^\cT=\lambda$, and the conditions in Lemmas \ref{lemma:head_size} and \ref{lemma:split}. 
\end{lemma}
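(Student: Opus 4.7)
The plan is to exploit the bijective correspondence from Lemma~\ref{lem:starforest} between allocations satisfying the split conditions and pairs $(G,\lambda)$ satisfying (\ref{eq:splitq1})--(\ref{eq:splitq2}), and to show that imposing feasibility together with monotonicity corresponds exactly to adding the three extra conditions (\ref{eq:splitq3})--(\ref{eq:splitq5}). The workhorse observation, already established in the discussion preceding Lemma~\ref{lem:starforest}, is that every coalition in $\cT_j$ has first drop-off point $r_j$ and size $\lambda_j$, so $|\cT_j|=|A_{\in C_j}|/\lambda_j$.

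For the forward direction, suppose $\cT$ is feasible and satisfies the conditions in Lemmas~\ref{lemma:head_size} and \ref{lemma:split}. I would obtain (\ref{eq:splitq3}) by noting that since $r_1<\dots<r_t$, for any $j<j'$, $T\in\cT_j$ and $T'\in\cT_{j'}$ satisfy $\min_{a\in T}x_a=r_j<r_{j'}=\min_{a'\in T'}x_{a'}$, so implication (\ref{eq:monotone_size}) gives $\lambda_j=|T|\ge|T'|=\lambda_{j'}$. Condition (\ref{eq:splitq4}) is immediate because the total number of non-empty coalitions is $\sum_j|A_{\in C_j}|/\lambda_j$, which is at most $k$ by feasibility. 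For (\ref{eq:splitq5}), I would first argue that reordering coalitions in non-increasing size still yields a feasible assignment (a standard rearrangement argument using $q_1\ge\dots\ge q_k$ preserves $|T_i|\le q_i$); then the coalitions of $\cT_1,\dots,\cT_j$ occupy taxis $1,\dots,\eta(j)$, so taxi $\eta(j)$ hosts a coalition of size $\lambda_j$, giving $\lambda_j\le q_{\eta(j)}$.

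For the converse, given $G$ and $\lambda$ satisfying all five conditions, I would invoke Lemma~\ref{lem:starforest} on $(G,\lambda)$ to produce the unique (up to isomorphism) allocation $\cT$ with $G^\cT=G$, $\lambda^\cT=\lambda$, and satisfying the split conditions. Feasibility would then be verified by assigning the coalitions of $\cT_1,\dots,\cT_t$ to consecutive taxi blocks, with component $j$ occupying taxis $\eta(j-1)+1,\dots,\eta(j)$: condition (\ref{eq:splitq4}) guarantees at most $k$ taxis are needed, and (\ref{eq:splitq5}) combined with $q_1\ge\dots\ge q_k$ yields $\lambda_j\le q_{\eta(j)}\le q_i$ for every $i\le\eta(j)$, so every capacity is respected. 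The implications of Lemma~\ref{lemma:head_size} for this $\cT$ are then automatic: two coalitions sharing their first drop-off point lie in the same $\cT_j$ and hence share the size $\lambda_j$, while coalitions from different components $j<j'$ have sizes $\lambda_j\ge\lambda_{j'}$ by (\ref{eq:splitq3}). Uniqueness up to isomorphism is inherited directly from Lemma~\ref{lem:starforest}.

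The main obstacle I anticipate is not any individual step but the careful bookkeeping around $\eta$: I must check that $\eta(j)$ is integer-valued (which follows from $\lambda_r\mid|A_{\in C_r}|$ per (\ref{eq:splitq2})) and that the block-wise taxi assignment aligns correctly with the non-increasing capacity sequence. A minor case to dispatch explicitly is when (\ref{eq:splitq4}) is strict, in which case the $k-\eta(t)$ unused taxis are simply left empty and vacuously satisfy the constraints. Beyond this, the argument is essentially a translation between the combinatorial data $(G,\lambda)$ and the structural properties of $\cT$.
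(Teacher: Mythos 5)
Your proposal is correct and follows essentially the same route as the paper's proof: condition (\ref{eq:splitq3}) from the Monotonicity lemma via $r_1<\dots<r_t$, condition (\ref{eq:splitq4}) from counting the $|A_{\in C_j}|/\lambda_j$ taxis used by each component, condition (\ref{eq:splitq5}) from the sorted block assignment $\cT_j=\{T_{\eta(j-1)+1},\dots,T_{\eta(j)}\}$ together with $q_1\ge\dots\ge q_k$, and the converse by invoking Lemma~\ref{lem:starforest} and checking feasibility block by block. Your explicit attention to the rearrangement argument and to the integrality of $\eta(j)$ only makes steps the paper leaves implicit more precise.
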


\begin{proof}
Suppose that $\cT$ is a feasible allocation satisfying the conditions in Lemmas \ref{lemma:head_size} and \ref{lemma:split}. 
By our assumption $r_1 < \dots  r_t$, (\ref{eq:monotone_size}) implies (\ref{eq:splitq3}). Note that the feasibility of $\cT$ is equivalent to two conditions (i) $|\cT|\leq k$ and (ii) capacity condition (i.e., $|T_i| \leq q_i$). 
Since $\cT_j$ uses $|A_{\in C_j}|/\lambda_j$ many taxis, (i) is equivalent to (\ref{eq:splitq4}). 
By (\ref{eq:splitq3}) and the assumption $q_1 \geq \dots q_k$, in order to check capacity condition, it is enough to consider an allocation $\cT=(T_1, \dots, T_\alpha)$ in such a way that $\cT_1$ is assigned to the first $\eta(1)$ taxis, $\cT_2$ is assigned to next $\eta(2) -\eta(1)$ taxis, and so on. 
More precisely, we have 
\[\cT_j=\{T_{\eta(j-1)+1}, \dots, T_{\eta(j)}\} \mbox{ for all } j \in [t], \]
where $\eta(0)$ is defined by $0$. 
Thus the capacity condition implies (\ref{eq:splitq5}).  Conversely, if  $G$ and $\lambda$ satisfy 
(\ref{eq:splitq1}) and (\ref{eq:splitq2}), then 
then by Lemma \ref{lemma:head_size}, there exists a unique allocation $\cT$ (up to isomorphism) satisfying $G^\cT=G$, $\lambda^\cT=\lambda$, and the conditions in Lemma \ref{lemma:split}. 
Moreover, since (\ref{eq:splitq3}), (\ref{eq:splitq4}), and  (\ref{eq:splitq5}) hold for $\lambda$, $\cT$ is feasible  and  
the conditions in Lemma \ref{lemma:head_size} are satisfied. 
\end{proof}

We are now ready to prove Lemma \ref{lemma:FPT_main}.

\begin{proof}[Proof of Lemma \ref{lemma:FPT_main}]
Let us separately consider the cases in which 
$G$ and $\lambda=(\max \Lambda_j)_{j\in[t]}$ violate 
(\ref{eq:splitq1}), (\ref{eq:splitq2}), (\ref{eq:splitq3}), (\ref{eq:splitq4}),  (\ref{eq:splitq5}), and envy-freeness of the allocation provided by them.  
\begin{itemize}
\item 
If (\ref{eq:splitq1}) or (\ref{eq:splitq4}) is violated, then by Lemmas \ref{lem:starforest} and \ref{lemma-monotonefeasible},  we have $\Lambda_G=\emptyset$. 
This implies that any index $\ell$ satisfies 
(\ref{eq-lattice000}). Thus it is polynomially computable. 

\item
If (\ref{eq:splitq2}) is violated for an index $j$, then 
$\ell=j$ satisfies 
(\ref{eq-lattice000}). Thus it is polynomially computable. 

\item
If (\ref{eq:splitq3})  is violated for an index $j$, i.e., $\lambda_{j-1} < \lambda_j$, then 
$\ell=j$ satisfies 
(\ref{eq-lattice000}). Thus it is polynomially computable. 

\item
If (\ref{eq:splitq5})  is violated for an index $j$, i.e., $\lambda_j > q_{\eta(j)}$, then we claim that  
$\ell=j$ satisfies 
(\ref{eq-lattice000}), which completes the proof of this case, since such an $\ell$ can be computed in polynomial time. 
Let $\lambda'$ be a size vector in $\Lambda$ such that $\lambda'_\ell=\lambda_\ell$, and let $\eta'(h)=\sum_{r\leq h}|A_{\in C_r}|/\lambda'_r$ for $h \in [t]$. 
Since $\lambda'\leq \lambda$ and $\lambda'_\ell=\lambda_\ell$,  
we have $\lambda'_\ell=\lambda_\ell > \eta(\ell)\geq \eta'(\ell)$, which implies the claim.  

\item 
Suppose that $G$ and $\lambda$ fulfill all the conditions above, i.e., 
$G$ and $\lambda$ provide a feasible allocation $\cT$  that satisfies the conditions in Lemmas \ref{lemma:head_size} and \ref{lemma:split}. 
Let further assume that $a \in T \,(\in \cT_h)$ envies $a' \in T'\,(\in \cT_j)$ for some $j,h \in [t]$. 
 If $j=h$, 
 then it is clear that $\ell=j\,(=h)$ satisfies 
(\ref{eq-lattice000}).
On the other hand, if $j\not= h$, 
Let $\lambda'$ be a size vector in $\Lambda$ such that $\lambda'_\ell=\lambda_\ell$ and satisfies (\ref{eq:splitq2}), (\ref{eq:splitq3}), (\ref{eq:splitq4}),  and (\ref{eq:splitq5}). 
Then $a$ still envies $a'$ in the allocation provided by $G$ and $\lambda'$. 
Thus $\ell=j$ again satisfies (\ref{eq-lattice000}).
Since envy-freeness can be checked in polynomial time, 
this completes the proof. \qedhere
\end{itemize}
\end{proof}

We here remark that $\Lambda_G$ may be empty. 
Based on this semi-lattice structure, we construct a polynomial time algorithm to compute an envy-free feasible allocation consistent with a given star-forest $G$.
Since there exists at most $p^p$ many star-forests, 
this implies an FPT algorithm (with respect to $p$) for 
computing an envy-free feasible allocation.

For a given star-forest $G$, our algorithm computes 
 the maximum vector in $\Lambda_G$ or conclude that $\Lambda_G=\emptyset$, where the maximum vector exists due to semi-lattice property of $\Lambda_G$. 
The lemma below ensures that it is possible in polynomial time. 

\begin{lemma}\label{lemma:FPT_main-simple}
For a star-forest $G$, let $\Lambda=\prod_{j\in[t]}\Lambda_j$ be a non-empty set such that $\Lambda \supseteq \Lambda_G$. 
If the maximum vector  $\lambda=(\max \Lambda_j)_{j\in[t]}$ does not belong to $\Lambda_G$, then an index $\ell\in[t]$ with $(\Lambda_{\ell}-\max\Lambda_{\ell}) \times \prod_{j\in[t]-\ell}\Lambda_j \supseteq \Lambda_G$ can be computed in polynomial time. 
\end{lemma}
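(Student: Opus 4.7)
The plan is to derive Lemma~\ref{lemma:FPT_main-simple} as an immediate corollary of Lemma~\ref{lemma:FPT_main}, which has just been established. The stronger lemma already asserts an equality of intersections, and the extra hypothesis $\Lambda\supseteq\Lambda_G$ of Lemma~\ref{lemma:FPT_main-simple} collapses that equality into the containment we need.

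First, I would apply Lemma~\ref{lemma:FPT_main} to the same product set $\Lambda=\prod_{j\in[t]}\Lambda_j$. Its hypotheses only require that $\Lambda$ be non-empty and that the maximum vector $\lambda=(\max\Lambda_j)_{j\in[t]}$ not lie in $\Lambda_G$; both are granted by the hypothesis of Lemma~\ref{lemma:FPT_main-simple}. Hence Lemma~\ref{lemma:FPT_main} produces, in polynomial time, an index $\ell\in[t]$ with
\[
\Bigl((\Lambda_\ell-\max\Lambda_\ell)\times\prod_{j\in[t]-\ell}\Lambda_j\Bigr)\cap\Lambda_G \;=\; \Lambda\cap\Lambda_G.
\]

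Next, I would invoke the extra assumption $\Lambda\supseteq\Lambda_G$, which is present in Lemma~\ref{lemma:FPT_main-simple} but absent from Lemma~\ref{lemma:FPT_main}. This assumption gives $\Lambda\cap\Lambda_G=\Lambda_G$, so the displayed equality turns into
\[
\Bigl((\Lambda_\ell-\max\Lambda_\ell)\times\prod_{j\in[t]-\ell}\Lambda_j\Bigr)\cap\Lambda_G \;=\; \Lambda_G,
\]
which is equivalent to the desired containment
\[
(\Lambda_\ell-\max\Lambda_\ell)\times\prod_{j\in[t]-\ell}\Lambda_j \;\supseteq\; \Lambda_G.
\]
The polynomial-time claim carries over from Lemma~\ref{lemma:FPT_main} verbatim, since the index $\ell$ is the same one produced there. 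There is essentially no obstacle, because all of the case analysis on the violated condition among (\ref{eq:splitq1})--(\ref{eq:splitq5}) and envy-freeness was already carried out inside the proof of Lemma~\ref{lemma:FPT_main}; the present statement is just its restriction to the regime $\Lambda\supseteq\Lambda_G$, in which the right-hand side $\Lambda\cap\Lambda_G$ simplifies to $\Lambda_G$.
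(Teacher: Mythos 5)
Your proposal is correct and matches the paper's own derivation: the paper explicitly obtains Lemma~\ref{lemma:FPT_main-simple} by specializing Lemma~\ref{lemma:FPT_main} to the case $\Lambda \supseteq \Lambda_G$, where $\Lambda \cap \Lambda_G = \Lambda_G$ turns the equality of intersections into the claimed containment. The polynomial-time computability of $\ell$ carries over exactly as you state.
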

Let us note that an index $\ell$ in the lemma must exist again by the semi-lattice property of $\Lambda_G$. 
Let $\Lambda=\prod_{j\in[t]}\Lambda_j$ denote a set of candidate size vectors. 
By Lemma \ref{lem:starforest}, we have $\Lambda_G \subseteq \prod_{j\in [t]}\bigl[|A_{\in C_j}|\bigr]$. 
Our algorithm initializes $\Lambda$ by $\Lambda=\prod_{j \in [t]}\bigl[|A_{\in C_j}|\bigr]$, and iteratively check if $\Lambda=\emptyset$ or the maximum vector in $\Lambda$ provides an envy-free allocation; 
If not, it updates $\Lambda$ by making use of indices $\ell$ in Lemma \ref{lemma:FPT_main-simple}, where the formal description of the algorithm can be found in Algorithm \ref{alg:FPT_types}.

\begin{theorem}\label{thm:FPT:envyfree:type}
We can check the existence of an envy-free feasible allocation, and find one if it exists in FPT with respect to the number of types of agents. 
\end{theorem}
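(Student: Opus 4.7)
The plan is to combine the structural characterization in Lemma~\ref{lem:starforest}, the semi-lattice property in Lemma~\ref{lem:lattice}, and the shrinkage tool in Lemma~\ref{lemma:FPT_main-simple} into an enumerate-plus-descent algorithm. By Lemma~\ref{lem:starforest}, every envy-free feasible allocation is determined up to isomorphism by a pair $(G,\lambda)$, where $G$ is a star-forest on $V$ and $\lambda \in \Lambda_G$. Since every vertex of a star-forest has in-degree at most one (each non-root has a unique parent, or none), there are at most $p^p$ star-forests on $V$, which I enumerate explicitly in $p^p\cdot\poly(p)$ time.

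For each candidate star-forest $G$, the task is to decide whether $\Lambda_G \neq \emptyset$ and, if so, output an element of it. Because $\Lambda_G$ is an upper semi-lattice (Lemma~\ref{lem:lattice}), whenever it is non-empty it has a unique maximum $\lambda^\star$. I therefore maintain a product set $\Lambda = \prod_{j\in[t]}\Lambda_j \supseteq \Lambda_G$ and descend towards $\lambda^\star$ coordinate by coordinate. Initialization is given by the bound $\lambda_j\le|A_{\in C_j}|$ in \eqref{eq:splitq2}, so I set $\Lambda_j = \bigl[|A_{\in C_j}|\bigr]$. Each iteration forms $\lambda = (\max\Lambda_j)_{j\in[t]}$, constructs the unique candidate feasible allocation corresponding to $(G,\lambda)$ (guaranteed by Lemma~\ref{lemma-monotonefeasible} once \eqref{eq:splitq1}--\eqref{eq:splitq5} are met, and otherwise detected as infeasible) and checks envy-freeness. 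If envy-free, return it; otherwise apply Lemma~\ref{lemma:FPT_main-simple} to locate an index $\ell$ such that replacing $\Lambda_\ell$ by $\Lambda_\ell - \max\Lambda_\ell$ preserves the invariant $\Lambda \supseteq \Lambda_G$, and iterate. The loop terminates either when some $\Lambda_j$ becomes empty (certifying $\Lambda_G=\emptyset$ for this $G$) or when the current maximum is envy-free (this maximum must then equal $\lambda^\star$).

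Each shrinkage strictly decreases $\sum_j|\Lambda_j|$, which starts at most $pn$, giving at most $pn$ iterations per star-forest. Each iteration takes polynomial time: building the candidate allocation from $(G,\lambda)$, verifying feasibility, and testing envy-freeness are all polynomial, and the index $\ell$ promised by Lemma~\ref{lemma:FPT_main-simple} can be extracted in polynomial time from the proof of that lemma (checking which of \eqref{eq:splitq1}--\eqref{eq:splitq5} or envy-freeness is violated and reading off the offending coordinate). Summing over the $p^p$ star-forests yields total running time $p^p\cdot\poly(n,p)$, which is FPT in $p$.

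The main obstacle is the correctness of the one-coordinate descent: even when the current componentwise maximum $\lambda$ fails to give an envy-free feasible allocation, we must be certain that dropping $\max\Lambda_\ell$ from a \emph{single} coordinate $\ell$ discards no element of $\Lambda_G$. That is precisely what Lemma~\ref{lemma:FPT_main-simple} supplies, and it is what turns the semi-lattice structure into an algorithmic tool: it lets us converge to $\lambda^\star$ (or a proof of emptiness) in polynomially many steps without ever enumerating the potentially $n^{\Omega(p)}$ divisor vectors consistent with $G$. The remaining details — enumerating star-forests, translating $(G,\lambda)$ to an allocation via Lemma~\ref{lem:starforest}, and checking envy-freeness pairwise — are routine.
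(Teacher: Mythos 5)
Your proposal is correct and follows essentially the same route as the paper: enumerate the at most $p^p$ star-forests, initialize $\Lambda_j=\bigl[|A_{\in C_j}|\bigr]$, and descend to the maximum of $\Lambda_G$ by repeatedly invoking Lemma~\ref{lemma:FPT_main-simple}, exactly as in Algorithm~\ref{alg:FPT_types}. The only cosmetic difference is your iteration bound of $pn$ versus the paper's sharper observation that $\sum_j|\Lambda_j|=n$ initially, which does not affect the FPT conclusion.
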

\begin{proof}
We show that Algorithm~\ref{alg:FPT_types} can check the existence of an envy-free feasible allocation and find one if it exists in FPT time. The correctness follows from Lemmas \ref{lem:starforest},  \ref{lemma-monotonefeasible}, and \ref{lemma:FPT_main-simple}. 
To analyze the running time, observe that the number of iterations of the while loop is at most $n$ because $\sum_{j\in[t]}|\Lambda_j|=n$ at the beginning of the loop  and it is decremented by at least one in each iteration. The running time of each iteration of the while loop is $O(n^3)$ because we can check the existence of envy in $O(n^3)$ time. Thus, the total running time of the algorithm is $O(p^p\cdot n^4)$, which is FPT.
\end{proof}

\begin{algorithm}
\caption{FPT w.r.t.\ the number of types}\label{alg:FPT_types}
\ForEach{star-forest $G$}{
  Let $\Lambda=\prod_{j\in[t]}\bigl[|A_{\in C_j}|\bigr]$\;
  \While{$\Lambda\ne\emptyset$}{
    Let $\lambda=(\max \Lambda_j)_{j\in[t]}$\;
    \If{(\ref{eq:splitq1}) or (\ref{eq:splitq4}) is violated}{Set $\Lambda \ot \emptyset$\;} 
    \ElseIf{ (\ref{eq:splitq2}), (\ref{eq:splitq3}),  or (\ref{eq:splitq5}) is violated for an index $j$}{Set $\Lambda_j \ot \Lambda_j -\max\Lambda_j$\;}
    \ElseIf{an allocation $\cT$ provided by $G$ and $\lambda$ is not envy-free, i.e., an agent in some coalition in $\cT_j$ is envied}{Set $\Lambda_j \ot \Lambda_j -\max\Lambda_j$\;}
    \Else{\Return an allocation $\cT$ provided by $G$ and $\lambda$\;}   
  }
}
\Return ``No envy-free feasible allocation''\;
\end{algorithm}

\subsection{Consecutive envy-free allocations}\label{sec:consecutive}
One desirable property of an allocation is \emph{consecutiveness}, i.e., coalitions are formed by consecutive agents  according to their destinations. The property is intuitive to the users and hence important in practical implementation. Formally, an allocation $\cT$ is \emph{consecutive} if $\max_{a\in T}x_a\le \min_{a\in T'}x_a$ or $\min_{a\in T}x_a\ge \max_{a\in T'}x_a$ holds for all distinct $T,T'\in\cT$. 
However, there exists an instance with no consecutive envy-free feasible allocation as illulstrated in Example \ref{ex:consecutive}. 

\begin{example}\label{ex:consecutive}
Consider an instance where $n=10$, $k=2$, $q_1=6$, $q_2=4$,
$x_1=\dots=x_4=1$, $x_5=\dots=x_8=10$, and $x_9=x_{10}=20$.
Then, it can be easily checked that allocation $\cT^*=(\{1,2,3,4,9,10\},\{5,6,7,8\})$ in Fig.~\ref{fig:no-consecutive} is envy-free and feasible but not consecutive.
Moreover, this is a unique allocation that is  envy-free and feasible. 
To see this, let $\cT=(T_1,T_2)$ be an envy-free feasible allocation. By feasibility, we have $|T_1|=6$ and $|T_2|=4$. We also have $\{1,2,3,4\} \subseteq T_1$, i.e., all agents of type $x=1$ must be allocated to $T_1$ since the cost they have to pay at $T_1$ and $T_2$ are respectively  $\frac{1}{6}$ and $\frac{1}{4}$. 
Finally, all agents of type $x=10$ must be allocated to $T_2$, which completes the uniqueness of $\cT^*$. 
Note that feasibility implies at least two agents of type $x=10$ allocated to $T_2$. 
If an agent of type $x=10$ is allocated to $T_1$, then she would envy an agent of the same type allocated to $T_2$ since the cost she has to pay at $T_1$ is $\frac{1}{6}+\frac{9}{2}$, which is greater than the cost of $\frac{10}{4}$ at $T_2$. 
\end{example}

\begin{figure}
\centering
\begin{tikzpicture}[xscale=.2]
\draw[|->,thick] (0,1)--(22,1);
\draw[|->,thick] (0,0)--(22,0);
\node[left] at (0,1) {$T_1$};
\node[left] at (0,0) {$T_2$};
\node[fill=black, shape=circle, inner sep=2pt, label=above:{1,2,3,4}, label={[font=\tiny,blue]below:$1$}] at (1,1) {};
\node[fill=black, shape=circle, inner sep=2pt, label=above:{5,6,7,8}, label={[font=\tiny,blue]below:$10$}] at (10,0) {};
\node[fill=black, shape=circle, inner sep=2pt, label=above:{9,10}, label={[font=\tiny,blue]below:$20$}] at (20,1) {};
\end{tikzpicture}
\caption{An allocation that is envy-free but not consecutive}\label{fig:no-consecutive}
\end{figure}
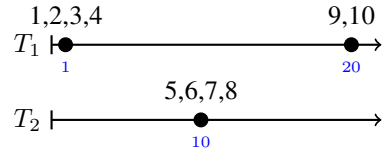

Nevertheless, we show next that a consecutive envy-free feasible allocation can be found in polynomial time if it exists. The key observation here is that  envy-freeness (between all agents) is equivalent to envy-freeness between two consecutive agents, which enables us to design a dynamic programming approach for finding a desired allocation. 

\begin{theorem}\label{thm:envyfree:consecutive}
A consecutive envy-free feasible allocation can be computed in polynomial time  if it exists.
\end{theorem}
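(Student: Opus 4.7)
The plan is to reduce envy-freeness of a consecutive allocation to a purely local boundary condition, and then encode this as a dynamic program over prefixes of the agents.

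First I would prove the key structural lemma: for a consecutive feasible allocation $\cT=(T_1,\dots,T_m)$ indexed in destination order, $\cT$ is envy-free if and only if, for every pair of consecutive agents $(a,a{+}1)$ lying in different coalitions, neither envies the other. Since agents inside the same coalition never envy each other (their cost shares form the Shapley value of a common cost function), the nontrivial checks are exactly the boundary pairs $(b_i,d_{i+1})$ where $b_i=\max T_i$ and $d_{i+1}=\min T_{i+1}$.

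For the ``earlier envies later'' direction, take $b\in T_i$ and $c\in T_j$ with $i<j$. Since $x_b\le x_c$, agent $b$ is the first drop-off in $T_j-c+b$, giving swap cost $x_b/|T_j|$; envy is equivalent to $\phi(T_i,x_b)/x_b>1/|T_j|$. A short calculation (using that $n_{T_i}(r)$ is non-increasing in $r$) shows $\phi(T_i,x)/x$ is non-decreasing in $x$, so the supremum over $b\in T_i$ is attained at $b_i$; and by the monotonicity lemma (Lemma~\ref{lemma:head_size}) we have $|T_{i+1}|\ge |T_j|$, so checking the boundary pair $(b_i,d_{i+1})$ suffices. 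For the ``later envies earlier'' direction, for $b\in T_{i+1}$ considering a swap with $c\in T_i$, the swap cost equals $\int_0^{x_c}\d r/n_{T_i}(r)+\int_{x_c}^{x_b}\d r/(n_{T_i}(r)+1)$; a derivative argument shows that, combined with monotonicity of sizes, the tightest case is $b=d_{i+1}$ and $c=b_i$. Envy toward coalitions farther back $T_h$ with $h<i$ is then handled by induction on $i-h$: the boundary inequalities at $(b_{h},d_{h+1}),\ldots,(b_i,d_{i+1})$ telescope to give $\phi(T_{i+1},x_{d_{i+1}})\le \phi(T_h,x_{b_h})+(x_{d_{i+1}}-x_{b_h})$, which rules out such envy.

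Given the lemma, I would design a dynamic program with state $g(i,j,s)\in\{0,1\}$ indicating whether agents $1,\dots,i$ can be partitioned into $j$ consecutive coalitions forming an envy-free feasible partial allocation whose last coalition has size $s$. The transition sets $g(i+s',j+1,s')=1$ whenever $g(i,j,s)=1$, $s'\le\min(s,q_{j+1})$ (enforcing monotonicity and feasibility), and the two boundary envy checks between $T_j=[i-s+1,i]$ and $T_{j+1}=[i+1,i+s']$ pass. The base cases are $g(s,1,s)=1$ for all $s\le q_1$, and a consecutive envy-free feasible allocation exists iff $g(n,j,s)=1$ for some $j\le k$ and $s$. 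With $O(kn^2)$ states, $O(n)$ transitions per state, and the local envy check computable in $O(1)$ after an $O(n^2)$ precomputation of prefix Shapley values of the form $\phi([\ell,r],x_r)$, the total running time is $O(kn^3)$, and the allocation itself is recovered by standard backtracking.

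The main obstacle is the ``later envies earlier'' direction of the key lemma. Unlike the forward direction, which collapses cleanly via the monotonicity of sizes, the reverse direction requires handling the structure-dependent integral $\int\d r/(n_{T_i}(r)+1)$ in the swap cost and arguing that the first agent of $T_{i+1}$ and the last of $T_i$ give the critical case; the chaining/telescoping of the boundary inequalities across intermediate coalitions is the delicate part, but once established the dynamic program is routine.
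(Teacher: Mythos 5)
Your proposal is correct and follows essentially the same route as the paper: the paper's Lemma~\ref{lemma:cons_check} is exactly your key lemma (envy-freeness of a consecutive allocation reduces to the boundary pairs $(t_i,s_{i+1})$, proved via monotonicity of $\tfrac{1}{x}\int_0^x \d r/n_{T_i}(r)$ for the forward direction and a minimal-envier/chaining argument for the backward direction), and the paper's $O(kn^3)$ dynamic program over states $z(\mu,\kappa,\ell)$ is your $g(i,j,s)$.
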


By the feasibility of allocations,  monotonicity property of envy-freeness, and the assumption that $q_1\ge\dots\ge q_k$,
it is sufficient to consider consecutive allocations $\cT=(T_1, \dots ,T_h)$ with $h \leq k$ such that
\begin{align}
T_i=\{s_i,s_{i}+1,\dots,t_i\}\ \ \mbox{ for all } i\in[h],
\label{eq:cons1}
\end{align}
where $s_i$ and $t_i$ are positive integers with 
\begin{align*}
&s_1=1<  s_2=t_1+1 < \dots < s_h=t_{h-1}+1 \leq t_h=n,\\ 
&t_i-s_i+1\leq q_i \ \ \mbox{ for all }i \in [h], \mbox{ and}\\
&t_1-s_1 \geq t_2-s_2 \geq \dots \geq t_h-s_h \geq 0, 
\end{align*}
where the second and third conditions follow from capacity condition and monotonicity property, respectively.  
We regard a partition with  (\ref{eq:cons1})  as a partition $\{T_1, \dots, T_k\}$ satisfying (\ref{eq:cons1}) and the condition that $T_i =\emptyset$ for all taxis $i$ with $h < i \leq k$

We first show a simple criterion on envy-freeness for consecutive allocations.  
\begin{lemma}\label{lemma:cons_check}
A consecutive allocation $\cT$ of \eqref{eq:cons1} is envy-free if and only if for every $i\in [h-1]$, $t_i$ and $s_{i+1}$ do not envy each other.  
\end{lemma}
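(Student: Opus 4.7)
The only-if direction is immediate from the definition: envy-freeness forbids envy between every pair of agents, in particular between $t_i$ and $s_{i+1}$. For the if direction, the plan is to assume the two boundary conditions and rule out envy between every pair $(a,b)$ by splitting into two cases based on whether the envying agent lies in a coalition of smaller or larger index than the envied one.

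In the first case, $a\in T_i$ possibly envies $b\in T_j$ with $i<j$. Because the allocation is consecutive, $x_a\le x_{t_i}\le x_{s_j}\le x_b$, so $a$ has the (weakly) smallest destination in $T_j-b+a$. All $|T_j|$ agents of $T_j-b+a$ are then simultaneously present on $[0,x_a]$, yielding the clean formula $\phi(T_j-b+a,x_a)=x_a/|T_j|$. The same observation applied to $T_{i+1}-s_{i+1}+t_i$ rewrites the first boundary condition ``$t_i$ does not envy $s_{i+1}$'' as $\phi(T_i,x_{t_i})\le x_{t_i}/|T_{i+1}|$. The key auxiliary fact will be that $x\mapsto \phi(T_i,x)/x$ is nondecreasing, since it equals the running average over $[0,x]$ of the nondecreasing function $r\mapsto 1/n_{T_i}(r)$. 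Consequently $\phi(T_i,x_a)/x_a\le \phi(T_i,x_{t_i})/x_{t_i}\le 1/|T_{i+1}|$, and the size monotonicity $|T_{i+1}|\ge|T_j|$ (which follows from the assumed ordering $t_1-s_1\ge\cdots\ge t_h-s_h$) upgrades this to $\phi(T_i,x_a)\le x_a/|T_j|=\phi(T_j-b+a,x_a)$, ruling out envy.

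In the second case, $a\in T_j$ possibly envies $b\in T_i$ with $i<j$; here $a$ has the (weakly) largest destination in $T_i-b+a$, so $\phi(T_i-b+a,x_a)$ depends on $b$ in a nontrivial way. I plan to exploit the second boundary condition ``$s_{i+1}$ does not envy $t_i$''. Combined with the identity $\phi(T_i-t_i+s_{i+1},x_{s_{i+1}})=\phi(T_i,x_{t_i})+(x_{s_{i+1}}-x_{t_i})$, it provides a lower bound on $\phi(T_i,x_{t_i})$ complementing the upper bound derived in the first case. Together with the trivial bound $\phi(T_{i+1},x_a)\le \phi(T_{i+1},x_{s_{i+1}})+(x_a-x_{s_{i+1}})$ and the chain of size inequalities $|T_i|\ge|T_{i+1}|\ge|T_j|$, the goal will be to conclude $\phi(T_i-b+a,x_a)\ge \phi(T_j,x_a)$.

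The main obstacle will be this second case: replacing a ``low'' agent $b\in T_i$ by $a$ generally yields a strictly smaller cost for $a$ than replacing $b=t_i$, because the low portion of the range is then shared with more riders. Hence the second boundary condition, which constrains only the case $b=t_i$, does not on its own bound $\phi(T_i-b+a,x_a)$ from below for arbitrary $b$. Overcoming this is likely to require a delicate monotonicity argument that uses both boundary conditions simultaneously and leverages the full size-monotonicity of consecutive allocations, perhaps by interpolating between $b$ and $t_i$ via a sequence of one-step replacements.
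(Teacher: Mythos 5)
Your ``only if'' direction and your first case are correct and essentially identical to the paper's: for $a\in T_i$ envying $b\in T_j$ with $i<j$, the paper also uses $\phi(T_j-b+a,x_a)=x_a/|T_j|$, the fact that $x\mapsto\phi(T_i,x)/x$ is nondecreasing (it is the running average of the nondecreasing integrand $1/n_{T_i}$), and $|T_{i+1}|\ge|T_j|$ to transfer any such envy to the pair $(t_i,s_{i+1})$.

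The second case is a genuine gap, and you have diagnosed it accurately yourself: the boundary condition only constrains the replacement of $t_i$, whereas $\phi(T_i-b+a,x_a)$ is \emph{minimized} at $b=s_i$ (lowering $x_b$ enlarges the region where the integrand is $1/(n_{T_i}(r)+1)$ rather than $1/n_{T_i}(r)$), so no monotone interpolation in $b$ toward $t_i$ can close the argument. The paper's resolution is not an interpolation over the envied agent $b$ but a minimality argument over the \emph{envying} agent. Take $a$ to be the smallest-indexed agent that envies anybody, say $a\in T_j$ envies $b\in T_i$ with $i<j$. Since $x_a\ge x_{t_{j-1}}\ge\max_{c\in T_i}x_c$, the hypothetical cost telescopes through the last agent of the coalition preceding $a$'s: $\phi(T_i-b+a,x_a)=\phi(T_i-b+t_{j-1},x_{t_{j-1}})+(x_a-x_{t_{j-1}})$, uniformly in $b$. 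By minimality, $t_{j-1}$ envies nobody, in particular not $b$, so this is at least $\phi(T_{j-1},x_{t_{j-1}})+(x_a-x_{t_{j-1}})=\phi(T_{j-1}-t_{j-1}+a,x_a)$; that is, $a$ already envies the slot of $t_{j-1}$. Subtracting $\int_{x_{s_j}}^{x_a}\mathrm{d}r/n_{T_j}(r)\le x_a-x_{s_j}$ from both sides of $\phi(T_j,x_a)>\phi(T_{j-1}-t_{j-1}+a,x_a)$ then yields $\phi(T_j,x_{s_j})>\phi(T_{j-1}-t_{j-1}+s_j,x_{s_j})$, i.e., $s_j$ envies $t_{j-1}$, contradicting the boundary hypothesis. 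Without this minimality device (or an equivalent induction along the line), your plan for the second case does not go through.
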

\begin{proof}
Since the ``only if'' part is clear, we prove the ``if'' part.
Suppose that  $a_i\in T_i$ is the minimum agent that envies an agent   $a_j\in T_j$. 
Since $i\not=j$,  we separately consider two cases $i<j$ and  $i>j$.
\smallskip

\noindent\textbf{Case of $i<j$}.
As $a_i$ envies $a_j$, we have
\begin{align*}
\phi(T_i,x_{a_i})&=\int_{0}^{x_{a_i}}\frac{\d{r}}{n_{T_i}(r)}\\
&  >\phi(T_j-a_j+a_i,x_{a_i})=\frac{x_{a_i}}{|T_j|} \geq \frac{x_{a_i}}{|T_{i+1}|}, 
\end{align*}
where the last inequality follows from the monotonicity. 
Note that $\frac{1}{x}\int_{0}^{x}\frac{\d{r}}{n_{T_i}(r)}$ is monotone nondecreasing in $x$,
since $n_{T_i}(r)$ is monotone nonincreasing in $r$.
Hence, we have
\begin{align*}
\frac{1}{x_{t_i}}\int_{0}^{x_{t_i}}\frac{\d{r}}{n_{T_i}(r)}
\ge \frac{1}{x_{a_i}}\int_{0}^{x_{a_i}}\frac{\d{r}}{n_{T_i}(r)}
> \frac{1}{|T_{i+1}|}.
\end{align*}
Thus, we obtain
\begin{align*}
\phi(T_i,x_{t_i})
&=\int_{0}^{x_{t_i}}\frac{\d{r}}{n_{T_i}(r)}\\
&>\frac{x_{t_i}}{|T_{i+1}|}=\phi(T_{i+1}-s_{i+1}+t_{i},x_{t_i}),
\end{align*}
meaning that $t_i$ envies $s_{i+1}$.

\smallskip
\noindent\textbf{Case of $i>j$}. As $a_i$ envies $a_j$, we have
\begin{align*}
\phi(T_i,x_{a_i})
&>\phi(T_j-a_j+a_i,x_{a_i})\\
&=\phi(T_j-a_j+t_{i-1},x_{t_{i-1}})+(x_{a_i}-x_{t_{i-1}})\\
&\ge\phi(T_{i-1},x_{t_{i-1}})+(x_{a_i}-x_{t_{i-1}})\\
&=\phi(T_{i-1}-t_{i-1}+a_i,x_{t_{i-1}})+(x_{a_i}-x_{t_{i-1}})\\
&=\phi(T_{i-1}-t_{i-1}+a_i,x_{a_{i}}),
\end{align*}
where the second inequality holds since $t_{i-1}$ never envies $a_j$ by the minimality of $a_i$.
Hence, we have
\begin{align*}
\phi(T_i,x_{s_i})
&=\phi(T_i,x_{a_i})-\int_{x_{s_i}}^{x_{a_i}}\frac{\d{r}}{n_{T_i}(r)}\\
&>\phi(T_{i-1}-t_{i-1}+a_i,x_{a_{i}})-\int_{x_{s_i}}^{x_{a_i}}\frac{\d{r}}{n_{T_i}(r)}\\
&\ge \phi(T_{i-1}-t_{i-1}+a_i,x_{a_{i}})-(a_i-x_{s_i})\\
&=\phi(T_{i-1}-t_{i-1}+s_i,x_{s_{i}}),
\end{align*}
meaning that $s_{i}$ envies $t_{i-1}$. 
\end{proof}

\begin{proof}[Proof of Theorem \ref{thm:envyfree:consecutive}]
For positive integer $\mu$ and $\kappa$ with  $\mu\leq n$ and $\kappa \leq k$, let us consider the subproblem in which $[\mu]$ is the set of agents and $[\kappa]$ is the set of taxis.  
For a nonnegative integer $\ell$,  
let  $z(\mu,\kappa,\ell)$ be a mapping to $\{0,1\}$ such that  $z(\mu,\kappa,\ell)=1$ if and only if 
the subproblem has  a consecutive envy-free feasible allocation $\cT$  with $|T_{\kappa}|=\ell$. 
When there is only one taxi (i.e., $\kappa=1$), it is not difficult to see that
\begin{align}
z(\mu,1,\ell)&=\begin{cases}
1&\text{if }\mu=\ell\text{ and }\mu\le q_1,\\
0&\text{otherwise}.
\end{cases}
\end{align}
Moreover, by Lemmas \ref{lemma:head_size} and \ref{lemma:cons_check},  for an integer  $\kappa$ with $1 < \kappa \leq k$, 
we have $z(\mu,\kappa,\ell)=1$ if and only if
there exists $\ell'\in [n]$ such that 
\begin{align*}
&\ell \leq \ell' \leq q_{\kappa-1}, \\
&z(\mu-\ell,\kappa-1,\ell')=1,\\
&\phi(T',x_{\mu-\ell})\le\phi(T\setminus\{\mu-\ell+1\}\cup\{\mu-\ell\},x_{\mu-\ell}), \mbox{ and}\\
&\phi(T,x_{\mu-\ell+1})\le\phi(T'\setminus\{\mu-\ell\}\cup\{\mu-\ell+1\},x_{\mu-\ell+1}),
\end{align*}
where $T'=\{\mu-\ell-\ell'+1,\dots,\mu-\ell\}$ and $T=\{\mu-\ell+1,\dots,\mu\}$.
Therefore, the original instance contains a consecutive envy-free feasible allocation if and only if $\max_{\ell\in[n]}z(n,k,\ell)=1$. 
If this is the case, such an allocation can be found using  a standard dynamic programming approach, which requires $O(k n^3)$ time. 
\end{proof}

\subsection{Hardness results}
Having established polynomial-time algorithms for several cases, we turn our attention to the general problem of computing an envy-free feasible allocation.  
Unfortunately, it remains an open question whether the problem of deciding the existence of an envy-free feasible allocation is NP-hard or polynomial-time solvable. 
We instead consider two natural relaxations of envy-freeness and prove the NP-hardness of deciding the existence of such allocations.

The first one relaxes the envy-free requirement, by imposing the necessary conditions in Split Lemma. More precisely, we consider the conditions (i)--(iii) in Lemma~\ref{lemma:split}. 
We say that a feasible allocation $\cT$ satisfies \emph{the split conditions} if the conditions (i)--(iii) in Lemma~\ref{lemma:split} are satisfied for any $x$ and distinct $T,T'\in\cT$ such that $T_{=x}$ and $T'_{=x}$ are non-empty. Computing such an allocation turns out to be NP-hard. 

\begin{figure}
\centering
\begin{tikzpicture}[xscale=.5]
\draw[|->,thick] (0,3)--(14,3);
\draw[|->,thick] (0,2)--(14,2);
\draw[|->,thick] (0,0)--(14,0);
\node[left] at (0,3) {$T_1$};
\node[left] at (0,2) {$T_2$};
\node[left] at (0,1) {$\vdots$};
\node[left] at (0,0) {$T_m$};
\node[fill=black, shape=circle, inner sep=2pt, label={[font=\scriptsize]above:{$1$}}, label={[font=\tiny,blue]below:{}}] at (1,3) {};
\node[fill=black, shape=circle, inner sep=2pt, label={[font=\scriptsize]above:{$2$}}, label={[font=\tiny,blue]below:{}}] at (1,2) {};
\node[fill=black, shape=circle, inner sep=2pt, label={[font=\scriptsize]above:{$m$}}, label={[font=\tiny,blue]below:{1}}] at (1,0) {};

\node[fill=black, shape=circle, inner sep=2pt, label={[font=\scriptsize]above:{$\beta\cdot a_{i_1^1}$}}, label={[font=\tiny,blue]below:{}}] at (2,3) {};
\node[fill=black, shape=circle, inner sep=2pt, label={[font=\scriptsize]above:{$\beta\cdot a_{i_2^1}$}}, label={[font=\tiny,blue]below:{}}] at (5,3) {};
\node[fill=black, shape=circle, inner sep=2pt, label={[font=\scriptsize]above:{$\beta\cdot a_{i_3^1}$}}, label={[font=\tiny,blue]below:{}}] at (7,3) {};

\node[fill=black, shape=circle, inner sep=2pt, label={[font=\scriptsize]above:{$\beta\cdot a_{i_1^2}$}}, label={[font=\tiny,blue]below:{}}] at (4,2) {};
\node[fill=black, shape=circle, inner sep=2pt, label={[font=\scriptsize]above:{$\beta\cdot a_{i_2^2}$}}, label={[font=\tiny,blue]below:{}}] at (6.5,2) {};
\node[fill=black, shape=circle, inner sep=2pt, label={[font=\scriptsize]above:{$\beta\cdot a_{i_3^2}$}}, label={[font=\tiny,blue]below:{}}] at (8,2) {};

\node[fill=black, shape=circle, inner sep=2pt, label={[font=\scriptsize]above:{$\beta\cdot a_{i_1^m}$}}, label={[font=\tiny,blue]below:{}}] at (3,0) {};
\node[fill=black, shape=circle, inner sep=2pt, label={[font=\scriptsize]above:{$\beta\cdot a_{i_2^m}$}}, label={[font=\tiny,blue]below:{}}] at (5,0) {};
\node[fill=black, shape=circle, inner sep=2pt, label={[font=\scriptsize]above:{$\beta\cdot a_{i_3^m}$}}, label={[font=\tiny,blue]below:{}}] at (7.5,0) {};

\node[fill=black, shape=circle, inner sep=2pt, label={[font=\scriptsize]above:{$(S+1)\beta-1$}}, label={[font=\tiny,blue]below:{$3m+2$}}] at (10,3) {};
\node[fill=black, shape=circle, inner sep=2pt, label={[font=\scriptsize]above:{$(S+1)\beta-2$}}, label={[font=\tiny,blue]below:{$3m+3$}}] at (11,2) {};
\node[fill=black, shape=circle, inner sep=2pt, label={[font=\scriptsize]above:{$(S+1)\beta-m$}}, label={[font=\tiny,blue]below:{$4m+1$}}] at (13,0) {};
\end{tikzpicture}
\caption{A feasible allocation that satisfies the split conditions}\label{fig:hard}
\end{figure}
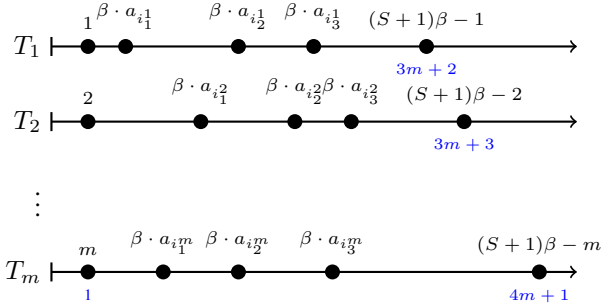

\begin{theorem}\label{thm:hard_split}
It is NP-complete to decide whether there exists a feasible allocation that satisfies the split conditions.
\end{theorem}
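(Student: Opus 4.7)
The plan is to prove membership in NP and then reduce from a classical partition problem, guided by the structure shown in Figure~\ref{fig:hard}. Membership is immediate: given a feasible allocation $\cT$, we can check in polynomial time, for every type $x$ and every pair $T,T'\in\cT$ with $T_{=x},T'_{=x}\ne\emptyset$, whether the three conditions of Lemma~\ref{lemma:split} hold. So the bulk of the argument is NP-hardness.

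For hardness I would reduce from 3-PARTITION, which is strongly NP-hard. Given positive integers $a_1,\dots,a_{3m}$ with $\sum_i a_i = mS$ and each $a_i\in(S/4,S/2)$, the construction (matching Figure~\ref{fig:hard}) uses $m$ taxis of a common fixed capacity and three families of agents:
\begin{itemize}
\item \emph{starter agents}: a small group of agents near the origin whose role is to force each taxi to be used and to make the head of every used taxi behave the same way;
\item \emph{item agents}: for each $i\in[3m]$, an agent at destination $\beta\cdot a_i$, where $\beta$ is chosen large enough that the $a_i$'s dominate the starter contributions but small enough that the anchors remain far away;
\item \emph{anchor agents}: $m$ agents placed at the strictly distinct destinations $(S+1)\beta-1,(S+1)\beta-2,\dots,(S+1)\beta-m$.
\end{itemize}
The anchors have pairwise distinct types and are so far from everything else that each anchor must occupy its own taxi and always be the last drop-off in that taxi; combined with the capacity bound, this forces every used taxi to carry exactly one anchor and exactly three item agents.

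Correctness is then argued via Lemma~\ref{lemma:split}. In the forward direction, given a valid 3-partition $I_1,\dots,I_m$ with $\sum_{i\in I_k}a_i=S$, I place the three item agents indexed by $I_k$ together with anchor $k$ in taxi $k$ and the starter agents as in Figure~\ref{fig:hard}; by construction, whenever two taxis share an item type, those items appear at the drop-off head in both taxis and the conditions (i)--(iii) of Lemma~\ref{lemma:split} hold. For the reverse direction, I assume a feasible allocation satisfying the split conditions and argue structurally: the size condition (ii) pins down a common coalition size across all taxis carrying a given shared item type, which, combined with the distinct-anchor structure, forces the three items in each taxi to sum to exactly $S$; any deviation would either violate (ii) on a shared item type or (iii) by making a taxi that is already split with equal parts hold extra agents.

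The main obstacle is engineering the starter block and the magnitude of $\beta$ so that the reverse direction really collapses to the 3-PARTITION condition. Specifically, the delicate point is that the split conditions do not directly forbid an unbalanced assignment of items; one must exploit the interaction between (ii) (equal-size coalitions for shared types) and (iii) (if equal-multiplicity shared types exhaust a taxi, no further agents may ride it) so that the only way to accommodate all $3m$ items within the capacity bound is to split the multiset of item values into $m$ groups each summing to exactly $S$. Getting these constraints mutually tight -- and independent of the particular multiplicities that may occur among the $a_i$'s, which is what makes 3-PARTITION rather than, say, PARTITION, the right source problem -- is the technically delicate step; everything else (NP membership, the forward direction, the standard polynomial-time arithmetic of the construction) is routine.
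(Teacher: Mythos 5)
There is a genuine gap, and it sits exactly where you flag it: your construction never creates the leverage needed to force the sums to equal $S$. You encode each item $a_i$ as a \emph{single} agent at destination $\beta\cdot a_i$ and each anchor as a single agent at a distinct destination. But the split conditions (i)--(iii) of Lemma~\ref{lemma:split} only constrain types $x$ that appear in \emph{two or more} coalitions; they say nothing about costs or about types carried by only one taxi. In your gadget every item type and every anchor type occurs exactly once, so no type is ever split, the split conditions are vacuously satisfied by essentially any feasible assignment of three items per taxi, and the reverse direction collapses: an unbalanced assignment is perfectly legal. Your closing paragraph admits this is "the technically delicate step," but that step is the entire content of the reduction, and your chosen encoding makes it unachievable.

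The paper's construction resolves this by encoding the numbers as \emph{multiplicities}, not destinations: there are $\beta\cdot a_i$ agents all at the common destination $i+1$, $(S+1)\beta-i$ dummy agents at destination $3m+1+i$, and $m(m+1)/2$ agents at destination $1$, with $\beta=m(m+1)$ and every taxi of capacity $(2S+1)\beta$; total supply equals total capacity, so every taxi is \emph{exactly} full. Condition (i) then forces each non-head group either to ride entirely in one taxi or to be the first drop-off wherever it is split, and a counting argument modulo $\beta$ (the item and capacity counts are multiples of $\beta$, the dummy counts are $-i \bmod \beta$, the type-$1$ count is $\sum_i i$) forces the type-$1$ agents to be split across all $m$ taxis with exactly $i$ of them in taxi $i$. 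The per-taxi capacity equation then reads $i+(S+1)\beta-i+\sum_{x\in Q_i}|A_{=x}|=(2S+1)\beta$, i.e., the item groups in taxi $i$ sum to exactly $S\beta$, which is the 3-partition condition. You would need to rebuild your gadget along these lines (values as group sizes, exact fill, and a residue argument that pins down how the only genuinely split type distributes) before the reverse direction can go through.
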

\begin{proof}
We provide a reduction from the \emph{3-partition} problem, which is a strongly NP-complete problem~\citep{GJ1979}.
In the problem, we are given $3m+1$ positive integers $a_1,a_2,\dots,a_{3m},S$ satisfying $S/4<a_i<S/2~(\forall i\in [3m])$ and $\sum_{i\in[3m]}a_i=mS$.
Our task is to decide whether there exists a partition $(I_1,\dots,I_m)$ of the index set $[3m]$ such that $\sum_{i\in I_j}a_i=S$ for any $j\in[m]$.
Note that, by the condition $S/4<a_i<S/2~(\forall i\in [3m])$, every such $I_j$ must contain exactly three elements from $[3m]$.

Let $a_1,a_2,\dots,a_{3m},S$ be an instance of the 3-partition problem.
We construct a ride allocation instance $(A,[k],(x_a)_{a\in A},(q_i)_{i\in [k]})$ which has a feasible allocation that satisfies the split conditions if and only if the given 3-partition instance is a Yes-instance.
Let $\beta=m(m+1)$.
We set the number of taxis $k$ to be $m$ and the capacity of each taxi to be $q=(2S+1)\beta$.
The agents $A$ are partitioned into $4m+1$ groups by the destination types $\{1,2,\dots,4m+1\}$.
We set the number of agents of type $1$ to be $m(m+1)/2~(=1+2+\dots+m)$. 
We will see that the agents of type $1$ must be the first passengers to drop off in every taxi.
The following $3m$ types $P\coloneqq\{1+i\mid i\in [3m]\}$ are associated with the index set $[3m]$.
For each $i\in[3m]$, we set the number of agents of type $i+1$ to be $\beta\cdot a_i$.
The remaining types $D\coloneqq\{3m+1+i\mid i\in [m]\}$ are dummy to ensure that the agents of type $1$ are split into all the taxis.
For each $i\in [m]$, we set the number of agents of type $3m+1+i$ to be $(S+1)\beta-i$.
Note that the total capacity of the taxis and the number of agents are both $(2S+1)m\beta$, and hence we must allocate $q$ agents for each taxi.

Suppose that the given 3-partition instance is a Yes-instance, i.e., 
there exists a partition $(I_1,\dots,I_m)$ of the index set $[3m]$ such that $\sum_{i\in I_j}a_i=S$ for any $j\in[m]$.
Let $I_j=\{i^j_1,i^j_2,i^j_3\}$ and let $(H_1,\dots,H_m)$ be a partition of $A_{=1}$ such that $|H_j|=j$ for each $j\in [m]$.
Let $\cT$ be an allocation with $T_j=H_j\cup \bigcup_{i\in I_j}A_{=1+i}\cup A_{=3m+1+j}$.
Then, it is not difficult to see that $\cT$ is feasible and satisfies the split conditions (see Fig.~\ref{fig:hard}).

Conversely, suppose that there exists a feasible allocation $\cT$ that satisfies the split conditions.
We first show that there is at least one agent of type $1$ in every taxi, i.e., $T_i\cap A_{=1}\ne\emptyset$.
Let $J$ be the set of taxis which contains some agent of type $1$, i.e., $J=\{i\in[k]\mid T_i\cap A_{=1}\ne\emptyset\}$.
Then, by the condition (i), $A_{=x}\subseteq T_i$ or $A_{=x}\cap T_i=\emptyset$ for any $x>1$ and $i\in J$.
Let $Q$ be the set of types of which the agents ride a taxi in $J$, i.e., $Q=\{x>1\mid A_{=x}\subseteq T_i~(\exists i\in J)\}$.
Since $q$ is a multiple of $\beta$, the number of agents who ride a taxi in $J$ is also a multiple of $\beta$.
By counting the number of agents modulo $\beta$, we obtain
\begin{align*}
0
&\equiv |A_{=1}|+\sum_{x\in Q}|A_{=x}| \pmod{\beta}\\
&\equiv \frac{m(m+1)}{2}+\sum_{i\in[m]:\,3m+1+i\in Q}|A_{=x}| \pmod{\beta}\\
&\equiv\frac{m(m+1)}{2}-\sum_{i\in[m]:\,3m+1+i\in Q}i \pmod{\beta}
\end{align*}
Thus, $Q$ must contain all the types in the dummy types $D$ (recall that $\beta>\frac{m(m+1)}{2}$).
Here, each taxi in $J$ cannot carry two type in $D$ because the number of agents of each type in $D$ is larger than the half of the capacity of each taxi $q$.
Hence, we conclude that there is at least one agent of type $1$ in every taxi, i.e., $J=[k]$.

Now, we prove that there exists a desired partition of $[3m]$.
Without loss of generality, we may assume that $T_i$ contains the agents of type $3m+1+i$ for each $i\in[m]$.
Since $q$ is a multiple of $\beta$ and the number of agents of type $x\in P$ is a multiple of $\beta$, the number of agents of type $1$ in the $i$th taxi must be $i$, i.e., $|T_i\cap A_{=1}|=i$.
Let $Q_i$ be the set of types of which the agents ride $i$th taxi, i.e., $Q_i=\{x\in P\mid A_{=x}\subseteq T_i\}$.
Then, we have $\sum_{x\in Q_i}|A_{=x}|=q-((S+1)\beta-i)-i=S\cdot\beta$, and hence the partition $(I_1,\dots,I_k)$ of $[3m]$ with $I_j=\{i\in[3m]\mid i+1\in Q_j\}$ satisfies $\sum_{i\in I_j}a_i=S$ for any $j\in[m]$.
\end{proof}

The second relaxation generalizes the notion of envy-freeness, by only looking into envies within particular groups. 
For multiple sets of agents ${\cal S}=\{S_1,S_2,\ldots,S_q\}$, we say that an allocation is \emph{envy-free in ${\cal S}$} if for each $S\in {\cal S}$, the agents in $S$ do not envy each other. 
The notion of \emph{envy-freeness in ${\cal S}$} is a generalized envy-freeness in the sense that ${\cal S}=\{A\}$ coincides with the normal envy-freeness. 
Such a generalized envy-freeness is useful to control the rank-wise service quality.  
For example, in a frequent flyer program of a airline company, agents in an identical status are supposed to receive a similar quality of services. In the context of our problem, it is desirable that agents in ${\cal S}$, a set of frequent flyers in a status, never envies another agent in ${\cal S}$. 
Unfortunately, it is also NP-hard to find an allocation that is envy-free in a given ${\cal S}$.  
 
\begin{theorem}\label{thm:hard_envyfree_inS}
Given a partition ${\cal S}$ of $A$, it is NP-complete to decide whether there exists a feasible allocation that is envy-free in ${\cal S}$. 
\end{theorem}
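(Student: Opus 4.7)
Membership in NP is routine: given a candidate allocation $\cT$, one can verify feasibility in polynomial time and envy-freeness within each class $S\in\mathcal{S}$ by comparing $\phi_j(T_j-b+a,x_a)$ and $\phi_i(T_i,x_a)$ for every ordered pair $a,b$ of agents lying in the same class. For NP-hardness, my plan is to reduce from 3-PARTITION, in the spirit of the reduction used for Theorem~\ref{thm:hard_split}.

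Given an instance $(a_1,\dots,a_{3m},S)$, I would construct a ride instance with $m$ taxis of a common capacity $q$ together with three groups of agents: (a) an ``anchor'' pool of same-type agents that will be split across all $m$ taxis, (b) for each $i\in[3m]$, a block of $\beta\, a_i$ ``payload'' agents whose type is associated with the number $a_i$, and (c) ``dummy'' agents of distinct large types that distinguish the $m$ taxis and force capacities to be filled exactly, where $\beta$ is a suitably chosen integer multiple of $m$. I would then define $\mathcal{S}$ so that each class groups together agents whose equal treatment under envy-freeness encodes the desired structural constraint; in particular, putting all payload agents of a common type into one class enforces, via the identity $\phi(T,x)=\phi(T',x)$ between same-type agents in different taxis, that the payload associated with each $a_i$ cannot be arbitrarily scattered.

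Correctness will parallel the proof of Theorem~\ref{thm:hard_split}: a valid 3-partition $(I_1,\dots,I_m)$ is realized by placing the payload corresponding to $\{a_i\mid i\in I_j\}$ together with the $j$-th dummy block and an appropriate share of anchors into taxi $j$; conversely, any feasible allocation that is envy-free in $\mathcal{S}$ should force, by a counting argument modulo $\beta$, each dummy type to occupy a single taxi, the anchors to be split across all $m$ taxis, and each $a_i$'s payload to sit in a single taxi, from which a balanced 3-partition is recovered via the capacity constraint. The hard part will be that envy-freeness is only required \emph{within} each class of $\mathcal{S}$, so envies across different classes are permitted; I need to calibrate the dummy multiplicities, the value of $\beta$, and the precise design of $\mathcal{S}$ so that cross-class envies cannot be exploited to produce a ``non-3-partition'' allocation that nonetheless satisfies envy-freeness within every class of $\mathcal{S}$.
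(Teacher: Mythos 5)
Your NP-membership argument is fine, but the hardness part has a genuine gap at exactly the point you flag as ``the hard part,'' and it is not a matter of calibration---it is where the approach breaks. The forcing argument in the proof of Theorem~\ref{thm:hard_split} relies on the split conditions, in particular condition (i) of Lemma~\ref{lemma:split} (agents of a split type must be the \emph{first} to drop off in their taxis) and condition (iii); these are what prevent a payload block from being scattered over several taxis that already carry anchor agents, and they are what make the counting argument modulo $\beta$ go through. But if you only impose envy-freeness inside each type class of $\mathcal{S}$, the sole constraint you obtain for a type $x$ split between $T$ and $T'$ is $\phi(T,x)=\phi(T',x)$, which can hold perfectly well when $x$ is \emph{not} a first drop-off point: two taxis each carrying two anchors of type $1$ and two payload agents of type $x$ give the same cost $\tfrac14+\tfrac{x-1}{2}$ on both sides. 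Condition (i) is derived in Lemma~\ref{lemma:split} from a \emph{cross-type} envy (a type-$x$ agent in $T'$ envying an agent of smaller type in $T$), and cross-type envies are exactly what your relaxation permits. Consequently payload blocks can be fractured among taxis with balanced costs, the modular counting no longer pins down the dummies or the anchors, and nothing forces each $a_i$'s block into a single taxi. A ``non-3-partition'' allocation that is envy-free within every type class cannot be excluded by your construction as described.

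The paper's proof avoids this issue with a different reduction, from numerical 4-dimensional matching with all input numbers distinct. Distinctness makes every type class of $\mathcal{S}$ a singleton except one, so almost all within-class constraints are vacuous; the single nontrivial class consists of $k$ far-destination agents, one per capacity-$5$, exactly-full taxi, and the destinations $20a,12b,6c,2d$ are scaled so that the Shapley payment of that last agent telescopes to $60p$ minus the sum of the four numbers riding with her. Envy-freeness within that one class is then literally the statement that every taxi's quadruple has the same sum, i.e., sums to $p$. If you want to salvage a 3-partition-style reduction, you need an analogous device that concentrates the entire encoding into within-class payment comparisons rather than into structural split constraints; as written, your plan does not supply one.
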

\begin{proof}
We provide a reduction from the \emph{Numerical 4-dimensional matching} (N4DM) problem, which is a variant of \emph{4-partition}. 
In an N4DM instance, we are given a positive integer $p$ and four sets of $k$ positive integers  $S_a=\{a_1,a_2,\ldots,a_{k}\}$, $S_b=\{b_1,b_2,\ldots, b_k\}$,  $S_c=\{c_1,c_2\ldots,c_k\}$ and $S_d=\{d_1,d_2\ldots,d_k\}$. Here, we can impose another condition that all the numbers in $S_a \cup S_b \cup S_c \cup S_d$ are distinct. 
Our task is to decide whether there exists a subset $M$ of $S_a\times S_b\times S_c\times S_d$ such that every integer in $S_a$, $S_b$,  $S_c$  and $S_d$ occurs exactly once and that for every quadruple $(a,b,c,d)\in M$ $a+b+c+d=p$ holds. 
The hardness of 4-partition is shown in \citet[Theorem 4.3]{GJ1979}. 
It actually proves N4DM with the distinct condition.  
We can further assume without loss of generality that  $3\max S_a < \min S_b$, $2\max S_b < \min S_c$ and $2\max S_c < \min S_d$, because otherwise we can use  $S'_b=\{b+n^{\alpha} \mid b\in S_b\}$ with a large constant $\alpha$ instead of $S_b$, for example. Thus, $\min S_a < \max S_a \ll \min S_b < \max S_b \ll \min S_c < \max S_c \ll \min S_d < \max S_d$ holds roughly. Furthermore, we assume that $k \equiv_{120} 1$, that is, $k=120 k'+1$ for some positive integer $k'$.

The reduction is as follows: We prepare $4k$ agents for $S_a\cup S_b \cup S_c \cup S_d$ together with extra $k$ agents, called $S_e$. Thus we have $5k$ agents in total.  
For $a \in S_a$, $b \in S_b$, $c\in S_c$ and $d\in S_d$, the destinations of the corresponding agents $a,b,c$ and $d$ are respectively $x_a=20a$, $x_b=12b$, $x_c = 6c$ and $x_d=2d$. 
The destination of every extra agents $e_i\in S_e$ is all $x_{e_i}=60p$. The capacities of the $k$ taxis are also same $5$, and thus all the taxis should be full in a feasible allocation. The partition is defined by the types, that is, ${\cal S}=\{S_x \mid x\in \mathbb{R}_{>0}\}$, where $S(x) =\{a\in A\mid x_a =x \}$.  

We claim that the instance has an envy-free allocation in ${\cal S}$ if and only if the distinct N4DM instance is a yes-instance. 
We first show the if direction. We assume $M$ is a yes-solution, that is, any triple $(a,b,c,d)\in M$, $a+b+c+d=p$ holds. For a triple $(a,b,c,d)\in E$, we let $a,b,c,d$ and an $e\in S_e$ take a taxi. 
Note that $x_{a}<x_{b}<x_{c}<x_{d}<x_{e}$. Then their payments are as follows: $x_a/5=4a$ for agent $a$, $x_a/5+(x_b-x_a)4=3b-a$ for agent $b$, $x_a/5+(x_b-x_a)/4+(x_c-X_b)/3=2c-b-a$ for agent $c$, 
$x_a/5+(x_b-x_a)/4+(x_c-x_b)/3+(x_d-x_c)/2=60p-c-b-a=99p$ for agent $d\in S_d$. Since every agent in $S_d (=S(60p))$ pays exactly $99p$, agents in $S(60p)$ never envy each other. We then check the envy-freeness of the agents in $S(x_c)$. 
As seen above, the payment of an agent in $S(x_c)$ is $c-(a+b)=2c-p$, which does not depend on which taxi $c$ takes. 

We next consider only-if direction. Assume that there exists a feasible allocation in which any $d_i \in S$ does not envy another $d_j\in S$. In a feasible allocation, each taxi has exactly one agent in $S$; otherwise, there are two taxis with capacity 4 that deliver different numbers of agents in $S$ due to $k\equiv_{24}=\equiv_{2\cdot 3\cdot 4} 1$, which makes an envy. For example, suppose that a taxi has 3 agents in $S$ and a taxi has 4 agents in $S$. Then, an agent in the former taxi envies one in the latter taxi, because an agent in the former taxi pays $60p/3-\max{S_3}/4>30p$ and an agent in the latter taxi pays $60p/4=15p$. 

Thus, each taxi has exactly one agent in $S$, whose payment is determined by the other members in the taxi. 
If some taxi has two or more agents from $S_2$ and another taxi has at most one $S_1$, the former taxi is cheaper for the agent in $S$. This and similar arguments imply that every taxi has one agent from each of $S_1$, $S_2$, $S_3$, $S_4$ and $S_5$ in an envy-free feasible allocation. By the argument of if-direction, if a taxi has agent $a$ from $S_1$, agent $b$ from $S_2$, agent $c$ from $S_3$, and agent $d$ from $S_4$, the payment of the remaining $e$ is $60p-(a+b+c+d)$. This implies that such an allocation of agents corresponds to an N4DM solution. 
\end{proof}

We note that, the above proof implies the NP-hardness for another relaxed variant: that is, given a subset $S$ of $A$, it is NP-complete to decide whether there is a feasible allocation that is envy-free in $S$.

\section{Stable and socially optimal allocations}
We have seen that the set of envy-free allocations may be empty even when a feasible outcome exists. In contrast, we will show in this section that, stability as well as social optimality are possible to achieve simultaneously: a feasible allocation that greedily groups agents from the furthest destinations together satisfies Nash stability, strong swap-stability, and social optimality. Specifically, we  design the following \emph{backward greedy} algorithm which constructs coalitions $T_i$ in the increasing order of $i$ by greedily adding agents $j$ in the decreasing order until $T_i$ exceeds the capacity, where the formal description can be found in Algorithm \ref{alg:backward}. 

\begin{algorithm}
\caption{Backward greedy}\label{alg:backward}
Initialize $T_i \ot \emptyset$ for each $i\in [k]$ and let $\kappa \ot 1$\;
\For{$a\ot n$ to $1$}{
 \If{$|T_{\kappa}|=q_{\kappa}$}{    $\kappa\ot \kappa+1$\;
    \lIf{$\kappa>k$}{\Return ``No feasible allocation''}
 }
 Set $T_{\kappa}\ot T_{\kappa}+a$\;
}
\Return $(T_1,T_2,\dots,T_{k})$\;
\end{algorithm} 

The following theorem states that Algorithm~\ref{alg:backward} computes a desired outcome in polynomial time.

\begin{theorem}\label{thm:stable:SO}
If a given instance has a feasible allocation, the backward greedy computes in polynomial time a feasible allocation that is socially optimal, Nash stable, and strongly swap stable.
\end{theorem}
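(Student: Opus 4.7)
The plan is to verify the three claimed properties of the output $\cT$ of Algorithm~\ref{alg:backward} in turn, exploiting the key structural fact that $\cT=(T_1,\ldots,T_k)$ consists of contiguous index blocks $T_i=\{n-\sigma_{i-1}-q_i+1,\ldots,n-\sigma_{i-1}\}$ (where $\sigma_i:=q_1+\cdots+q_i$), with $T_1,\ldots,T_{m-1}$ filled to capacity, $T_m$ holding the remaining agents, and $T_{m+1},\ldots,T_k$ empty. In particular, $\min_{a\in T_i}x_a\ge\max_{a\in T_j}x_a$ whenever $i<j$, and $|T_1|\ge\cdots\ge|T_m|$. Polynomial running time is immediate from the single pass over $a=n,\ldots,1$.

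Feasibility follows because the algorithm reports failure exactly when $\sigma_k<n$, which coincides with the nonexistence of any feasible allocation; otherwise the output is feasible by construction. For \emph{social optimality}, take any feasible $\cT^*$ and order its non-empty coalitions by decreasing maximum destination into values $y_1\ge\cdots\ge y_{m^*}$. Since the top $i$ coalitions occupy $i$ distinct taxis whose capacities sum to at most $\sigma_i$, they contain at most $\sigma_i$ agents; hence at least $n-\sigma_i$ agents live in coalitions with max $\le y_{i+1}$, forcing $y_{i+1}\ge x_{n-\sigma_i}$. Summing gives $\sum_{i=1}^{m^*}y_i\ge\sum_{i=1}^{m}x_{n-\sigma_{i-1}}$, which is precisely the greedy cost (one checks $m^*\ge m$ because $\sigma_{m^*}<n$ would contradict feasibility of $\cT^*$).

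For \emph{Nash stability}, I would examine every possible deviation of $a\in T_i$ to $T_j$ with $j\ne i$. If $j<m$, then $T_j$ is already at capacity and the deviation has cost $\infty$. If $j>m$, then $T_j$ is empty and $\phi_j(\{a\},x_a)=x_a\ge\phi_i(T_i,x_a)$ since $n_{T_i}(r)\ge 1$ on $[0,x_a]$. The only delicate case is $i<m$, $j=m$, and $|T_m|<q_m$: here $x_a\ge\max T_m$, and I would partition $[0,x_a]$ into the three ranges $r\le\min T_m$, $r\in(\min T_m,\max T_m]$, and $r\in(\max T_m,x_a]$, verifying in each that $n_{T_m+a}(r)\le n_{T_i}(r)$ using $|T_m|+1\le q_m\le q_i$ together with $\max T_m\le\min T_i$. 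Integrating yields $\phi_m(T_m+a,x_a)\ge\phi_i(T_i,x_a)$, ruling out any beneficial deviation.

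For \emph{strong swap stability}, suppose for contradiction that $a\in T_i$ envies $b\in T_j$ and $b$ can replace $a$. Assume WLOG $i<j$ (so $x_a\ge x_b$; the case $i>j$ is symmetric, integrating from $x_a$ to $x_b$). Direct decomposition of the Shapley cost gives $\phi_j(T_j-b+a,x_a)=\phi_j(T_j,x_b)+\int_{x_b}^{x_a}\frac{dr}{n_{T_j}(r)+1}$ and $\phi_i(T_i-a+b,x_b)=\phi_i(T_i,x_b)$, so the envy hypothesis becomes
\[
\phi_i(T_i,x_b)-\phi_j(T_j,x_b) \;>\; \int_{x_b}^{x_a}\Bigl(\tfrac{1}{n_{T_j}(r)+1}-\tfrac{1}{n_{T_i}(r)}\Bigr)dr.
\]
A pointwise case analysis on $r\in(x_b,x_a]$ using $|T_i|\ge|T_j|$, $\min T_i\ge\max T_j$, and the fact that $b$ is excluded from the count once $r>x_b$ gives $n_{T_i}(r)\ge n_{T_j}(r)+1$ throughout, so the right-hand integrand is non-negative. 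Therefore $\phi_i(T_i-a+b,x_b)=\phi_i(T_i,x_b)>\phi_j(T_j,x_b)$, contradicting the replacement condition. The chief obstacle is executing this pointwise inequality cleanly: one must juggle contiguity, destination-range nesting, and size monotonicity of the greedy coalitions simultaneously, taking care around the boundaries $\min T_j$, $\max T_j$, and $x_b$.
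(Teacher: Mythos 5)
Your proposal is correct and follows essentially the same route as the paper: feasibility and polynomial time are immediate, social optimality comes from the same counting argument on the sorted last drop-off points, and both Nash stability and swap stability rest on the pointwise domination of occupancy counts $n_{T_i}(r)\ge n_{T_j}(r)$ between an earlier (larger, farther) coalition and a later one. The only stylistic difference is in the swap-stability step, where the paper shows that any cross-taxi replacement forces the two occupancy functions to coincide on $[0,x_a]$ (so the swap is cost-neutral), whereas you derive the contradiction directly by decomposing both costs at $\min(x_a,x_b)$; your appeal to symmetry for the case $i>j$ does check out, since the required pointwise bound $n_{T_{\min(i,j)}}(r)\ge n_{T_{\max(i,j)}}(r)+1$ holds on the relevant interval in both cases.
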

\begin{proof}
It is not difficult to see that the backward greedy 
 given in Algorithm~\ref{alg:backward} requires $O(n+k)$ time, and computes a feasible allocation if there exists such an allocation.
Let $\cT=(T_1,\ldots,T_{k})$ be a feasible allocation constructed by the algorithm, and let $T_h$ be the last nonempty coalition in $\cT$, i.e., $T_\kappa=\emptyset$ for $\kappa > h$. 
We first that allocation $\cT$ is  Nash stable. 
Note that coalitions $T_1,\dots,T_{h-1}$ have no seat available, and  empty taxis $\kappa~(>h)$ are not profitable to deviate.  
Thus  it is enough to consider deviations to the last coalition $T_{h}$. 
Moreover, 
 if agent $a \in T_i$ wants to deviate to $T_{h}$, 
 then she would become the last passenger to drop off but  $|T_{h}|+1 \leq q_{h} \leq q_i =|T_i|$ holds. 
Thus, by letting $x_b=\max_{t \in T_{h}}x_{t}$,  we have 
\begin{align*}
&\phi(T_{h}+a,x_a) - \phi(T_i,x_a) \\
&\ge (\phi(T_{h}+a,x_b)+(x_a-x_b))-(\phi(T_i,x_b)+(x_a-x_b)),\\
&=\phi(T_{h}+a,x_b)-\phi(T_i,x_b)\\
&\ge \frac{x_b}{|T_{h}|+1}-\frac{x_b}{|T_i|} \geq 0, 
\end{align*}
which yields a contradiction. Thus $\cT$ is Nash stable. 

We next show that $\cT$ is strongly swap-stable. 
Since swapping a pair of agents in the same taxi has no effect on the cost, it suffices to show that there is no beneficial swap of agents in different taxis. 
More precisely,  
let $a \in T_i$ and $b \in T_j$ be two agents with $i <j$. 
We show that  if agent $a$ can replace $b$, i.e.,
\begin{equation}
\label{eq-aplastsec1}
\phi(T_j-b+a,x_{a})\le\phi(T_i,x_{a}), 
\end{equation}
then swapping $a$ and $b$ have no effect on their costs, i.e., 
\begin{align*}
&\phi(T_j-b+a,x_{a})=\phi(T_i,x_{a}),  \mbox{ and} \\
&\phi(T_i-a+b,x_{b})=\phi(T_j,x_{b}). 
\end{align*}

To see this, we first observe that by construction of $\cT$, $|T_i|\ge |T_j|$ and $x_{a}\ge x_t$ for all $t\in T_j$.
Thus,  we have 
\begin{equation}
\label{eq-lastmy1}
n_{T_j-b+a}(x)\le n_{T_i}(x) \ \mbox{ for all } x\in\mathbb{R}_{\ge 0},
\end{equation}
meaning that at any $x>0$, the number of agents in taxi $i$ is at least the number of agents in taxi $j$ with  $a$ and $b$ swapped.
On the other hand, by the definition of $\phi$, (\ref{eq-aplastsec1}) is equivalent to
\begin{align*}
\int_{0}^{x_{a}}\frac{\d{r}}{n_{T_j-b+a}(r)}&\leq  \int_{0}^{x_{a}}\frac{\d{r}}{n_{T_i}(r)}, 
\end{align*}
which together with  (\ref{eq-lastmy1}) implies that  $n_{T_j-b+a}(x)=n_{T_i}(x)$ for all $x\in\mathbb{R}_{\ge 0}$ with $x \leq x_{a}$.
Hence we have $|T_i|=|T_j|$ and $x_t=x_{a}$ for all $t\in T_j$.
This implies that $x_a=x_b$, $n_{T_i}=n_{T_i-a+b}$,  
$n_{T_j}=n_{T_j-b+a}$,  and 
$n_{T_i}(x)=n_{T_j}(x)$ for all $x\in\mathbb{R}_{\ge 0}$ with $x \leq x_{a}$, which proves the claim.


It remains to show that $\cT=(T_1,\dots ,T_k)$ is socially optimal i.e., $\cT$ is a feasible allocation that minimizes the total cost among feasible allocations. 
For $i \in [k]$, let $y_i$ denote the furthest destination of $T_i$, i.e.,  $y_i=\max_{a \in T_i} x_a$ if $T_i\not=\emptyset$, and $0$ otherwise. 
Then 
 the total cost of $\cT$ is given by $\sum^{k}_{i=1} y_i$. 
Since the capacities satisfy $q_1\geq \dots \geq q_k$,  we have $y_1 \geq \dots \geq y_k$.
We claim that the nonincreasing sequence of the last drop-off points in socially optimal allocations is unique
and identical to that of the allocation obtained by the backward greedy algorithm. 
This implies that $\cT$ is socially optimal. 

Let $\cT'=(T'_1, \dots , T'_k)$ be a socially optimal feasible allocation, and for $i \in [k]$, let $y'_i$ denote the furthest destination of $T_i'$.
Let $z_1,\dots , z_k$ be a sequence obtained from $y_i'$ \,($i \in [k]$) by sorting them in the nonincreasing order. 
Then our claim is equivalent to the condition that $z_i=y_i$ for all $i \in [k]$. 
For an index $i$, let $U_{i}$ be the coalition in $\cT'$ corresponding to $z_i$. 
Then we note that $z_i$ is the maximum $x_a$ among agents $a$ in $A \setminus (\bigcup_{\ell\in [i-1]} U_{\ell})$. 
Since $\sum_{\ell  \in [i-1]}|U_\ell| \leq \sum_{\ell \in [i-1]} q_\ell$, the backward greedy construction implies  $z_i \geq y_i$. 
Since $\cT'$ is socially optimal, i.e., $\sum_{i \in [k]}z_i=\sum_{i \in [k]}y_i$, we have  $z_i=y_i$ for all $i \in [k]$, which  
 completes the proof. 
\end{proof}

As a  corollary of the above theorem, we can see that there exists a  feasible allocation that satisfies all the notions defined in Section \ref{sec:solution}, except for envy-freeness,  whenever a feasible allocation exists. 
We remark that the backward greedy algorithm fails to find an envy-free feasible allocation even when it exists, since there is an instance that has an envy-free feasible allocation but no consecutive envy-free feasible allocation, which can be found in Example \ref{ex:consecutive}. 

\section{Conclusion}
In this paper, we introduced a new model of the fair ride allocation problem on a line with an initial point. We proved that the backward greedy allocation satisfies Nash stability, strong swap-stability, and social optimality. We designed several efficient algorithms to compute an envy-free feasible allocation when some parameter of our input is small. The obvious open problem is the complexity of finding an envy-free allocation for the general case. We expect that the problem becomes NP-hard even when the maximum capacity is a constant. 

There are several possible extensions of our model. First, while we have assumed that agents ride at the same starting point, it would be very natural to consider a setting where the riding locations may be different. Indeed, passengers ride at different points in most of private carpooling services. Extending our results to this setting would be a promising research direction. Further, besides the class of path graphs, there are other underlying structures of destinations, such as grids and planar graphs. Although we expect that the Shapley value of a cost allocation problem on a more general graph structure may become necessarily complex, it would be interesting to analyze the properties of fair and stable outcomes in such scenarios. 

\bibliographystyle{named}

\begin{thebibliography}{}

\bibitem[\protect\citeauthoryear{Alonso-Mora \bgroup \em et al.\egroup
  }{2017}]{Alonso-Mora462}
Javier Alonso-Mora, Samitha Samaranayake, Alex Wallar, Emilio Frazzoli, and
  Daniela Rus.
\newblock On-demand high-capacity ride-sharing via dynamic trip-vehicle
  assignment.
\newblock {\em Proceedings of the National Academy of Sciences},
  114(3):462--467, 2017.

\bibitem[\protect\citeauthoryear{Ashlagi \bgroup \em et al.\egroup
  }{2019}]{Ashlagi2019}
Itai Ashlagi, Maximilien Burq, Chinmoy Dutta, Patrick Jaillet, Amin Saberi, and
  Chris Sholley.
\newblock Edge weighted online windowed matching.
\newblock In {\em Proceedings of the 20th ACM Conference on Economics and
  Computation (EC)}, pages 729--742, 2019.

\bibitem[\protect\citeauthoryear{Aziz and Savani}{2016}]{Aziz2016}
H.~Aziz and R.~Savani.
\newblock Hedonic games.
\newblock In F.~Brandt, V.~Conitzer, U.~Endriss, J.~Lang, and A.D. Procaccia,
  editors, {\em Handbook of Computational Social Choice}, chapter~15. Cambridge
  University Press, 2016.

\bibitem[\protect\citeauthoryear{Banerjee \bgroup \em et al.\egroup
  }{2018}]{Banerjee2018}
Siddhartha Banerjee, Yash Kanoria, and Pengyu Qian.
\newblock State dependent control of closed queueing networks.
\newblock In {\em Abstracts of the 2018 ACM International Conference on
  Measurement and Modeling of Computer Systems}, SIGMETRICS ’18, pages 2--4,
  New York, NY, USA, 2018. Association for Computing Machinery.

\bibitem[\protect\citeauthoryear{Barrot and Yokoo}{2019}]{Barrot2019}
Nathana\:{e}l Barrot and Makoto Yokoo.
\newblock Stable and envy-free partitions in hedonic games.
\newblock In {\em Proceedings of the 28th International Joint Conference on
  Artificial Intelligence (IJCAI)}, pages 67--73, 2019.

\bibitem[\protect\citeauthoryear{Bodlaender \bgroup \em et al.\egroup
  }{2020}]{Bodlaender2020}
Hans~L. Bodlaender, Tesshu Hanaka, Lars Jaffke, Hirotaka Ono, Yota Otachi, and
  Tom~C. van~der Zanden.
\newblock Hedonic seat arrangement problems.
\newblock In {\em Proceedings of the 19th International Conference on
  Autonomous Agents and Multi-Agent Systems (AAMAS)}, pages 1777--1779, 2020.

\bibitem[\protect\citeauthoryear{Bogomolnaia and
  Jackson}{2002}]{Bogomolnaia2002}
A.~Bogomolnaia and M.O. Jackson.
\newblock The stability of hedonic coalition structures.
\newblock {\em Games and Economic Behavior}, 38(2):201--230, 2002.

\bibitem[\protect\citeauthoryear{Bouveret \bgroup \em et al.\egroup
  }{2016}]{bouveret-chap}
S.~Bouveret, Y.~Chevaleyre, and N.~Maudet.
\newblock Fair allocation of indivisible goods.
\newblock In F.~Brandt, V.~Conitzer, U.~Endriss, J.~Lang, and A.~D. Procaccia,
  editors, {\em Handbook of Computational Social Choice}, chapter~12. Cambridge
  University Press, 2016.

\bibitem[\protect\citeauthoryear{Chalkiadakis \bgroup \em et al.\egroup
  }{2011}]{Chalkiadakis2011}
Georgios Chalkiadakis, Edith Elkind, and Michael Wooldridge.
\newblock Computational aspects of cooperative game theory.
\newblock {\em Synthesis Lectures on Artificial Intelligence and Machine
  Learning}, 5(6):1--168, 2011.

\bibitem[\protect\citeauthoryear{Chun \bgroup \em et al.\egroup
  }{2017}]{Chun2017}
Youngsub Chun, Cheng-Cheng Hu, and Chun-Hsien Yeh.
\newblock A strategic implementation of the shapley value for the nested
  cost-sharing problem.
\newblock {\em Journal of Public Economic Theory}, 19(1):219--233, 2017.

\bibitem[\protect\citeauthoryear{Foley}{1967}]{Foley}
Duncan~K. Foley.
\newblock Resource allocation and the public sector.
\newblock {\em Yale Economic Essays}, 7:45--98, 1967.

\bibitem[\protect\citeauthoryear{Garey and Johnson}{1979}]{GJ1979}
Michael~R. Garey and David~S. Johnson.
\newblock {\em Computers and Intractability: A Guide to the Theory of
  {NP}-Completeness}.
\newblock Freeman New York, 1979.

\bibitem[\protect\citeauthoryear{Goldman and Procaccia}{2015}]{Goldman2015}
Jonathan Goldman and Ariel~D. Procaccia.
\newblock Spliddit: Unleashing fair division algorithms.
\newblock {\em SIGecom Exchange}, 13(2):41–46, 2015.

\bibitem[\protect\citeauthoryear{Kyropoulou \bgroup \em et al.\egroup
  }{2019}]{Kyropoulou2019}
Maria Kyropoulou, Warut Suksompong, and Alexandros~A. Voudouris.
\newblock Almost envy-freeness in group resource allocation.
\newblock In {\em Proceedings of the 28th International Joint Conference on
  Artificial Intelligence (IJCAI)}, pages 400--406, 2019.

\bibitem[\protect\citeauthoryear{Littlechild and Owen}{1973}]{LO73}
S.~C. Littlechild and G.~Owen.
\newblock A simple expression for the shapley value in a special case.
\newblock {\em Management Science}, 20(3):370--372, 1973.

\bibitem[\protect\citeauthoryear{Monderer and Shapley}{1996}]{MS96}
Dov Monderer and Lloyd~S. Shapley.
\newblock Potential games.
\newblock {\em Games and Economic Behavior}, 14(1):124--143, 1996.

\bibitem[\protect\citeauthoryear{Nguyen and Rothe}{2016}]{Nguyen2016}
Nhan-Tam Nguyen and J\"{o}rg Rothe.
\newblock Local fairness in hedonic games via individual threshold coalitions.
\newblock In {\em Proceedings of the 15th International Conference on
  Autonomous Agents and Multi-Agent Systems (AAMAS)}, pages 232--241, 2016.

\bibitem[\protect\citeauthoryear{Pavone \bgroup \em et al.\egroup
  }{2012}]{Pavone2012}
Marco Pavone, Stephen~L Smith, Emilio Frazzoli, and Daniela Rus.
\newblock Robotic load balancing for mobility-on-demand systems.
\newblock {\em The International Journal of Robotics Research}, 31(7):839--854,
  2012.

\bibitem[\protect\citeauthoryear{Peters and Elkind}{2015}]{Dominik}
Dominik Peters and Edith Elkind.
\newblock Simple causes of complexity in hedonic games.
\newblock In {\em Proceedings of the 24th International Joint Conference on
  Artificial Intelligence (IJCAI)}, page 617–623, 2015.

\bibitem[\protect\citeauthoryear{Rosenthal}{1973}]{RN90}
Robert~W. Rosenthal.
\newblock A class of games possessing pure-strategy nash equilibria.
\newblock {\em International Journal of Game Theory}, 2(1):65--67, 1973.

\bibitem[\protect\citeauthoryear{Santi \bgroup \em et al.\egroup
  }{2014}]{Santi2014}
Paolo Santi, Giovanni Resta, Michael Szell, Stanislav Sobolevsky, Steven~H.
  Strogatz, and Carlo Ratti.
\newblock Quantifying the benefits of vehicle pooling with shareability
  networks.
\newblock {\em Proceedings of the National Academy of Sciences},
  111(37):13290--13294, 2014.

\bibitem[\protect\citeauthoryear{Segal-Halevi and
  Nitzan}{2019}]{Segal-Halevi2019}
Erel Segal-Halevi and Shmuel Nitzan.
\newblock Fair cake-cutting among families.
\newblock {\em Social Choice and Welfare}, 53:709--740, 2019.

\bibitem[\protect\citeauthoryear{Shapley}{1953}]{Shapley1953}
L.~S. Shapley.
\newblock A value for {n}-person games.
\newblock In {\em In H.W. Kuhn and A.W. Tucker (eds.): Contributions to the
  Theory of Games II}, pages 307--317. Princeton: Princeton University Press,
  1953.

\bibitem[\protect\citeauthoryear{Sun and Yang}{2003}]{SunYang2003}
Ning Sun and Zaifu Yang.
\newblock A general strategy proof fair allocation mechanism.
\newblock {\em Economics Letters}, 81(1):73--79, 2003.

\bibitem[\protect\citeauthoryear{Thomson}{2007}]{Thomson}
William Thomson.
\newblock Cost allocation and airport problems.
\newblock RCER Working Papers 537, University of Rochester - Center for
  Economic Research (RCER), 2007.

\bibitem[\protect\citeauthoryear{Zhang and Pavone}{2016}]{Zhang2016}
Rick Zhang and Marco Pavone.
\newblock Control of robotic mobility-on-demand systems: A queueing-theoretical
  perspective.
\newblock {\em The International Journal of Robotics Research},
  35(1--3):186--203, 2016.

\end{thebibliography}

\end{document}